\DeclareMathAlphabet{\mathpzc}{OT1}{pzc}{m}{it}
\newtheorem{theorem}{Theorem}
\newtheorem{lemma}{Lemma}
\newtheorem{proposition}{Proposition}
\newtheorem{corollary}{Corollary}
\newenvironment{proof}{{\bf Proof:}}{\hfill $\bot$\medskip }
\newtheorem{example}{Example}
\newtheorem{definition}{Definition}
\def\vars{\mathop{\mathit vars}}
\def\free{\mathop{\mathit free}}
\def\univ{\mathop{\mathit univ}}
\def\sub{\mathop{\mathit sub}}
\def\rep{\mathop{\mathit Rep}}
\def\cert{\mathop{\mathit Cert}}
\def\ar{\mathop{\sl ar}}
\def\lsem{\lbrack\!\lbrack}
\def\rsem{\rbrack\!\rbrack}
\def\false{\mathsf{false}}
\def\true{\mathsf{true}}
\def\I{\mathcal{I}}
\def\J{\mathcal{J}}
\def\rotatecharone#1{\rotatebox[origin=c]{180}{#1}}
\def\creversed{\rotatecharone{{\sf{c}}}}
\def\rotatechartwo#1{\rotatebox[origin=c]{270}{#1}}
\def\cdown{\rotatechartwo{{\sf{c}}}}
\title{Universal (and Existential) Nulls}
\begin{document}
	

\maketitle
\begin{flushleft}
	G\"{o}sta Grahne\textsuperscript{1},
	Ali Moallemi\textsuperscript{2,*}
	\\
	\bigskip
	\bf{1} Concordia University\\
	Montreal, Canada\\
	\texttt{grahne@cs.concordia.ca}
	\\
	\bf{2} Concordia University\\
	Montreal, Canada\\
	* \texttt{moa\_ali@encs.concordia.ca}	
\end{flushleft}
Incomplete Information research is quite mature when
it comes to so called 
{\em existential nulls},
where an existential null is a value stored in the database,
representing an unknown object.
For some reason
{\em universal nulls},
that is, values representing 
{\em all} 
possible objects,
have received almost no attention. We remedy the situation
in this paper,
by showing that a suitable finite representation mechanism,
called 
{\em Star Cylinders},
handling universal nulls can be developed
based on the 
{\em Cylindric Set Algebra} of
Henkin, Monk and Tarski.
We provide a finitary version of the cylindric set algebra,
called 
{\em Cylindric Star Algebra},
and show that our star-cylinders are closed 
under this algebra.
Moreover, we show that any 
{\em First Order Relational Calculus} 
query over databases containing universal nulls can
be translated into an equivalent expression in
our cylindric star-algebra,
and vice versa.
All cylindric star-algebra expressions
can be evaluated 
in time polynomial
in the size of the
database.

The representation mechanism
is then extended to 
{\em Naive Star Cylinders},
which are star-cylinders allowing existential nulls
in addition to universal nulls.
For positive queries (with universal quantification),
the well known naive evaluation technique can still be 
applied on the existential nulls,
thereby allowing polynomial time evaluation
of certain answers on databases
containing both universal and existential nulls.
If precise answers are required,
certain answer evaluation with universal and existential
nulls remains in coNP. 
Note that the problem is coNP-hard, already for positive
existential queries and databases with only existential nulls.
If inequalities $\neg(x_i\approx x_j)$ are allowed,
reasoning over existential databases
is known to be $\Pi^p_2$-complete,
and it remains in $\Pi^p_2$ when
universal nulls and full first order queries
are allowed.


\section{Introduction}\label{intro}

\noindent
In this paper we revisit the foundations of the 
relational model
and unearth {\em universal nulls}, 
showing that they can be treated on par with
the usual {\em existential nulls}
\cite{DBLP:journals/jacm/ImielinskiL84,
	DBLP:conf/pods/FaginKPT09,
        libkin-naive}.
Recall that an existential null in a tuple
in a relation $R$ represents an existentially
quantified variable in an atomic
sentence $R(..)$. This corresponds to the intuition
"value exists, but is unknown."
A universal null,
on the other hand, 
does not represent anything unknown,
but stands for {\em all} values of the domain.
In other words, a universal null represents
a universally quantified variable.
Universal nulls have an obvious application
in databases, as the following example shows.
The symbol "$\ast$" denotes a universal null.

\begin{example}\label{ex1}
Consider binary relations $F${\sl(ollows)}
and $H${\sl (obbies)}, where
$F(x,y)$ means that user $x$ follows
user $y$ on a social media site,
and $H(x,z)$ means that $z$ is a hobby
of user $x$. Let the database be the following.
\begin{center}
\raisebox{-2mm}{
\begin{tabular}{ll}
$F$     &         \\ \hline
Alice   & Chris   \\
$\ast$  & Alice   \\
Bob     & $\ast$  \\
Chris   & Bob     \\
David   & Bob
\end{tabular}
}
$\;\;\;$
\begin{tabular}{ll}
$H$   &            \\ \hline
Alice & Movies     \\
Alice & Music      \\
Bob   & Basketball \\
      &           
\end{tabular}
\end{center}
This is to be interpreted as
expressing the facts that
Alice follows Chris and
Chris and David follow Bob.
Alice is a journalist
who would like to give access to
everyone to articles she shares on
the social media site.
Therefore,
everyone can follow Alice.
Bob is the site 
administrator, 
and is granted the access
to all files anyone shares on the site.
Consequently, Bob follows everyone.
"Everyone" in this context means
all current and possible future   
users.
The query below, 
in domain relational calculus,
asks for the interests of 	
people who are followed by everyone:
\begin{equation}\label{query1}
x_4 \;.\; 
\exists x_2 \exists x_3 \forall x_1
\Big(
F(x_1,x_2)\wedge
H(x_3,x_4)\wedge
(x_2\approx x_3)
\Big).
\end{equation}
The answer to our example query
is 
$\{(\mbox{Movies}),(\mbox{Music})\}$.
Note that star-nulls also can be part
of an answer. For instance,
the query
$x_1,x_2 \;.\; F(x_1,x_2)$
would return all the tuples in $F$.
${\blacktriangleleft}$
\end{example}

Another area of applications of ``*''-nulls
relates to intuitionistic, or constructive
database logic.
In the constructive four-valued approach of 
\cite{amw15} and the
three-valued approach of 
\cite{libkin-naive,libkin-neg} 
the proposition $A \vee \neg A$
is not a tautology.
In order for $A \vee \neg A$ to be true,
we need either a constructive proof of $A$ 
or a constructive proof of $\neg A$. 
Therefore both \cite{amw15} and \cite{libkin-neg}
assume that the database $I$ has a theory of 
the negative information,
i.e.\ that $I=(I^+,I^-)$, where $I^+$ contains the
positive information and $I^-$ the negative information.
The papers \cite{amw15} and \cite{libkin-neg} 
then show how to transform an FO-query $\varphi(\bar{x})$
to a pair of queries $(\varphi^+(\bar{x}),\varphi^-(\bar{x}))$
such that $\varphi^+(\bar{x})$ returns the tuples $\bar{a}$
for which $\varphi(\bar{a})$ is true in $I^+$, and
$\varphi^-(\bar{x})$ returns the tuples $\bar{a}$
for which $\varphi(\bar{a})$ is true in~$I^-$
(i.e.\ $\varphi(\bar{a})$ is false in~$I$).
It turns out that databases containing ``*''-nulls
are suitable for storing~$I^-$.

\begin{example}\label{ex-neg-inf}
Suppose that the instance in Example~\ref{ex1}
represents $I^+$,
and that all negative information we have
deduced	about the $H(obbies)$ relation, 
is that we know Alice doesn't play Volleyball, 
that Bob only has Basketball as hobby, 
and that Chris has no hobby at all.
This negative information about
the relation $H$ is represented by
the table $H^-$ below.
Note that $H^-$ is part of $I^-$.  
\begin{center}
\begin{tabular}{ll}
$H^-$   &            \\ \hline
Alice & Volleyball     \\
Bob & $\ast$ (except Basketball)      \\
Chris   & $\ast$
\end{tabular}
	\end{center}
Suppose the query $\varphi$ asks 
for people who have a hobby, that is
$\varphi = x_1 \,.\, \exists x_2\, H(x_1,x_2)$.
Then $\varphi^+ = \varphi$, and
$\varphi^- = x_1 \,.\, \forall x_2\, H(x_1,x_2)$.	
Evaluating $\varphi^+$ on $I^+$ returns
$\{(\textrm{Alice}),$ $(\textrm{Bob})\}$, and
evaluating $\varphi^-$ on $I^-$ returns
$\{(\textrm{Chris})\}$.
Note that there is no closed-world 
assumption as the negative facts
are explicit.
Thus it is unknown whether David
has a hobby or not.
\end{example}

Universal nulls were first studied in the early
days of database theory by
Biskup in~\cite{DBLP:journals/fuin/Biskup84}.
This was a follow-up on his earlier paper on
existential nulls \cite{DBLP:journals/tods/Biskup83}.
The problem with Biskup's approach,
as noted by himself, was that the semantics
for his algebra worked only for individual operators,
not for compound expressions (i.e.\ queries).
This was remedied in the foundational
paper \cite{DBLP:journals/jacm/ImielinskiL84} 
by Imielinski and Lipski, 
as far as existential
nulls were concerned. Universal nulls
next came up in \cite{ilcyl},
where Imielinski and Lipski showed
that Codd's Relational Algebra
could be embedded in CA,
the {\em Cylindric Set Algebra} 
of Henkin, Monk, and Tarski
\cite{hmt1,hmt2}. As a side remark,
Imielinski and Lipski suggested that the semantics
of their "$\ast$" symbol could be seen as
modeling the universal
null of Biskup.
In this paper we follow their 
suggestion\footnote{We note that  
Sundarmurthy et.\ al.\
\cite{DBLP:conf/icdt/SundarmurthyKLN17}
very recently
have proposed a construct related
to our universal nulls, and studied
ways on placing constraints on them.},
and fully develop a finitary representation
mechanism for databases 
with universal nulls, as well as
an accompanying finitary algebra.
We show that any FO 
(First Order / Domain Relational Calculus)
query 
can be translated into an equivalent
expression in a finitary version of CA,
and that such algebraic expressions can
be evaluated "naively"
by the rules
``$\ast = \ast$''
 and ``$\ast=a$'' for any
constant ``$a$.'' 
Our finitary version is called 
{\em Cylindric Star Algebra} (SCA) and operates
on finite relations containing constants and
universal nulls ``$\ast$.''
These relations are called
{\em Star Cylinders} and they are
finite representations of
a subclass of the infinite cylinders
of Henkin, Monk, and Tarski.
Interestingly, the class of
star-cylinders is closed
under first order querying,
meaning that the infinite result of an FO query
on an infinite instance represented
by a finite sequence of finite star-cylinders can be represented
by a finite star-cylinder.\footnote{Consequently there is
no need to require calculus queries to be ``domain independent.''}
This is achieved by showing that the class of
star-cylinders are closed under our 
cylindric star-algebra, 
and that 
SCA 
as a query language is equivalent
in expressive power with FO.

The Cylindric Set Algebra \cite{hmt1,hmt2} 
---as an algebraization of first order logic--- 
is an algebra on sets of valuations of variables
in an FO-formula. A {\em valuation} $\nu$ of variables
$\{x_1,x_2,\ldots\}$ can be represented as
a tuple $\nu$,
where $\nu(i) = \nu(x_i)$.
The set of all valuations can then be
represented by a 
relation $C$ of such tuples.
In particular, if the FO-formula only
involves a finite number $n$ of variables,
then the representing relation $C$ has arity $n$.
Note however that $C$ has an infinite number
of tuples, since the domain of the variables
(such as the users of a social media site)
should be assumed unbounded.
One of the basic connections \cite{hmt1,hmt2}
between FO and Cylindric Set Algebra is that, 
given any interpretation
$I$ and FO-formula~$\varphi$, 
the set of
valuations $\nu$ under which $\varphi$
is true in $I$ can be represented as such
a relation~$C$. Moreover, each logical
connective and quantifier
corresponds to an operator in the
Cylindric Set Algebra.
Naturally disjunction corresponds to union,
conjunction to intersection,
and negation to complement.
More interestingly, existential quantification
on variable $x_i$ corresponds to 
{\em cylindrification} $\mathsf{c}_i$ on column $i$,
where
$$
\mathsf{c}_i(C)
\;=\;
\{\nu \,:\, \nu(i/a)\in C, \mbox{ for some } a\in\mathbb{D}\}, 
$$
and $\nu(i/a)$ denotes the valuation (tuple)
$\nu'$, where $\nu'(i)=a$ and $\nu'(j)=\nu(j)$
for $i\neq j$.
The algebraic counterpart of
universal quantification can be derived
from cylindrification and complement,
or be defined directly as 
{\em inner cylindrification}
$$
\creversed_i(C)
\;=\;
\{\nu \,:\, \nu(i/a)\in C, \mbox{ for all } a\in\mathbb{D}\}. 
$$

In addition, in order to represent equality,
the Cylindric Set Algebra
also contains constant relations $d_{ij}$
representing the equality $x_i\approx x_j$.
That is, $d_{ij}$ is the set of all valuations
$\nu$, such that $\nu(i)=\nu(j)$.

The objects $C$ and $d_{ij}$ of \cite{hmt1,hmt2}
are of course infinitary.
In this paper we therefore develop a
finitary representation mechanism,
namely relations containing
universal nulls ``$\ast$'' and
certain equality literals.
These objects are called
{\em Star Tables} when they 
represent the records stored in
the database.
When used as run-time constructs
in algebraic query evaluation,
they will be called {\em Star Cylinders}.
Example \ref{ex1} showed star-tables in
a database. The run-time variable binding
pattern of the query \eqref{query1},
as well as its algebraic evaluation
is shown in the star-cylinders
in Example \ref{ex2} below.    

\begin{example}\label{ex2}
Continuing Example \ref{ex1}, 
in that database
the atoms $F(x_1,x_2)$ 
and $H(x_3,x_4)$ of query \eqref{query1}
are represented by
star-tables $C_F$ and $C_H$,
and the equality atom is 
represented by the star-cylinder $C_{23}$.
Note that these are positional relations,
the "attributes" $x_1,x_2,x_3,x_4$
are added for illustrative purposes only. 

\begin{center}
\begin{tabular}{llll}
$C_F$     &         \\ \hline
$x_1$   & $x_2$  & $x_3$ & $x_4$     \\ \hline
Alice   & Chris   & $\ast$ & $\ast$  \\
$\ast$  & Alice   & $\ast$ & $\ast$  \\
Bob     & $\ast$  & $\ast$ & $\ast$  \\
Chris   & Bob     & $\ast$ & $\ast$
\end{tabular}

\medskip

\hspace*{3ex}
\begin{tabular}{llll}
$C_H$   &        & & \\ \hline
$x_1$   & $x_2$  & $x_3$ & $x_4$     \\ \hline
$\ast$ & $\ast$ & Alice & Movies     \\
$\ast$ & $\ast$ & Alice & Music      \\
$\ast$ & $\ast$ & Bob   & Basketball \\
\end{tabular}

\medskip

\begin{tabular}{lllll}
$C_{23}$     &   &      \\ \hline
$x_1$   & $x_2$  & $x_3$ & $x_4$ & \\ \hline
$\ast$  & $\ast$ & $\ast$ & $\ast$ & 2=3
\end{tabular}
\end{center}
The algebraic translation of query \eqref{query1}
is the SCA-expression
\begin{eqnarray}\label{q1alg}
\dot{\mathsf{c}}_2(
\dot{\mathsf{c}}_3(\dot{\creversed}_{1}((C_F \capdot C_H) \capdot C_{23})))
\end{eqnarray}
The intersection of $C_F$ and $C_H$
is carried out as star-intersection $\capdot$,
where for instance 
$\{(a,*,*)\} \capdot \{(*,b,*)\}
=
\{(a,b,*)\}$.
The result will contain 12 tuples,
and when these are star-intersected with $C_{23}$,
the star-cylinder $C_{23}$ will act as a selection
by columns 2 and 3 being equal. The result
is the star-cylinder $C' = (C_F \capdot C_H) \capdot C_{23}$
below. 
\begin{center}
\begin{tabular}{llll}
$C'$ &&&\\ \hline
$x_1$   & $x_2$   & $x_3$  &  $x_4$     \\ \hline
$\ast$  & Alice   & Alice  & Movies     \\
$\ast$  & Alice   & Alice  & Music      \\
Bob     & Alice   & Alice  & Movies     \\
Bob     & Alice   & Alice  & Music      \\
Bob     & Bob     & Bob    & Basketball \\
Chris   & Bob     & Bob    & Basketball
\end{tabular}
\end{center}
The inner star-cylindrification
on column 1 then yields
$C'' = \dot{\creversed}_{1}((C_F \capdot C_H) \capdot C_{23}).$
\begin{center}
\begin{tabular}{llll}
$C''$ &&& \\ \hline
$x_1$   & $x_2$   & $x_3$  &  $x_4$     \\ \hline
$\ast$  & Alice   & Alice  & Movies     \\
$\ast$  & Alice   & Alice  & Music
\end{tabular}
\end{center}
Finally,
applying outer star-cylindrifications on columns 2 and 3
of star-cylinder $C''$
yields the final result
$C''' = \dot{\mathsf{c}}_2(
\dot{\mathsf{c}}_3(\dot{\creversed}_{1}((C_F \capdot C_H) \capdot C_{23}))).$
\begin{center}
\begin{tabular}{llll}
$C'''$ &&& \\ \hline
$x_1$   & $x_2$   & $x_3$  &  $x_4$     \\ \hline
$\ast$  & $\ast$  & $\ast$ & Movies     \\
$\ast$  & $\ast$  & $\ast$ & Music
\end{tabular}
\end{center}
The system can now return the answer,
i.e.\ the values of column 4 in
cylinder $C'''$.  
Note that columns where all rows are ``$\ast$''
do not actually have to be materialized
at any stage. Negation requires some
additional details that will be introduced
in Section \ref{adding-neg}.
$\blacktriangleleft$
\end{example}

The aim of this paper is to 
develop a clean and sound modelling of
universal nulls, and furthermore show that
the model can be seamlessly extended 
to incorporate the existential nulls
of Imielinski and Lipski
\cite{DBLP:journals/jacm/ImielinskiL84}.
We show that 
FO and our SCA are equivalent in expressive
power when it comes to querying databases
containing universal nulls,
and that SCA queries can be evaluated
(semi) naively.
This will be done in three steps:
In Section \ref{foandca}
we show the equivalence
between FO and Cylindric Set Algebra over
infinitary databases. This was of course 
only the starting point of \cite{hmt1,hmt2},
and we recast the result here in terms
of database theory.\footnote{Van Den Bussche
\cite{jvdb} has recently referred to
\cite{hmt1,hmt2} in similar terms.}
In Section \ref{caandcastar}
we introduce our finitary Cylindric Star Algebra.
Section \ref{only-pos} develops
the machinery for the positive case,
where there is no negation in the query or database.
This is then extended to include negation 
in Section \ref{adding-neg}.
By these two sections we 
show that certain
infinitary cylinders can be finitely represented
as star-cylinders, and that our finitary 
Cylindric Star Algebra 
on finite star-cylinders
mirrors the Cylindric Set Algebra on the
infinite cylinders they represent.
In Section \ref{stored} 
we tie these two results
together, delivering the promised
SCA evaluation of FO queries on databases
containing universal nulls.
In Section \ref{naive}
we seamlessly extend our framework
to also handle existential nulls,
and show that naive evaluation
can still be used for positive queries
(allowing universal quantification,
but not negation) 
on databases containing both universal
and existential nulls.
Section \ref{complexity} 
then shows that all 
SCA expressions
can be evaluated in time polynomial
in the size of the database
when only universal nulls are present.
We also show that when both universal and existential
nulls are present, the certain answer to any
negation-free (allowing inner cylindrification,
i.e.\ universal quantification) SCA-query 
can be evaluated naively in polynomial time.
When negation is present 
it has long been known that
the problem is coNP-complete
for databases containing existential nulls.
We show that the problem remains coNP-complete
when universal nulls are allowed in addition
to the existential ones.
For databases containing
existential nulls it has been known that 
database containment and view containment
are
coNP-complete and $\Pi^p_2$-complete,
respectively.
We also show that the addition of universal nulls
does not increase these complexities.

%
%
\section{Relational calculus and\\ 
cylindric set algebra}\label{foandca}
Throughout this paper we assume a
fixed schema $\mathscr{R} = \{R_1,\ldots,R_m,\approx\}$,
where each $R_p$,
$p\in\{1,\ldots,m\}$,
is a relational symbol
with an associated
positive integer $\ar(R_p)$,
called the {\em arity} of $R_p$.
The symbol $\approx$ represents equality.

\bigskip
\noindent
{\bf Logic.}
Our calculus is the  standard domain relational calculus.
Let $\{x_1,x_2,\ldots\}$ be a countably infinite set
of {\em variables}.
We define the set of {\em FO-formulas} 
$\varphi$ (over $\mathscr{R}$)
in the usual way:
$R_p(x_{i_{1}},\ldots,x_{i_{\ar(R_{p})}})$
and
$x_i\approx x_j$
are atomic formulas, and these are closed under
$\wedge,\vee,\neg,\exists x_i,$ and
$\forall x_i,$ 
in a well-formed manner
possibly using parenthesis's for disambiguation.

Let $\varphi$ be an FO-formula.
We denote by $\vars(\varphi)$
the set of variables in $\varphi$,
by $\free(\varphi)$
the set of free variables in $\varphi$,
and by $\sub(\varphi)$ the set of 
subformulas of $\varphi$
(for formal definitions, see \cite{DBLP:books/aw/AbiteboulHV95}).
If $\varphi$ has $n$ variables
we say that $\varphi$ is an {\em FO$_n$-formula}. 
We assume without loss of generality
that each variable occurs only once in the formula,
except in equality literals, and that a formula
with $n$ variables uses variables
$x_1,\ldots,x_n$. 

\bigskip
\noindent
{\bf Instances.}
Let $\mathbb{D} = \{a_1,a_2,\ldots\}$ 
be a countably infinite {\em domain}.
An {\em instance} $I$ (over $\mathscr{R}$)
is a mapping that assigns a possibly infinite subset
$R_p^I$ of $\mathbb{D}^{\ar(R_{p})}$ 
to each relation symbol $R_p$,
and $\approx^I \;=\; \{(a,a) \,:\, a\in \mathbb{D}\}$.
Note that our instances are infinite model-theoretic ones.
The set of tuples actually recorded in the database will be called
the {\em stored database} (to be defined in Section \ref{stored}).

In order to define the (standard) notion
of truth of an FO$_n$-formula $\varphi$ in an instance $I$
we first define a {\em valuation} to be a mapping
$\nu : \{x_1,\ldots,x_n\} \rightarrow \mathbb{D}$.
If $\nu$ is a valuation, $x_i$ a variable and $a\in\mathbb{D}$,
then $\nu_{(i/a)}$ denotes the valuation
which is the same as $\nu$, 
except $\nu_{(i/a)}(x_i)=a$.
Then we use the usual recursive definition
of $I\models_{\nu}\varphi$,
meaning 
{\em instance $I$ satisfying $\varphi$
under valuation $\nu$},
i.e.\
$I\models_{\nu}(x_i\approx~x_j)$ if 
$(\nu(x_i),\nu(x_j))\in\; \approx^I$, 
$I\models_{\nu}R_p(x_{i_{1}},\ldots,x_{i_{\ar(R_p)}})$ if
$(\nu(x_{i_{1}}),\ldots,\nu(x_{i_{\ar(R_{p})}}))\in~R_{p}^{I}$, and    
$I\models_{\nu}\exists x_i\, \varphi$ if
$I\models_{\nu_{(i/a)}}\varphi$ for some $a\in\mathbb{D}$,
and so on.
Our stored databases will be finite representations
of infinite instances, so the semantics of answers
to FO-queries 
will be defined in terms of the infinite instances:
\begin{definition}\label{fo1}
Let $I$ be an instance,
and $\varphi$ an FO$_n$-formula with 
$\free(\varphi) = \{x_{i_{1}},\ldots,x_{i_{k}}\}$, $k\leq n$.
Then the \textit{answer to $\varphi$ on $I$} is defined as
$$
\varphi^I 
=
\{
(\nu(x_{i_{1}}),\ldots,\nu(x_{i_{k}}))
\;:\;
I\models_{\nu}\varphi
\}.
$$
\end{definition}

\bigskip
\noindent
{\bf Algebra.}
As noted in \cite{ilcyl} the relational algebra
is really a disguised version of the 
Cylindric Set Algebra of Henkin,
Monk, and Tarski \cite{hmt1,hmt2}.
We shall therefore work directly with
the Cylindric Set Algebra instead of
Codd's Relational Algebra.
Apart from the conceptual clarity,
the Cylindric Set Algebra will also allow
us to smoothly introduce the promised universal nulls.

Let $n$ be a fixed positive integer.
The basic building block of the Cylindric Set Algebra
is an {\em $n$-dimensional cylinder}
$C\subseteq\mathbb{D}^n$.
Note that a cylinder is essentially an infinite
$n$-ary relation. They will however be called
cylinders,
in order to distinguish them from
instances. The rows in a cylinder will represent
run-time variable valuations, 
whereas tuples in instances represent
facts about the real world.
We also have special cylinders 
called {\em diagonals},
of the form 
$d_{ij} = \{t\in\mathbb{D}^n \;:\; t(i)=t(j)\}$
representing  the equality $x_i \approx x_j$.
We can now define the Cylindric Set Algebra.

\begin{definition}\label{ca1}
Let $C$ and $C'$ be infinite $n$-dimensional cylinders.
The {\em Cylindric Set Algebra} consists of the
following operators.

\begin{enumerate}
\item
{\em Union:}
$C \,\bigcup\, C'$.
Set theoretic union.

\item
{\em Complement:}
$\overline{C} = \mathbb{D}^n\setminus C$.
			
\item
{\em Outer cylindrification:}
$$
\mathsf{c}_{i}(C) = \{t\in\mathbb{D}^n \;:\; t(i/a)\in C, 
\mbox{ for some } a\in\mathbb{D}\}.
$$
\end{enumerate}
\end{definition}

\noindent
The operation
$\mathsf{c}_i$ is 
called outer cylindrification on the $i$:th dimension,
and will correspond to existential quantification of
variable $x_i$.
For the geometric intuition behind the name cylindrification,
see \cite{hmt1,ilcyl}.
Intersection is considered a derived operator,
and we also introduce
the following derived operator:

\begin{enumerate}\setcounter{enumi}{3}
\item
{\em Inner cylindrification:}
$\creversed_{i}(C) = \overline{\mathsf{c}_i(\overline{C})}$,
corresponding to universal quantification.
Note that 
$$
\creversed_{i}(C) 
= \{t\in \mathbb{D}^n \;:\; t(i/a)\in C, \mbox{ for all } a\in\mathbb{D}\}.
$$
\end{enumerate}

\medskip
We also need the notion of
cylindric set algebra expressions.

\begin{definition}\label{caexpr}
Let $\mathbf{C} = (C_1,\ldots,C_m, d_{ij})_{i,j\,\in\,\{1,\ldots,n\}}$
be a sequence of infinite $n$-dimensional cylinders and diagonals.
The set of \textit{CA$_n$-expressions} 
(over $\mathbf{C}$)
is obtained
by closing the atomic expressions 
$\mathsf{C}_p$ and $\mathsf{d}_{ij}$
under union, intersection, complement, and 
inner and outer cylindrifications.
Then $E(\mathbf{C})$,
the \textit{value of expression $E$ 
on sequence $\mathbf{C}$}
is defined in the usual way,
e.g.\ $\mathsf{C}_p(\mathbf{C}) = C_p$,
$\mathsf{d}_{ij}(\mathbf{C}) = d_{ij}$,
$\mathsf{c}_{i}(E)(\mathbf{C}) =
\mathsf{c}_{i}(E(\mathbf{C}))$ etc.
\end{definition}

\bigskip

\noindent
{\bf Equivalence of FO and CA.}
In the next two theorems
we will restate, 
in the context of the relational model,
the correspondence between domain relational calculus and
cylindric set algebra as query languages on instances
\cite{hmt1,hmt2}.
An expression $E$ in cylindric set algebra
of dimension $n$ will be called a CA$_n$-expression.
When translating
an FO$_n$-formula to a CA$_n$-expression
we first need to extend all $k$-ary relations in $I$
to $n$-ary
by filling the $n-k$ last 
columns in all possible ways.
Formally, this is expressed as follows:
\begin{definition}\label{hor1}
The \textit{horizontal $n$-expansion} 
of an infinite $k$-ary relation $R$
is 
$$
\mathsf{h}^n(R)
=
\bigcup_{t\in R}\;\{t\}\times\mathbb{D}^{n-k}.
$$
The equality relation 
$\approx^I = \{(a,a) : a\in \mathbb{D}\}$
is expanded into diagonals
$d_{ij}$ for $i,j\in\{1\ldots,n\}$,
where 
$$
d_{ij} = \bigcup_{(a,a)\in\approx^{I}}
\mathbb{D}^{i-1}\times\{a\}\times\mathbb{D}^{j-i+1}\times\{a\}\times\mathbb{D}^{n-j},
$$ 
and for an instance
$I = (R_1^I,\ldots,R_m^I,\approx^I)$, we have
$$
\mathsf{h}^n(I) = 
(\mathsf{h}^n(R_1^I),\ldots,\mathsf{h}^n(R_m^I),d_{ij})_{i,j}.
$$
\end{definition}
Once an instance is expanded it becomes
a sequence 
$\mathbf{C} = (C_1,\ldots,C_m, d_{ij})_{i,j}$
of $n$-dimensional cylinders and
diagonals, on which Cylindric Set Algebra 
Expressions can be applied.

The main technical difficulty in
the translation from FO$_n$ to CA$_n$ 
is the correlation of the variables in the
FO$_n$-sentence $\varphi$
with the columns in the expanded
relations in the instance.
This can be achieved using a derived 
``swapping'' operator
$\mathsf{z}^{i_1,\ldots,i_k}_{j_1,\ldots,j_k}$
that interchanges the columns
$i_l$ and $j_l$,
where
$l\in\{1,\ldots,k\}$.\footnote{This 
was already implicitly done in the expansion
of $\approx^I$ in Definition \ref{hor1}.
For a definition of swapping using
the primitive operators, see 
Definition 1.5.12 in \cite{hmt1}.}
Every atom $R_p$ in $\varphi$
will correspond to a CA$_n$-expression 
$C_p = \mathsf{h}^n(R_p^I)$.
However,
for every occurrence of an atom
$R_p(x_{i_{1}},\ldots,x_{i_{k}})$ in $\varphi$
we need to interchange the columns 
$1,\ldots,k$ 
with columns $i_1,\ldots,i_k$. 
This is achieved by the expression
$\mathsf{z}^{1,\ldots,k}_{i_{1},\ldots,i_{k}}(\mathsf{C}_p)$.

\medskip
Among the many identities holding in Cylindric Set Algebra
we will in the sequel need the following ones
\begin{proposition}\label{propz}
	\cite{hmt1}.
	Let $C$ be an $n$-dimensional cylinder,
	and $i,j\in\{1,\ldots,n\}$.
	Then 
\end{proposition}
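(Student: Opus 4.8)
The plan is to prove the stated identities uniformly by exploiting the fact that, on a \emph{concrete} cylindric set algebra, the swapping operator has a transparent pointwise semantics, so that each identity reduces to an equality between two explicitly described sets of tuples. First I would record this semantics: writing $\pi_{ij}$ for the coordinate transposition on $\mathbb{D}^n$ that interchanges positions $i$ and $j$ and fixes all others, one has
$$
\mathsf{z}^{i}_{j}(C) \;=\; \pi_{ij}[C] \;=\; \{\,\pi_{ij}(t) \;:\; t\in C\,\},
$$
i.e.\ $\mathsf{z}^i_j(C)$ is the image of $C$ under the bijection $\pi_{ij}$. Since $\pi_{ij}$ is an involution ($\pi_{ij}\circ\pi_{ij} = \mathrm{id}$) and a bijection of $\mathbb{D}^n$ onto itself, the algebraic identities then fall out from a small number of general facts about images under a bijection, and the same pointwise-membership technique dispatches any purely cylindric identities (idempotence $\mathsf{c}_i\mathsf{c}_i=\mathsf{c}_i$, commutativity $\mathsf{c}_i\mathsf{c}_j=\mathsf{c}_j\mathsf{c}_i$) that may appear alongside.

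The key steps are then as follows. First, the image map $\pi_{ij}[\cdot]$ commutes with every Boolean operation: $\pi_{ij}[C\cup C'] = \pi_{ij}[C]\cup\pi_{ij}[C']$, $\pi_{ij}[C\cap C'] = \pi_{ij}[C]\cap\pi_{ij}[C']$, and $\pi_{ij}[\overline{C}] = \overline{\pi_{ij}[C]}$, the last because a bijection of $\mathbb{D}^n$ respects complements relative to the whole space. Together with involutivity this yields both the distributivity identities and the inverse law $\mathsf{z}^i_j(\mathsf{z}^i_j(C)) = C$. Second, cylindrification is transported by the transposition: since $\mathsf{c}_k$ existentially forgets coordinate $k$, conjugating by $\pi_{ij}$ merely relabels the forgotten coordinate, giving $\mathsf{z}^i_j(\mathsf{c}_k(C)) = \mathsf{c}_{\pi_{ij}(k)}(\mathsf{z}^i_j(C))$, and likewise $\mathsf{z}^i_j(d_{kl}) = d_{\pi_{ij}(k)\,\pi_{ij}(l)}$ for diagonals. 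Each is a one-line membership chase: using that $\pi_{ij}$ is an involution, $t$ lies in the left-hand side iff some coordinate-$k$ variant of $\pi_{ij}(t)$ lies in $C$, and applying $\pi_{ij}$ coordinatewise (which carries the $k$-variant to the $\pi_{ij}(k)$-variant) shows this is exactly the condition defining the right-hand side.

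The only genuine obstacle is bookkeeping rather than mathematics, and it has two facets. The first is to reconcile the \emph{derived} definition of $\mathsf{z}^i_j$ (built from the primitive cylindrifications and diagonals as in Definition~1.5.12 of \cite{hmt1}) with the pointwise image description above; I would isolate this as a preliminary lemma, proving that the derived expression computes $\pi_{ij}[C]$ by a direct membership argument, after which the pointwise form may be used freely. The second is the general block-swapping operator $\mathsf{z}^{i_1,\ldots,i_k}_{j_1,\ldots,j_k}$, which I would handle by observing it is a composition of pairwise transpositions, so its image map is the composite permutation; all the identities above then lift from single transpositions to the composite by induction on $k$. With the pointwise semantics established, no individual identity requires more than unfolding definitions, so essentially all the work lies in setting up that semantics cleanly.
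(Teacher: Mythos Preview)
The paper does not prove Proposition~\ref{propz} at all: it is stated with a citation to \cite{hmt1} and used as a black box. So there is no ``paper's proof'' to compare against.

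Your proposal is correct. Identifying $\mathsf{z}^{i}_{j}$ with the image map under the coordinate transposition $\pi_{ij}$ and then reading off each identity from basic properties of bijections is exactly the standard way to verify these facts in a \emph{concrete} cylindric set algebra (as opposed to an abstract one, where one must work purely from the axioms). Items (1)--(4) fall out immediately as you describe; for item (5), note that $\mathsf{c}_i(C)=C$ means membership in $C$ is independent of coordinate $i$, so if $C$ is both $i$-full and $j$-full then $t\in C$ iff $\pi_{ij}(t)\in C$, giving $\pi_{ij}[C]=C$. Your two bookkeeping caveats are well placed: the first (reconciling the derived definition from \cite{hmt1}, Definition~1.5.12, with the pointwise image semantics) is the only place with any content, and it is indeed a short membership chase; the second (extending to block swaps $\mathsf{z}^{i_1,\ldots,i_k}_{j_1,\ldots,j_k}$) is not needed for this proposition itself, though it is used downstream in the paper.
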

\vspace*{-3mm}
\begin{enumerate}
	\item
	$
	\mathsf{z}^{i}_{j}(C)=
	\mathsf{z}_{i}^{j}(C).
	$
	
	\item
	$
	\mathsf{z}^{i}_{j}(
	\mathsf{z}^{j}_{i}(C))=C.
	$

	\item
	$
	\mathsf{c}_i(\mathsf{z}^i_j(C))
	\;=\;
	\mathsf{z}^i_j(\mathsf{c}_j(C)).
	$

	\item
	If $i\neq j$ then
	$
	\mathsf{z}^{i}_{j}
	(C \setminus C')
	\;=\;
	\mathsf{z}^{i}_{j}(C) 
	\setminus\,
	\mathsf{z}^{i}_{j}(C').
	$
	\item
	If $\mathsf{c}_i(C)=C$
	and $\mathsf{c}_j(C)=C$
	then
	$
	\mathsf{z}^i_j(C)
	\;=\;
	C.
	$
\end{enumerate}

\begin{proposition}\label{propzz}
	Let $i,j,k$ be pairwise
	distinct natural numbers,
	such that
	$\{i,j,k\} \cap \{1,2,3\} = \emptyset$,
	and let $C$ be an $n$-dimensional cylinder
	that is 2-full\footnote{Cylinder $C$ is $i$-full if 
	$\mathsf{c}_i(C)=C$.} and $k$-full. Then
	$$
	\mathsf{z}^{i,k}_{1,2}
	(
	\mathsf{z}^{3,2,1}_{k,j,i}(C)
	)
	\;=\;
	\mathsf{z}^{1,2,3}_{1,j,2}(C).
	$$
\end{proposition}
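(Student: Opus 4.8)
The plan is to reduce both sides to compositions of the elementary (single–column) swaps $\mathsf{z}^a_b$ and then simplify using the identities collected in Proposition~\ref{propz}, invoking the two fullness hypotheses at the decisive step. First I would unfold each generalized swap according to its primitive definition (Definition~1.5.12 of \cite{hmt1}). Because $\{i,j,k\}\cap\{1,2,3\}=\emptyset$ and $i,j,k$ are pairwise distinct, the six indices $1,2,3,i,j,k$ are all different, so the pairs occurring in $\mathsf{z}^{3,2,1}_{k,j,i}$ are pairwise disjoint, and likewise those in $\mathsf{z}^{i,k}_{1,2}$. Hence these two generalized swaps factor into elementary swaps that commute among themselves, and the left–hand side becomes the composition $\mathsf{z}^i_1\,\mathsf{z}^k_2\,\mathsf{z}^3_k\,\mathsf{z}^2_j\,\mathsf{z}^1_i(C)$, acting (read inside–out) first the three swaps $\mathsf{z}^3_k,\mathsf{z}^2_j,\mathsf{z}^1_i$ and then $\mathsf{z}^i_1,\mathsf{z}^k_2$.

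The next step is to eliminate coordinate $i$ entirely. By the symmetry identity (the first item of Proposition~\ref{propz}) the outer $\mathsf{z}^i_1$ is the very same operator as the innermost $\mathsf{z}^1_i$. Every swap occurring between these two acts only on coordinates in $\{2,3,j,k\}$, which is disjoint from $\{1,i\}$, so the disjoint–support commutations let me slide one occurrence next to the other; the involution identity (the second item of Proposition~\ref{propz}) then annihilates the pair. What survives is the shorter composition $\mathsf{z}^k_2\,\mathsf{z}^3_k\,\mathsf{z}^2_j(C)$, now acting only on coordinates $2,3,j,k$.

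The crux is to discharge coordinate $k$ using the hypotheses that $C$ is $2$-full and $k$-full. Here I would appeal to the last item of Proposition~\ref{propz}: since $\mathsf{c}_2(C)=C$ and $\mathsf{c}_k(C)=C$, the swap between the two free directions is trivial, $\mathsf{z}^2_k(C)=C$, so such a swap may be inserted or deleted at will. The idea is to use the commutation rule relating swaps and cylindrifications (the third item of Proposition~\ref{propz}) to move the free directions and rewrite the surviving coordinate-$k$ swap in $\mathsf{z}^k_2\,\mathsf{z}^3_k\,\mathsf{z}^2_j(C)$ as a coordinate-$2$ swap, the discrepancy being absorbed into an application of $\mathsf{z}^2_k(C)=C$. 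After this replacement the composition is intended to collapse to precisely $\mathsf{z}^{1,2,3}_{1,j,2}(C)$, which is the right–hand side.

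The step I expect to be the main obstacle is this last one. Unlike the two left–hand swaps, the target swap $\mathsf{z}^{1,2,3}_{1,j,2}$ has the index $2$ occurring in two of its defining pairs, so it is a genuinely non-commuting composition rather than a product of disjoint transpositions; matching the reduced expression against it therefore requires care. In particular the fullness identity $\mathsf{z}^2_k(C)=C$ must be applied to the subexpression on which $C$ is still simultaneously $2$-full and $k$-full, \emph{before} the other swaps have displaced those free directions, and the swaps must be commuted into the correct order to license this. The identities for moving swaps past cylindrifications and past set differences (the third and fourth items of Proposition~\ref{propz}) are exactly the tools I would use to justify these rearrangements, and verifying that the residual elementary swaps reassemble into the shared-index pattern of $\mathsf{z}^{1,2,3}_{1,j,2}$ is where the bookkeeping is most delicate.
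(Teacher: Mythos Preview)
Your approach is essentially the paper's: both arguments (i) use the disjointness of $\{1,i\}$ from the remaining indices to cancel the $i$-swaps, and (ii) invoke the $2$-fullness and $k$-fullness of $C$ through item~5 of Proposition~\ref{propz} to eliminate the coordinate~$k$. The only difference is the level at which the manipulation is carried out. The paper stays in the multi-index notation $\mathsf{z}^{i_1,\ldots,i_r}_{j_1,\ldots,j_r}$ throughout and cites Theorems~1.5.13, 1.5.17 and 1.5.18 of \cite{hmt1} for the composition, reordering, and deletion laws governing such multi-swaps; you instead unfold into elementary swaps and commute them by hand. The step you flag as the main obstacle---reassembling into the shared-index form $\mathsf{z}^{1,2,3}_{1,j,2}$---is exactly what those \cite{hmt1} theorems supply in the paper's version; to finish your version you will need the precise convention (Definition~1.5.12 of \cite{hmt1}) for how a multi-swap with a repeated index is read, since without it you cannot say which composition order of $\mathsf{z}^2_j$ and $\mathsf{z}^3_2$ the right-hand side actually denotes.
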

\begin{proof}
	$$
	\mathsf{z}^{i,k}_{1,2}
	(
	\mathsf{z}^{3,2,1}_{k,j,i}(C)
	)
	= 
	\mathsf{z}^{i,k,3,2,1}_{1,2,k,j,i}(C)
	= 
	\mathsf{z}^{i,3,2,2,1}_{1,2,k,j,i}(C)
	= \mathsf{z}^{i,3,2,1}_{1,2,j,i}(C)
	= 
	$$
	\vspace{-.4cm}
	$$
	\hspace{1.8cm}
	\mathsf{z}^{i,1,3,2}_{1,i,2,j}(C)
	= 
	\mathsf{z}^{i,1,3,2}_{i,1,2,j}(C)
	= 
	\mathsf{z}^{1,3,2}_{1,2,j}(C).
	$$
	
	\noindent
	The second equality follows from
	Theorem 1.5.18 in \cite{hmt1},
	the third equality holds since
	$\mathsf{c}_2(C) = C$ and
	$\mathsf{c}_k(C) = C$,
	the fourth since
	$\{1,i\}\cap \{2,3,j\}= \emptyset$.
	The last two equalities follow from
	Theorem 1.5.17 and
	1.5.13 in \cite{hmt1},
	respectively.   
\end{proof}

The entire FO$_n$-formula $\varphi$ 
with $\free(\varphi) = \{x_{i_{1}},\ldots,x_{i_{k}}\}$
will then
correspond  to the CA$_n$-expression
$E_{\varphi} = \mathsf{z}^{i_{1},\ldots,i_{k}}_{1,\ldots,k}(F_{\varphi})$,
where $F_{\varphi}$ is defined
recursively as follows:
\begin{itemize}
\item
If $\varphi = R_p(x_{i_{1}},\ldots,x_{i_{k}})$
where $k=\ar(R_p)$,
then
$
F_{\varphi}
= 
\mathsf{z}^{1,\ldots,k}_{i_{1},\ldots,i_{k}}(
\mathsf{C}_p).
$

\item
If $\varphi = x_i\approx x_j$,
then
$
F_{\varphi}
= 
\mathsf{d}_{ij}
$.

\item
If $\varphi = \psi\vee\chi$,
then
$F_{\varphi} = F_{\psi} \bigcup F_{\chi}$,
if $\varphi = \psi\wedge\chi$,
then
$F_{\varphi} = F_{\psi} \bigcap F_{\chi}$,
and if $\varphi = \neg\, \psi$, 
then $F_{\varphi} =\overline{ F_{\psi}}$.

\item
If $\varphi = \exists x_{i} \psi$,
then $F_{\varphi} = \mathsf{c}_{i}(F_{\psi})$.

\item
If $\varphi = \forall x_{i} \psi$,
then $F_{\varphi} = \creversed_{i}(F_{\psi})$.
\end{itemize}
For an example,
let us reformulate the $FO_4$-query $\varphi$ from 
\eqref{query1} as 
\begin{equation*}
x_4 \;.\; 
\exists x_2 \exists x_3 \forall x_1\; 
\Big( R_1(x_1,x_2)\wedge R_2(x_3,x_4)\wedge(x_2\approx x_3)\Big)
\end{equation*}
When translating $\varphi$
the relation $R_1^I$ is first expanded to 
$C_1 = R_1^I\times\mathbb{D}\times\mathbb{D}$,
and
$R_2^I$ is expanded to 
$C_2 = R_2^I\times\mathbb{D}\times\mathbb{D}$.
In order to correlate the variables in $\varphi$
with the columns in the expanded databases,
we do the shifts
$\mathsf{z}^{1,2}_{1,2}(C_1)$
and
$\mathsf{z}^{1,2}_{3,4}(C_2)$.
The equality $(x_2\approx x_3)$
was expanded to the diagonal
$d_{23} = \{t\in\mathbb{D}^n \,:\, t(2)=t(3)\}$
so here the variables are already correlated.
After this the conjunctions are replaced
with intersections and the
quantifiers with cylindrifications.
Finally, the column corresponding to the
free variable $x_4$ 
in $\varphi$ (whose bindings
will constitute the answer) is shifted to
column 1.
The final CA$_n$-expression will then
be evaluated against~$I$ as
$E_{\varphi}(\mathsf{h}^4(I)) =$
\begin{equation*}
\mathsf{z}^{4}_{1}
\Big(
\mathsf{c}_{23}(
\creversed_{1}(
\mathsf{z}^{1,2}_{1,2}(
R_1^I\times\mathbb{D}^2)
\,\bigcap\,
\mathsf{z}^{1,2}_{3,4}(
R_2^I\times\mathbb{D}^2)
\,\bigcap\,
d_{23}
))
\Big).
\end{equation*}
We now have 
$E_{\varphi}(\mathsf{h}^4(I)) = \mathsf{h}^4(\varphi^I)$.
The following fundamental
result follows from
\cite{hmt1,hmt2},
but we prove it here
for the benefit of the readers
who don't want to consult \cite{hmt1,hmt2}.
\begin{theorem}\label{thrm1}
For all FO$_n$-formulas $\varphi$,
there is a CA$_n$ expression~$E_{\varphi}$,
such that
$$
E_{\varphi}(\mathsf{h}^n(I))
\;=\;
\mathsf{h}^n(\varphi^{I}),
$$
for all instances $I$.
$\blacktriangleleft$
\end{theorem}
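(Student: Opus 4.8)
The plan is to prove the statement by structural induction on the FO$_n$-formula $\varphi$, following exactly the recursive definition of $F_\varphi$ given above. The natural invariant to maintain is the following strengthened claim: for every subformula $\psi$ of $\varphi$, if we let $F_\psi$ be the expression built by the recursion, then
$$
F_\psi(\mathsf{h}^n(I)) \;=\; \{t\in\mathbb{D}^n \;:\; I\models_{\nu_t}\psi\},
$$
where $\nu_t$ is the valuation with $\nu_t(x_i)=t(i)$. In words, $F_\psi$ computes the full $n$-dimensional set of satisfying valuations of $\psi$, with no column shifting yet. Once this invariant is established for the top-level formula, the outer shift $\mathsf{z}^{i_1,\ldots,i_k}_{1,\ldots,k}$ in $E_\varphi=\mathsf{z}^{i_1,\ldots,i_k}_{1,\ldots,k}(F_\varphi)$ moves the columns indexed by the free variables into positions $1,\ldots,k$, and one checks that this matches the definition of $\varphi^I$ horizontally expanded, yielding $E_\varphi(\mathsf{h}^n(I))=\mathsf{h}^n(\varphi^I)$.

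The base cases and inductive steps would proceed as follows. For an atom $\psi=R_p(x_{i_1},\ldots,x_{i_k})$, the expression $\mathsf{z}^{1,\ldots,k}_{i_1,\ldots,i_k}(\mathsf{C}_p)$ takes $C_p=\mathsf{h}^n(R_p^I)=R_p^I\times\mathbb{D}^{n-k}$ and swaps columns $1,\ldots,k$ into positions $i_1,\ldots,i_k$; I would verify directly from Definition \ref{hor1} that a tuple $t$ lands in this set precisely when $(t(i_1),\ldots,t(i_k))\in R_p^I$, which is the satisfaction condition. For an equality atom $x_i\approx x_j$, the expression is $\mathsf{d}_{ij}$, and the invariant is immediate from the definition of the diagonal $d_{ij}$. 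The Boolean cases are routine: disjunction maps to union, conjunction to intersection, and negation to complement within $\mathbb{D}^n$, each matching the recursive clause of $I\models_\nu\psi$ directly. The quantifier cases are where the cylindrification operators do their work: for $\psi=\exists x_i\,\chi$, the clause $I\models_\nu\exists x_i\,\chi$ iff $I\models_{\nu_{(i/a)}}\chi$ for some $a$ corresponds exactly to the definition of $\mathsf{c}_i$ applied to the set $F_\chi(\mathsf{h}^n(I))$, and dually for $\forall x_i$ via $\creversed_i$.

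The main obstacle, as the surrounding text already signals, is the bookkeeping of the swapping operator $\mathsf{z}$ and the correlation between variable names and column positions. Two subtleties need care. First, the assumption stated earlier that each variable occurs only once (except in equality literals) and that an FO$_n$-formula uses exactly $x_1,\ldots,x_n$ is essential: it guarantees that distinct atoms occupy disjoint variable ranges so that the inner swaps $\mathsf{z}^{1,\ldots,k}_{i_1,\ldots,i_k}$ do not interfere destructively when their results are intersected. Second, one must confirm that the swap commutes appropriately with cylindrification and the Boolean operations; this is exactly the content of Proposition \ref{propz} (items 3 and 4 in particular), and Proposition \ref{propzz} packages a more elaborate instance of such commutation that will likely be invoked when verifying the final rearrangement of the free-variable columns. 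I would therefore lean on these identities rather than recomputing swap behavior from scratch.

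A clean way to organize the argument is to first prove the unshifted invariant for $F_\psi$ by induction without ever touching the outer shift, treating $\mathsf{z}$ only inside the atomic case. This isolates the variable-to-column correlation into the atomic base case and keeps the inductive step for connectives and quantifiers entirely positional and transparent. The outer shift then applies once, at the very end, and its correctness is a single application of the swap semantics. The only genuine verification left is that the composition of the per-atom inner swaps with the global operations yields consistent column assignments across the whole expression; granting the identities of Proposition \ref{propz}, this reduces to tracking indices, which I would present as a short lemma-style check rather than a full computation.
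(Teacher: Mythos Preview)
Your approach is correct and takes a genuinely different route from the paper. The paper's induction maintains the invariant
\[
\mathsf{z}_{1,\ldots,k}^{i_1,\ldots,i_k}\bigl(F_\psi(\mathsf{h}^n(I))\bigr)=\mathsf{h}^n(\psi^I),
\]
i.e.\ it keeps the \emph{shifted} form at every level, so every inductive step must undo and redo swaps; this is why the conjunction and existential cases in the paper's proof are long chains of equalities leaning on Proposition~\ref{propz} (items 2--5) and Proposition~\ref{propzz}. Your invariant
\[
F_\psi(\mathsf{h}^n(I))=\{t\in\mathbb{D}^n : I\models_{\nu_t}\psi\}
\]
stays in variable-indexed form throughout, so the Boolean and quantifier cases become one-line verifications directly from the semantic clauses, with no swapping at all. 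The only places swaps appear in your argument are the atomic base case and the single final application of $\mathsf{z}^{i_1,\ldots,i_k}_{1,\ldots,k}$.

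One comment: you overestimate the residual difficulty of your own plan. With your invariant, the inductive steps do \emph{not} require the commutation identities of Propositions~\ref{propz} and~\ref{propzz}; those are precisely what the paper needs because it swaps at every level. In your version, the ``composition of per-atom inner swaps with the global operations'' is already absorbed into the invariant---once the atomic case establishes that $F_\psi$ lives in variable-indexed coordinates, intersection, union, complement, and $\mathsf{c}_i/\creversed_i$ act purely positionally and there is nothing further to track. The only genuine swap verification left is that the final outer shift sends the $\{i_1,\ldots,i_k\}$-full set $\{t:(t(i_1),\ldots,t(i_k))\in\varphi^I\}$ to $\varphi^I\times\mathbb{D}^{n-k}$, which is immediate from the semantics of~$\mathsf{z}$. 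So your proof, carried out as you describe, is shorter and cleaner than the paper's.
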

\begin{proof}
	We prove the stronger claim:
	For all FO$_n$-formulas $\varphi$,
	for all $\psi\in\sub(\varphi)$,
	with $\free(\psi)=\{x_{i_{1}},\ldots,x_{i_{k}}\}$,
	there is an CA$_n$ expression~$E_{\psi}$,
	such that
	$$
	\mathsf{z}_{1,\ldots,k}^{i_{1},\ldots,i_{k}}
	\Big(
	E_{\psi}(\mathsf{h}^n(I))
	\Big)
	\;=\;
	\mathsf{h}^n(\psi^{I}),
	$$
	for all instances $I$.
	The main claim the follows since
	$\varphi\in\sub(\varphi)$,
	and the outermost sequence of
	swappings can be considered part
	of the final expression $E_{\varphi}$.
	In all cases below we assume wlog\footnote{
		If $k=n$ we can introduce an additional variable
		$x_{n+1}$ and the conjunct
		$\exists x_{n+1}(x_{n+1}\approx x_{n+1})$
		which would assure that the $n+1$:st dimension is full.
		Alternatively, we could introduce swapping as a primitive
		in the algebra. This however would require a corresponding
		renaming operator in the FO-formulas, see~\cite{hmt1}.}
	that $k<n$ so that the $k+1$:st column can be used
	in the necessary swappings.
	
	\begin{itemize}
		\item
		$\psi = R_p(x_{i_{1}},\ldots,x_{i_{k}})$,
		where $k=\ar(R_p)$. 
		We let
		$
		E_{\psi}
		= 
		\mathsf{z}^{k,\ldots,1}_{i_{k},\ldots,i_{1}}(
		\mathsf{C}_p).
		$
		We have
		\begin{center}
			$
			\begin{array}{lrr}
			\mathsf{z}_{1,\ldots,k}^{i_{1},\ldots,i_{k}}
			\Big(
			E_{\psi}(\mathsf{h}^n(I))
			\Big)
			& = & \\ \\
			\mathsf{z}_{1,\ldots,k}^{i_{1},\ldots,i_{k}}
			\Big(
			\mathsf{z}^{k,\ldots,1}_{i_{k},\ldots,i_{1}}
			(
			\mathsf{C}_p
			(
			\mathsf{h}^n(I)
			)
			\Big)
			& = &\\
			& \textit{\small{By Proposition \ref{propz}} (2)}&\\
			\mathsf{C}_p
			(
			\mathsf{h}^n(I)
			)
			& = & \\ \\
			\mathsf{h}^n(R_{p}^{I})
			& = \\ \\
			\mathsf{h}^n(\psi^I).
			&  & \\ \\ 
			\end{array}
			$
		\end{center}

		\item
		$\psi =x_i\approx x_j$.
		We assume wlog that $n>2$ so that swaps
		can be performed.
		We let
		$
		E_{\psi}
		= 
		\mathsf{d}_{ij}
		$.
		We then have
		\begin{center}
			$
			\begin{array}{lrr}
			\mathsf{z}^{i,j}_{1,2}
			\Big(
			E_{\psi}(\mathsf{h}^n(I))
			\Big)
			& = & \\ \\
			\mathsf{z}^{i,j}_{1,2}
			\Big(
			\mathsf{d}_{ij}
			\Big)
			& = & \\ \\
			\mathsf{z}^{i,j}_{1,2}
			\Big(
			\{t\in\mathbb{D}^n \;:\; t(i)=t(j)\}
			\Big)
			& = & \\ \\
			\{t\in\mathbb{D}^n \;:\; t(1)=t(2)\}
			& = & \\ \\
			\{(a,a) \;:\; a\in\mathbb{D}\}\times\mathbb{D}^{n-2}
			& = & \\ \\
			\mathsf{h}^n(
			\{(a,a) \;:\; a\in\mathbb{D}\}
			)
			& = & \\ \\
			\mathsf{h}^n((x_i\approx x_j)^I)
			& =  \\ \\
			\mathsf{h}^n(\psi^I).
			& & \hspace*{3cm}   \\ \\ 
			\end{array}
			$
		\end{center}
		\bigskip
		
		\item
		$\psi = \neg\, \xi$, with
		$\free(\xi) = \{x_{i_{1}},\ldots,x_{i_{k}}\}$.
		We assume wlog that $k<n$.
		Then $E_{\psi} =\overline{ E_{\xi}}$,
		and the inductive hypothesis is
		$$
		\mathsf{z}_{1,\ldots,k}^{i_{1},\ldots,i_{k}}
		\Big(
		E_{\xi}(\mathsf{h}^n(I))
		\Big)
		\;=\;
		\mathsf{h}^n(\xi^{I}) 
		$$
		
		We have
		
		\begin{center}
			$
			\begin{array}{lrr}
			\mathsf{z}_{1,\ldots,k}^{i_{1},\ldots,i_{k}}
			\Big(E_{\psi}(\mathsf{h}^n(I))
			\Big)
			& = &\\ \\
			\mathsf{z}_{1,\ldots,k}^{i_{1},\ldots,i_{k}}
			\Big(\overline{
				E_{\xi}(\mathsf{h}^n(I))
			}\Big)
			& = &\\ \\
			\mathsf{z}_{1,\ldots,k}^{i_{1},\ldots,i_{k}}
			\Big(\mathbb{D}^n\setminus
			E_{\xi}(\mathsf{h}^n(I))
			\Big)
			& =  &\\
			&  \textit{\small{By Proposition \ref{propz}} (2)}& \\ 
			\mathsf{z}_{1,\ldots,k}^{i_{1},\ldots,i_{k}}
			\Big(\mathbb{D}^n\setminus
			(
			\mathsf{z}_{i_{k},\ldots,i_{1}}^{k,\ldots,1}
			(
			\mathsf{z}_{1,\ldots,k}^{i_{1},\ldots,i_{k}}
			(
			E_{\xi}(\mathsf{h}^n(I))
			)
			)
			)\Big)
			& = & \\ \\
			\mathsf{z}_{1,\ldots,k}^{i_{1},\ldots,i_{k}}
			\Big(\mathbb{D}^n\setminus
			(
			\mathsf{z}_{i_{k},\ldots,i_{1}}^{k,\ldots,1}
			(
			\mathsf{h}^n(\xi^{I})
			)
			)\Big)
			& =  &\\
			&  \textit{\small{By Proposition \ref{propz}} (5)}& \\
			\mathsf{z}_{1,\ldots,k}^{i_{1},\ldots,i_{k}}
			\Big(
			\mathsf{z}_{i_{k},\ldots,i_{1}}^{k,\ldots,1}
			(\mathbb{D}^n) \setminus
			(
			\mathsf{z}_{i_{k},\ldots,i_{1}}^{k,\ldots,1}
			(
			\mathsf{z}^n(\xi^{I})
			)
			)\Big)
			& =  &\\
			&  \textit{\small{By Proposition \ref{propz}} (4)}& \\
			\mathsf{z}_{1,\ldots,k}^{i_{1},\ldots,i_{k}}
			\Big(
			\mathsf{z}_{i_{k},\ldots,i_{1}}^{k,\ldots,1}
			(\mathbb{D}^n \setminus
			\mathsf{h}^n(\xi^{I})
			)
			)\Big)
			& =  &\\
			&  \textit{\small{By Proposition \ref{propz}} (2)}& \\
			D^n \setminus
			\mathsf{h}^n(\xi^{I})
			& = & \\ \\
			\mathsf{h}^n((\neg\, \xi)^{I})
			& =  & \\ \\
			\mathsf{h}^n(\psi^{I}).
			&&  \\ \\
			\end{array}
			$
		\end{center}


		\item
		$\psi = \xi\wedge\chi$,
		with $\free(\psi)=\{x_{i_{1}},\ldots,x_{i_{k}}\}$,
		$\free(\xi)=\{x_{r_{1}},\ldots,x_{r_{p}}\}$,
		$\free(\chi)=\{x_{s_{1}},\ldots,x_{s_{q}}\}$,
		$\free(\psi) = \free(\xi)\cup\free(\chi)$,
		and\footnote{The last assumption is needed in steps $\dagger$}
		$\free(\xi)\cap\free(\chi)=\emptyset$.
		Now $E_{\psi} = E_{\xi} \bigcap E_{\chi}$.
		The inductive hypothesis is
		$$
		\mathsf{z}_{1,\ldots,p}^{r_{1},\ldots,r_{p}}
		\Big(
		E_{\xi}(\mathsf{h}^n(I))
		\Big)
		\;=\;
		\mathsf{h}^n(\xi^{I}).
		$$
		$$
		\mathsf{z}_{1,\ldots,q}^{s_{1},\ldots,s_{q}}
		\Big(
		E_{\chi}(\mathsf{h}^n(I))
		\Big)
		\;=\;
		\mathsf{h}^n(\chi^{I}).
		$$
		
		We have
		
		\begin{center}
			$
			\begin{array}{lrl}
			\mathsf{z}_{1,\ldots,k}^{i_{1},\ldots,i_{k}}
			\Big(
			E_{\psi}(\mathsf{h}^n(I))
			\Big)
			& = & \\ \\
			\mathsf{z}_{1,\ldots,k}^{i_{1},\ldots,i_{k}}
			\Big(
			E_{\xi}\bigcap E_{\chi}\, (\mathsf{h}^n(I))
			\Big)
			& = & \\ \\
			\mathsf{z}_{1,\ldots,k}^{i_{1},\ldots,i_{k}}
			\Big(
			E_{\xi}(\mathsf{h}^n(I))
			\,\bigcap\;
			E_{\chi}(\mathsf{h}^n(I))
			\Big)
			& =  &\\
			&  \textit{\small{By Proposition \ref{propz}} (2)}& \\
			\mathsf{z}_{1,\ldots,k}^{i_{1},\ldots,i_{k}}\Big(
			\mathsf{z}_{r_{p},\ldots,r_{1}}^{p,\ldots,1}
			\Big(
			\mathsf{z}_{1,\ldots,p}^{r_{1},\ldots,r_{p}}
			(
			E_{\xi}(\mathsf{h}^n(I))
			)
			\Big)
			\;\bigcap\;
			\mathsf{z}_{s_{q},\ldots,s_{1}}^{q,\ldots,1}
			\Big(
			\mathsf{z}_{1,\ldots,q}^{s_{1},\ldots,s_{q}}
			(
			E_{\chi}(\mathsf{h}^n(I))
			)
			\Big)\Big)
			& = & \\ \\
			\mathsf{z}_{1,\ldots,k}^{i_{1},\ldots,i_{k}}\Big(
			\mathsf{z}_{r_{p},\ldots,r_{1}}^{p,\ldots,1}
			\Big(\mathsf{h}^n(\xi^I)
			\Big)
			\;\bigcap\;
			\mathsf{z}_{s_{q},\ldots,s_{1}}^{q,\ldots,1}
			\Big(
			\mathsf{h}^n(\chi^I)
			\Big)\Big)
			& = & \\ \\
			\mathsf{z}_{1,\ldots,k}^{i_{1},\ldots,i_{k}}\Big(
			&&\\
			\hspace*{1.5cm}\mathsf{z}_{r_{p},\ldots,r_{1}}^{p,\ldots,1}
			\Big(\mathsf{h}^n(
			\{\nu(x_{r_{1}}),\ldots,\nu(x_{r_{p}})
			\;:\;
			I\models_{\nu}\xi\}
			)
			\Big)
			\;\;\bigcap\;
			&&\\
			\hspace*{1.5cm}\mathsf{z}_{s_{q},\ldots,s_{1}}^{q,\ldots,1}
			\Big(
			\mathsf{h}^n(
			\{\nu(x_{s_{1}}),\ldots,\nu(x_{s_{q}})
			\;:\;
			I\models_{\nu}\chi\}
			)
			\Big)
			&&\\
			\hspace{1.2cm}\Big)
			& =\dagger  &\\
			&  \textit{\small{By Proposition \ref{propz}} (5)}& \\
			\mathsf{z}_{1,\ldots,k}^{i_{1},\ldots,i_{k}}\Big(
			&&\\
			\hspace*{1.5cm}\mathsf{z}_{s_{q},\ldots,s_{1}, r_p, \ldots,r_1}^{\:p+q,\ldots,\:p+1, \: p, \ldots, 1}
			\Big(\mathsf{h}^n(
			\{\nu(x_{r_{p}}),\ldots,\nu(x_{r_{1}})
			\;:\;
			I\models_{\nu}\xi\}
			)
			\Big)
			\;\;\bigcap\;
			&&\\
			\hspace*{1.5cm}
			\mathsf{z}_{r_{p},\ldots,r_{1},s_q, \ldots,s_1}^{q+p,\ldots,q+1, q, \ldots, 1}
			\Big(
			\mathsf{h}^n(
			\{\nu(x_{s_{1}}),\ldots,\nu(x_{s_{q}})
			\;:\;
			I\models_{\nu}\chi\}
			)
			\Big)
			&&\\
			\hspace{1.2cm}\Big)
			& =\dagger & \\ \\
			\mathsf{z}_{1,\ldots,k}^{i_{1},\ldots,i_{k}}\Big(
			\mathsf{z}_{i_{k},\ldots,i_{1}}^{k,\ldots,1}\Big(
			\mathsf{h}^n(
			\{\nu(x_{i_{1}}),\ldots,\nu(x_{i_{k}})
			\;:\;
			I\models_{\nu}\xi\wedge\chi
			\}
			)
			\Big)
			\Big)
			& =  &\\
			&  \textit{\small{By Proposition \ref{propz}} (2)}& \\
			\mathsf{h}^n(
			\{\nu(x_{i_{1}}),\ldots,\nu(x_{i_{k}})
			\;:\;
			I\models_{\nu}\xi\wedge\chi
			\}
			& = & \\ \\
			\mathsf{h}^n(
			(\xi\wedge\chi)^I
			)
			& = &    \\ \\
			\mathsf{h}^n(\psi^I).
			&
			\end{array}
			$
		\end{center}
		
		\newpage
		
		
		\item
		$\psi = \exists x_{i_{j}} \xi$, with
		$\free(\xi) = \{x_{i_{1}},\ldots,x_{i_{j}},\ldots,x_{i_{k}}\}$.
		Let 
		
		\begin{center}
			$
			\begin{array}{lrl}
			\{i'_1,\ldots,i'_{k-1}\} & = &
			\{i_1,\ldots,i_j,\ldots,i_{k}\} 
			\setminus
			\{i_j\}\\
			\{r_1,\ldots,r_{n-k}\} & = &
			\{1,\ldots,n\} \setminus
			\{i_1,\ldots,i_j,\ldots,i_{k}\} \\
			\{r'_1,\ldots,r'_{n-k+1}\} & = &
			\{r_1,\ldots,r_{n-k}\}
			\cup
			\{i_j\} 
			\end{array}
			$
		\end{center}
		
		\noindent
		We assume wlog that $k<n$. 
		Let $E_{\psi} = \mathsf{c}_{i_{j}}(E_{\xi})$.
		The inductive hypothesis is
		$$
		\mathsf{z}_{1,\ldots,k}^{i_{1},\ldots,i_{k}}
		\Big(
		E_{\xi}(\mathsf{h}^n(I))
		\Big)
		\;=\;
		\mathsf{h}^n(\xi^{I}). 
		$$
		
		We have
		
		\begin{center}
			$
			\begin{array}{lrl}
			\mathsf{z}_{1,\ldots,k-1}^{i'_{1},\ldots,i'_{k-1}}
			\Big(E_{\psi}(\mathsf{h}^n(I))
			\Big)
			& = &\\ \\
			\mathsf{z}_{1,\ldots,k-1}^{i'_{1},\ldots,i'_{k-1}}
			\Big(\mathsf{c}_{i_{j}}(
			E_{\xi}(\mathsf{h}^n(I))
			)\Big)
			& = & \\
			& \textit{\small{By Prop.\ \ref{propz}} (3)} & \\
			\mathsf{z}_{1,\ldots,k-1}^{i'_{1},\ldots,i'_{k-1}}
			\Big(\mathsf{c}_{i_{j}}(
			\mathsf{z}^{k,\ldots,1}_{i_{k},\ldots,i_{1}}
			(
			\mathsf{z}_{1,\ldots,k}^{i_{1},\ldots,i_{k}}
			(
			E_{\xi}(\mathsf{h}^n(I))
			)
			)
			)\Big)
			& = &\\ \\
			\mathsf{z}_{1,\ldots,k-1}^{i'_{1},\ldots,i'_{k-1}}
			\Big(\mathsf{c}_{i_{j}}(
			\mathsf{z}^{k,\ldots,1}_{i_{k},\ldots,i_{1}}
			(
			\mathsf{h}^n(\xi^{I})
			)
			)\Big)
			& = & \\
			& \textit{\small{By Prop.\ \ref{propz}} (3)}& \\
			\mathsf{z}_{1,\ldots,k-1}^{i'_{1},\ldots,i'_{k-1}}
			\Big(
			\mathsf{z}^{k,\ldots,1}_{i_{k},\ldots,i_{1}}
			(
			\mathsf{c}_{j}(
			\mathsf{h}^n(\xi^{I})
			)
			)\Big)
			& = &\\ \\
			\mathsf{z}_{1,\ldots,j-1,j,\ldots,k-1}^{i_{1},\ldots,i_{j-1},i_{j+1},\ldots,i_{k}}
			\Big(
			\mathsf{z}^{k,\ldots,j,j-1,\ldots,1}_{i_{k},\ldots,i_j,i_{j-1},\ldots,i_{1}}
			(
			\mathsf{c}_{j}(
			\mathsf{h}^n(\xi^{I})
			)
			)\Big)
			& =& \\ \\
			\mathsf{z}_{1,\ldots,j-1}^{i_{1},\ldots,i_{j-1}}
			\circ 
			\mathsf{z}_{j,\ldots,k-1}^{i_{j+1},\ldots,i_{k}}
			\Big(
			\mathsf{z}^{k,\ldots,j+1,j}_{i_{k},\ldots,i_{j+1},i_{j}}
			\circ 
			(\mathsf{z}^{j-1,\ldots,1}_{i_{j-1},\ldots,i_{1}}
			\mathsf{c}_{j}(
			\mathsf{h}^n(\xi^{I})
			)
			)\Big)
			& =  & \\
			& \textit{\small{{By Prop.\ \ref{propzz}}}} &\\
			\mathsf{z}^{1,\ldots,j-1,i_j,j,\ldots,k-1}
			_{1,\ldots,j-1,j,j+1,\ldots,k}(
			\mathsf{c}_{j}(\mathsf{h}^n(\xi^{I})
			))
			& =  & \\
			& \textit{\small{By Prop.\ \ref{propz}} (3)} &\\
			\mathsf{c}_{i_{j}}(
			\mathsf{z}^{1,\ldots,j-1,i_j,j,\ldots,k-1}
			_{1,\ldots,j-1,j,j+1,\ldots,k}(
			\mathsf{h}^n(\xi^{I})
			))
			& = &\\ \\
			\mathsf{c}_{i_{j}}(
			\mathsf{z}^{1,\ldots,j-1,i_j,j,\ldots,k-1}
			_{1,\ldots,j-1,j,j+1,\ldots,k}(
			\mathsf{h}^n(
			\{(\nu(x_{i_{1}}),\ldots,
			\nu(x_{i_{j}}),\ldots,
			\nu(x_{i_{k}})) \;:\; I\models_{\nu}\xi\}
			)
			))
			& = &\\ \\
			\mathsf{c}_{i_{j}}(
			\mathsf{z}^{1,\ldots,j-1,i_j,j,\ldots,k-1}
			_{1,\ldots,j-1,j,j+1,\ldots,k}(&&\\
			\qquad \qquad
			\{(\nu(x_{i_{1}}),\ldots,
			\nu(x_{i_{j}}),\ldots,
			\nu(x_{i_{k}}),\nu(x_{r_{1}}),
			\ldots,\nu(x_{r_{n-k}})
			) \;:\; I\models_{\nu}\xi\}
			))
			& = &\\ \\
			
			\end{array}
			$
		\end{center}
		\begin{center}
			$
			\begin{array}{lrr}
			\mathsf{c}_{i_{j}}(
			\{(\nu(x_{i'_{1}}),\ldots,
			\nu(x_{i'_{k-1}}),\nu(x_{r'_{1}}),\ldots,
			\nu(x_{i_{j}}),\ldots,
			\ldots,\nu(x_{r'_{n-k+1}})
			) \;:\; I\models_{\nu}\xi\}
			)
			& = \\ \\
			\bigcup_{a\in D}
			\{(\nu(x_{i'_{1}}),\ldots,
			\nu(x_{i'_{k}}),\nu(x_{r'_{1}}),
			\ldots,\nu(x_{i_{j}}),\ldots
			\nu(x_{r'_{n-k+1}}),
			) \;:\; I\models_{\nu_{(i_{j}/a)}}\!\xi\}
			& = \\ \\
			\{(\nu(x_{i'_{1}}),\ldots,
			\nu(x_{i'_{k}}),\nu(x_{r'_{1}}),\ldots,
			\nu(x_{i_{j}}),
			\ldots,\nu(x_{r'_{n-k+1}})
			) \;:\; I\models_{\nu}\exists x_{i_{j}}\xi\}
			& = \\ \\
			\mathsf{h}^n(
			\{(\nu(x_{i'_{1}}),\ldots,
			\nu(x_{i'_{k-1}})
			) \;:\; I\models_{\nu}\exists x_{i_{j}}\xi\}
			)
			& = \\ \\
			\mathsf{h}^n(
			\xi^I
			).
			&
			\end{array}
			$
		\end{center}
	\end{itemize}
\end{proof}


\noindent
On the other hand,
CA$_n$ expressions $E$ are translated into FO$_n$-formulas
$\varphi_E$ recursively as follows:

\begin{itemize}
\item
If $E = \mathsf{C}_p$, 
then 
$$
\varphi_E 
\;=\; 
R_p(x_1,\ldots,x_{\ar(R_{p})}) 
\;\wedge\!\!\!\!
\bigdoublewedge_{k\,\in\,\{\!\ar(R_{p})+1,\ldots,n\}}
(x_k \approx x_k).
$$

\item
If $E = \mathsf{d}_{ij}$,
then 
$$
\varphi_E \;=\;
(x_i\approx x_j)
\;\wedge\!\!\!\!
\bigdoublewedge_{k\;\in\,\{1,\ldots,n\}\setminus\{i,j\}}
(x_k \approx x_k).
$$

\item
If $E = F \bigcup G$, then
$\varphi_E = \varphi_F \vee \varphi_G$,
if $E = F \bigcap G$, then
$\varphi_E = \varphi_F \wedge \varphi_G$,
and if $E = \overline{F}$,
then $\varphi_E = \neg \varphi_F$.

\item
If $E = \mathsf{c}_i(F)$,
then $\varphi_E = (\exists x_i \varphi_F) \wedge (x_i \approx x_i)$.

\item
If $E = \creversed_i(F)$,
then $\varphi_E = (\forall x_i \varphi_F) \wedge (x_i \approx x_i)$.
\end{itemize}

\medskip
\noindent
The following result 
can also be extracted from
\cite{hmt1,hmt2}.

\begin{theorem}
For every CA$_n$ expression $E$
there is an FO$_{n}$ formula $\varphi_E$,
such that
$$
\varphi_E^I
\;=\;
E(\mathsf{h}^{n}(I)),
$$
for all instances~$I$.
$\blacktriangleleft$
\end{theorem}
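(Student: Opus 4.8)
The plan is to prove the statement by structural induction on the CA$_n$ expression $E$, using the recursive translation $E\mapsto\varphi_E$ just defined. The one fact that makes this direction considerably smoother than Theorem \ref{thrm1} is that the translation is engineered so that \emph{every} $\varphi_E$ has the full free-variable set $\free(\varphi_E)=\{x_1,\ldots,x_n\}$; consequently $\varphi_E^I$ is always an $n$-ary relation of exactly the same shape as the $n$-dimensional cylinder $E(\mathsf{h}^n(I))$, and no column-correlating swaps $\mathsf{z}$ are required. I would therefore first record, as an easy preliminary induction, the invariant $\free(\varphi_E)=\{x_1,\ldots,x_n\}$: the trailing conjuncts $x_k\approx x_k$ in the base cases and the conjunct $x_i\approx x_i$ appended after each (inner or outer) cylindrification are semantically vacuous, but serve precisely to keep every variable free, so that the quantifier introduced for $\mathsf{c}_i$ or $\creversed_i$ does not permanently drop dimension $i$.

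For the base cases I would compute $\varphi_E^I$ directly. When $E=\mathsf{C}_p$, the formula $R_p(x_1,\ldots,x_{\ar(R_p)})\wedge\bigwedge_{k>\ar(R_p)}(x_k\approx x_k)$ is satisfied by exactly those $\nu$ whose first $\ar(R_p)$ coordinates lie in $R_p^I$ and whose remaining coordinates are unconstrained; reading off the free tuple $(\nu(x_1),\ldots,\nu(x_n))$ yields $R_p^I\times\mathbb{D}^{\,n-\ar(R_p)}=\mathsf{h}^n(R_p^I)=\mathsf{C}_p(\mathsf{h}^n(I))$. The diagonal case $E=\mathsf{d}_{ij}$ is identical, the equality literal $x_i\approx x_j$ cutting out exactly $\{t\in\mathbb{D}^n : t(i)=t(j)\}=d_{ij}$. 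In short, the dummy equalities play on the logic side the role that horizontal expansion played on the algebra side.

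The Boolean cases are immediate from the induction hypothesis together with the free-variable invariant: since $\varphi_F$ and $\varphi_G$ share the free set $\{x_1,\ldots,x_n\}$, we get $\varphi_{F\cup G}^I=\varphi_F^I\cup\varphi_G^I$, $\varphi_{F\cap G}^I=\varphi_F^I\cap\varphi_G^I$ and $\varphi_{\overline F}^I=\mathbb{D}^n\setminus\varphi_F^I$, which match $(F\cup G)$, $(F\cap G)$ and $\overline F$ evaluated at $\mathsf{h}^n(I)$ by the definitions of union, intersection and complement of cylinders. For the cylindrification cases, take $E=\mathsf{c}_i(F)$ with $\varphi_E=(\exists x_i\,\varphi_F)\wedge(x_i\approx x_i)$: a valuation $\nu$ satisfies $\varphi_E$ iff $I\models_{\nu_{(i/a)}}\varphi_F$ for some $a\in\mathbb{D}$, so the resulting set of $n$-tuples is $\{\,t\in\mathbb{D}^n : t(i/a)\in\varphi_F^I\text{ for some }a\,\}=\mathsf{c}_i(\varphi_F^I)$, which by the induction hypothesis equals $\mathsf{c}_i(F(\mathsf{h}^n(I)))=E(\mathsf{h}^n(I))$; the case $E=\creversed_i(F)$ is the same with ``for some'' replaced by ``for all'' and $\exists$ by $\forall$.

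The only point requiring genuine care — and the place I would expect to argue explicitly rather than wave at — is the interaction between quantification and the trailing $x_i\approx x_i$ conjunct in the cylindrification cases: the existential (resp.\ universal) quantifier removes $x_i$ from the free variables, and one must check that re-appending $x_i\approx x_i$ reinstates $x_i$ as a free coordinate ranging over all of $\mathbb{D}$, thereby producing a genuine $n$-dimensional cylinder rather than an $(n-1)$-ary relation. Verifying this against Definition \ref{fo1} (where the answer is taken over the free variables in their indexed order) is what makes the equation $\varphi_E^I=E(\mathsf{h}^n(I))$ hold on the nose, and it is the one spot where the free-variable invariant does real work.
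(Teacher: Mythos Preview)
Your proposal is correct and follows essentially the same structural induction as the paper's proof, handling the same cases in the same way. Your explicit articulation of the free-variable invariant $\free(\varphi_E)=\{x_1,\ldots,x_n\}$ and of the role played by the trailing $x_k\approx x_k$ conjuncts is in fact a clearer account of why the argument works than the paper's own presentation, which leaves this implicit.
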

\begin{proof}
	We do a structural induction
	
	\begin{itemize}
		\item
		$E = \mathsf{C}_p$. 
		Then 
		$
		\varphi_E 
		\;=\; 
		R_p(x_1,\ldots,x_k) 
		\;\wedge\;
		\bigdoublewedge_{r\,\in\,\{k+1,\ldots,n\}}
		(x_r \approx x_r),
		$
		where $k=\ar(R_p)$.
		Clearly
		\begin{center}
			$
			\begin{array}{lr}
			\varphi_E^I 
			& = \\ \\
			\{
			(\nu(x_1),\ldots,\nu(x_k),\nu(x_{k+1}),\ldots,\nu(x_n))
			\;:\;
			I\models_{\nu}R_p(x_1,\ldots,x_k)
			\}
			& = \\ \\
			R_p^I\times\mathbb{D}^{n-k}
			& = \\ \\
			\mathsf{C}_p(\mathsf{h}_n(I))
			& = \\ \\
			E(\mathsf{h}^n(I)).
			\end{array}
			$
		\end{center}
		
		\bigskip
		
		\item
		$E = \mathsf{d}_{ij}$. 
		Then 
		$
		\varphi_E 
		\;=\; 
		(x_i\approx x_j)
		\;\wedge\;
		\bigdoublewedge_{r\;\in\,\{1,\ldots,n\}\setminus\{i,j\}}
		(x_r \approx x_r)
		$.
		We have
		\begin{center}
			$
			\begin{array}{lr}
			\varphi_E^I
			& = \\ \\
			\{
			(\nu(x_1),\ldots,
			\nu(x_i),\ldots,\nu(x_j),\ldots,
			\nu(x_n))
			:
			I\models_{\nu} (x_i\approx x_j)
			\}
			& = \\ \\
			\{t\in\mathbb{D}^n :\; t(i)=t(j)\}
			& = \\ \\
			\mathsf{d}_{ij}
			& = \\ \\
			E(\mathsf{h}^n(I)).
			\end{array}
			$
		\end{center}
		\bigskip
		
		\item
		$E = F_1 \,\bigcap\, F_2$.
		Then $\varphi_E = \varphi_{F_{1}} \wedge \varphi_{F_{2}}$,
		and the inductive hypothesis is
		
		\begin{center}
			$
			\begin{array}{lrr}
			\varphi_{F_{1}}^{I}  &=& 
			F_1(\mathsf{h}^n(I))
			\\
			\varphi_{F_{2}}^{I} &=& 
			F_2(\mathsf{h}^n(I))
			\end{array}
			$
		\end{center}
		\noindent
		Then, 
		\begin{center}
			$
			\begin{array}{lr}
			\varphi_E^I
			& = \\ \\
			(\varphi_{F_{1}} \wedge \psi_{F_{2}})^I
			& = \\ \\
			\{
			(\nu(x_1),\ldots,\nu(x_{n}))
			\;:\;
			I\models_{\nu}\varphi_{F_{1}}\!\wedge\,\psi_{F_{2}}
			\}
			& = \\ \\
			\{
			(\nu(x_1),\ldots,\nu(x_{n}))
			\;:\;
			I\models_{\nu}\varphi_{F_{1}}
			\}
			\;\;\cap &\\
			\{
			(\nu(x_{1}),
			\ldots,\nu(x_{n}))
			\;:\;
			I\models_{\nu}\xi_{F_{2}}
			\}
			& = \\ \\
			\varphi_{F_{1}}^I
			\cap\,
			\xi_{F_{2}}^I
			& = \\ \\
			F_1(\mathsf{h}^n(I))
			\,\bigcap\;
			F_2(\mathsf{h}^n(I))
			& = \\ \\
			F_1 \bigcap\, F_2\;(\mathsf{h}^n(I)) 
			& = \\ \\
			E(\mathsf{h}^n(I)). 
			\end{array}
			$
		\end{center}
		\bigskip
		

		\item
		$E = \overline{F}$,
		where
		Then $\varphi_E = \neg\varphi_F$,
		and
		the inductive hypothesis is
		$
		\varphi_F^I
		=
		F(\mathsf{h}^n(I)).
		$	
		We have
		\begin{center}
			$
			\begin{array}{lr}
			\varphi_E^I
			& = \\ \\
			\neg\varphi_F^I
			& = \\ \\ 
			\overline{\varphi_F^I}
			& = \\ \\ 
			\overline{F(\mathsf{h}^n(I))}
			& = \\ \\
			E(\mathsf{h}^n(I)).
			& 
			\end{array}
			$
		\end{center}
		\bigskip	
		
		\bigskip
		
		\item
		$E = \mathsf{c}_{i}(F)$, 
		Then $\varphi_E = (\exists x_{i}\, \varphi_F)
		\wedge (x_i \approx x_i) $.
		The inductive hypothesis is
		$
		\varphi_F^I
		=
		F(\mathsf{h}^n(I)).
		$	

		We have		
		\begin{center}
			$
			\begin{array}{lr}
			\varphi_E^I
			& = \\ \\
			\{
			(\nu(x_1),\ldots,\nu(x_i),\ldots,\nu(x_n))
			\;:\;
			I\models_{\nu}(\exists x_{i}\, \varphi_F)
			\wedge (x_i \approx x_i) 
			\}
			& = \\ \\
			\{
			(\nu(x_1),\ldots,\nu(x_i),\ldots,\nu(x_n))
			\;:\;
			I\models_{\nu}(\exists x_{i}\, \varphi_F)
			\}
			\;\;\;\cap 
			&  \\
			\{
			(\nu(x_1),\ldots,\nu(x_i),\ldots,\nu(x_n))
			\;:\;
			I\models_{\nu}(x_i \approx x_i) 
			\}
			& = \\ \\
			\{
			(\nu(x_1),\ldots,\nu(x_i),\ldots,\nu(x_n))
			\;:\;
			I\models_{\nu}(\exists x_{i}\, \varphi_F)
			\}
			\;\;\;\cap\;\;\; 
			\mathbb{D}^n
			& = \\ \\
			\{
			(\nu(x_1),\ldots,\nu(x_i),\ldots,\nu(x_n))
			\;:\;
			I\models_{\nu}(\exists x_{i}\, \varphi_F)
			\}
			& = \\ \\
			\bigcup_{a\in\mathbb{D}}\;
			\{(\nu((x_{1}),\ldots,
			\nu(x_{i}),\ldots,
			\nu(x_{n})) 
			\;:\; 
			I\models_{\nu_{(i\!/a)}}\varphi_F
			\}
			& = \\ \\
			\mathsf{c}_{i}(
			\{(\nu((x_{1}),\ldots,
			\nu(x_{i}),\ldots,
			\nu(x_{n})) 
			\;:\; 
			I\models_{\nu}\varphi_F\} 
			)
			& = \\ \\ 
			\mathsf{c}_{i}(
			\varphi_F^I
			)
			& = \\ \\ 
			\mathsf{c}_{i}(
			F(
			\mathsf{h}^n(
			I
			)))
			& = \\ \\
			E\,(\mathsf{h}^n(I)).
			\end{array}
			$
		\end{center}
		
	\end{itemize}
\end{proof}

%

\section{Cylindric Set Algebra and \\
Cylindric Star Algebra}\label{caandcastar}
Since cylinders can be infinite,
we want a finite mechanism to represent
(at least some) infinite cylinders,
and the mechanism to be closed under queries.
Our representation mechanism comes in
two variations, depending on whether negation
is allowed or not. We first consider
the positive (no negation) case.

\subsection{Positive framework}\label{only-pos}

\noindent
{\bf Star Cylinders.}
We define an 
{\em $n$-dimensional (positive) star-cylinder}
$\dot{C}$ to be a finite set of
{\em $n$-ary star-tuples},
the latter being elements of
$(\mathbb{D}\cup\{*\})^n \times\, \powerset(\Theta_n)$,
where $\Theta_n$ denotes the set of all
{\em equalities}
of the form $i=j$,
with $i,j\in\{1,\ldots,n\}$.
Star-tuples will be denoted $\dot{t}, \dot{u}, \ldots$,
where a  star-tuple such as
$\dot{t} = (a,*,c,*,*, \{(4=5)\})$,
is meant to represent the set of
all ``ordinary" tuples
$(a,x,c,y,y)$ where $x,y\in\mathbb{D}$.
It will be convenient to assume that all
our star-cylinders are in the following normal form.

\begin{definition}\label{normalform}
An $n$-dimensional star-cylinder $\dot{C}$ is said to be
in {\em normal form} if $\dot{t}(n+1)$ 
is satisfiable, and 
$\dot{t}(n+1)\models(i=j)$ entails
$(i=j)\in\dot{t}(n+1)$ and $\dot{t}(i) = \dot{t}(j)$,
for all star-tuples $\dot{t}\in\dot{C}$. 
\end{definition}

The symbol $\models$ above 
stands for standard logical implication.
It is easily seen that maintaining star-cylinders
in normal form can be done efficiently in polynomial time.
We shall therefore assume without loss of generality
that all star-cylinders and star-tuples are in normal form.
We next define the notion of {\em dominance},
where a dominating star-tuple represents
a superset of the ordinary tuples represented
by the dominated star-tuple. 
First we define a relation 
$\preceq \;\;\subseteq\; (\mathbb{D}\cup\{*\})^2$
by $a\preceq a$,
$*\preceq *$,
and $a\preceq *$,
for all $a\in\mathbb{D}$.
\begin{definition}\label{domin}
Let $\dot{t}$ and $\dot{u}$ be 
$n$-dimensional star-tuples.
We say that $\dot{u}$ 
{\em dominates} $\dot{t}$, 
denoted
$\dot{t} \preceq \dot{u}$, 
if $\dot{t}(i)\preceq\dot{u}(i)$
for all $i\in\{1,\ldots,n\}$,
and $(i=j)\in\dot{u}(n+1)$
entails $(i=j)\in\dot{t}(n+1)$
when $\dot{t}(i)=\dot{t}(j)=*$, and
entails $\dot{t}(i)=\dot{t}(j)$ otherwise.
\end{definition}

We complete the definition by stipulating
that $\dot{t}\preceq\dot{u}$ whenever
$\dot{t}(n+1)\models\false$\footnote{Note that
in this case (only), $\dot{t}$ is not in normal form.}.
We can now define the {\em meet} 
$\,\dot{t}\curlywedge\dot{u}\,$ of star-tuples
$\dot{t}$ and $\dot{u}\;$:

\begin{definition}\label{meet}
Let $\dot{t}$ and $\dot{u}$ be $n$-ary star-tuples.
If $\dot{t}(j),\dot{u}(j)\in\mathbb{D}$ 
for some $j$
and $\dot{t}(j)\neq\dot{u}(j)$ 
then $\dot{t}\curlywedge\dot{u}\,(i) = a$
for $i\in\{1,\ldots,n\}$,   
and $\dot{t}\curlywedge\dot{u}(n+1) = \false$.   
\footnote{Here $a$ is an arbitrary constant in $\mathbb{D}$.}
Otherwise, for $i\in\{1,\ldots,n\}$
$$
{\dot{t}}\curlywedge \dot{u}\,(i) = 
\left\{
	\begin{array}{ll}
	\dot{t}(i)  & \mbox{ if } \;\dot{t}(i)\in\mathbb{D} \\
	\dot{u}(i)  & \mbox{ if } \;\dot{u}(i)\in\mathbb{D} \\
	*           & \mbox{ if } \;\dot{t}(i)=\dot{u}(i)=*
	\end{array}  
	\right.
$$
and 
\begin{eqnarray*}
\dot{t}\curlywedge{\dot{u}}\,(n+1) &=& \dot{t}(n+1)\cup\dot{u}(n+1).
\end{eqnarray*}
\end{definition}

For an example, let 
$\dot{t} = (a,*,*,*,*,\{(3=4)\})$ and
$\dot{u} = (*,b,*,*,*,\{(4=5)\})$. 
Then we have
$\dot{t}\curlywedge\dot{u} = (a,b,*,*,*,\{(3=4),(4=5),(3=5)\})$.
Note that
$
\dot{t}\curlywedge\dot{u}
\preceq 
\dot{t} 
$, and
$
\dot{t}\curlywedge\dot{u}
\preceq 
\dot{u} 
$.
Note also that
for $n$-ary star-tuples
$\dot{t}_{\emptyset} = (a, a, \ldots, a, \{\false\})$
and
$\dot{t}_{\mathbb{D}^{n}} = (*, *, \ldots, *, \{\true\})$,
and for any $n$-ary star-tuple $\dot{t}$,
it holds that
$\dot{t}\curlywedge\dot{t}_{\emptyset} = \dot{t_{\emptyset}}$,
$\dot{t}\curlywedge\dot{t}_{\mathbb{D}^{n}} = \dot{t}$, and 
$\dot{t}_{\emptyset} \preceq 
\dot{t} \preceq 
\dot{t}_{\mathbb{D}^{n}}$.

We extend the order $\preceq$ to include
"ordinary" $n$-ary tuples 
$t\in\mathbb{D}^n$ by 
identifying $(a_1,\ldots,a_n)$ with
star-tuple $(a_1,\ldots,a_n,\{\true\})$.
Let $\dot{C}$ be an $n$-dimensional star-cylinder.
We can now define the meaning of $\dot{C}$ to be
the set $\lsem\dot{C}\,\rsem$ of all ordinary tuples it represents,
where
$$
\lsem\dot{C}\,\rsem
=\;
\{t\in\mathbb{D}^n :\; t\preceq\dot{u}
\mbox{ for some }
\dot{u}\in\dot{C} 
\}.
$$
We lift the order to $n$-dimensional star-cylinders
$\dot{C}$ and $\dot{D}$, by stipulating that
$\dot{C}\preceq\dot{D}$,
if for all star-tuples 
$\dot{t}\in\dot{C}$
there is a star-tuple
$\dot{u}\in\dot{D}$,
such that 
$\dot{t}\preceq\dot{u}$.

\begin{lemma}
Let $\dot{C}$ and $\dot{D}$
be $n$-dimensional (positive) star-cylinders.
Then 
$\lsem\dot{C}\rsem
\subseteq 
\lsem\dot{D}\rsem$
iff
$\dot{C}\preceq\dot{D}$.
\end{lemma}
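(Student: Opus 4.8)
The plan is to prove the two implications separately. For the direction $\dot C \preceq \dot D \Rightarrow \lsem\dot C\rsem \subseteq \lsem\dot D\rsem$ I would first isolate the single fact that the dominance relation $\preceq$ on star-tuples is transitive; once this is in hand the implication is immediate. Indeed, take $t\in\lsem\dot C\rsem$, say $t\preceq\dot t$ for some $\dot t\in\dot C$ (identifying the ordinary tuple $t$ with the star-tuple $(t,\{\true\})$). By hypothesis there is $\dot u\in\dot D$ with $\dot t\preceq\dot u$, and transitivity gives $t\preceq\dot u$, so $t\in\lsem\dot D\rsem$.

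To establish transitivity, suppose $\dot t\preceq\dot u$ and $\dot u\preceq\dot v$. The positional requirement $\dot t(i)\preceq\dot v(i)$ follows because $\preceq$ on $\mathbb D\cup\{*\}$ is an order with $*$ on top. The only delicate part is the equality bookkeeping: given $(i=j)\in\dot v(n+1)$ I would split on whether $\dot t(i)=\dot t(j)=*$. If so, a short chase through the two hypotheses forces $\dot u(i)=\dot u(j)=*$, hence $(i=j)\in\dot u(n+1)$, and then $(i=j)\in\dot t(n+1)$. Otherwise at least one of $\dot t(i),\dot t(j)$ is a constant, and I would argue (again by splitting on the form of $\dot u(i),\dot u(j)$) that $\dot t(i)=\dot t(j)$; the degenerate case $\dot t(n+1)\models\false$ is handled by the stipulation that such a tuple is dominated by everything.

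For the converse $\lsem\dot C\rsem\subseteq\lsem\dot D\rsem \Rightarrow \dot C\preceq\dot D$ I would use a freezing/canonical-tuple argument, which is the real heart of the lemma. Fix a (normal-form, hence satisfiable) star-tuple $\dot t\in\dot C$. Using that $\dot t(n+1)$ is closed under entailment, its $*$-positions partition into equivalence classes; I build an ordinary tuple $t^\ast$ by keeping every constant of $\dot t$ and assigning to each all-$*$ class a single fresh constant, choosing these fresh constants pairwise distinct and distinct from every constant occurring in $\dot t$ or in $\dot D$ (possible since $\dot D$ is finite and $\mathbb D$ is infinite). A direct check shows $t^\ast\preceq\dot t$, so $t^\ast\in\lsem\dot C\rsem\subseteq\lsem\dot D\rsem$, whence $t^\ast\preceq\dot u$ for some $\dot u\in\dot D$.

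It then remains to upgrade $t^\ast\preceq\dot u$ to $\dot t\preceq\dot u$, and here freshness does the work. At a $*$-position $i$ of $\dot t$ the value $t^\ast(i)$ is fresh, so $t^\ast(i)\preceq\dot u(i)$ forces $\dot u(i)=*$ (it cannot equal the fresh constant, which is absent from $\dot D$); at constant positions $\dot t(i)\preceq\dot u(i)$ is clear. For the equality condition, $t^\ast$ has no $*$, so each $(i=j)\in\dot u(n+1)$ yields $t^\ast(i)=t^\ast(j)$; distinctness of the fresh constants then translates this equation back into exactly the condition required by Definition~\ref{domin} --- same class when both positions are $*$ (hence $(i=j)\in\dot t(n+1)$ by normal form), and equal constants otherwise, while the mixed constant/$*$ case is excluded precisely because a fresh constant can never equal an original one. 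I expect this last translation --- keeping the equality sets and the normal-form closure aligned with the positional values --- to be the main bookkeeping obstacle; the freshness and distinctness choices are exactly what make it go through.
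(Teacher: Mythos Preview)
Your proposal is correct and follows essentially the same approach as the paper's proof: the easy direction is dispatched via transitivity of $\preceq$ (the paper simply says it ``follows directly from definitions''), and the substantive direction is handled by the identical freezing argument---build a canonical tuple from $\dot t$ by assigning a distinct fresh constant (outside the constants of $\dot t$ and $\dot D$) to each $\ast$-equivalence class, locate a dominating $\dot u\in\dot D$, and then use freshness to upgrade $t^\ast\preceq\dot u$ to $\dot t\preceq\dot u$. The paper's version is organized as a single-tuple sublemma ($\lsem\{\dot t\}\rsem\subseteq\lsem\dot D\rsem$ iff some $\dot u\in\dot D$ dominates $\dot t$) but the content is the same.
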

\smallskip

\noindent
\begin{proof}
We first show that
$\lsem\{\dot{t}\}\rsem
\subseteq
\lsem\dot{D}\rsem$ iff
there is a star-tuple
$\dot{u}\in\dot{D}$,
such that $\dot{t}\preceq\dot{u}$.
For a proof, we note that 
if $\dot{t}\preceq\dot{u}$
for some $\dot{u}\in\dot{D}$,
then 
$\lsem\{\dot{t}\,\}\rsem
\subseteq
\lsem\dot{D}\rsem$.
For the other direction, 
assume that $\lsem\{\dot{t}\,\}\lsem\,\subseteq\lsem\dot{D}\rsem$.
Let $A\subseteq\mathbb{D}$ be the finite set
of constants appearing in $\dot{t}$ or $\dot{D}$.
Construct the tuple 
$t\in(A\cup\{*\})^n$,
where $t(i)=\dot{t}(i)$ if $\dot{t}(i)\in A$,
and
$t(i)=a_i$ if $\dot{t}(i)=*$.
Here $a_i$ is
a unique value in 
the set $\mathbb{D}\setminus A$.
If $\dot{t}(n+1)$ contains an equality $(i=j)$
we choose $a_i=a_j$. 
Then
$t\in\lsem\{\dot{t}\,\}\rsem\subseteq\lsem\dot{D}\rsem$,
so there must be a tuple $\dot{u}\in\dot{D}$,
such that $t\preceq\dot{u}$.
It remains to show that
$\dot{t}\preceq\dot{u}$.
If $t(i)=a$ for some $a\in A$,
then $\dot{t}(i)=a$,
and since $t\preceq\dot{u}$ it follows that
$\dot{t}(i)\preceq\dot{u}(i)$.
If $t(i) = a_i\notin A$ then $\dot{t}(i)=*$,
and therefore
$t(i/b)\in\lsem\{\dot{t}\}\rsem\subseteq\lsem\dot{D}\rsem$,
for any $b$ in the infinite set $\mathbb{D}\setminus{A}$.
Therefore
it must be that $\dot{u}(i)=*$,
and thus $\dot{t}(i)\preceq\dot{u}(i)$.
This is true for all
$i\in\{1,\ldots,n\}$.
Finally, if $(i=j)\in\dot{u}(n+1)$,
we have two cases:
If
$t(i)\in A$
then $\dot{t}(i)=\dot{t}(j)$,
and if $t(i)\notin A$
then $(i=j)\in\dot{t}(n+1)$.
In summary, we have shown that 
$\dot{t}\preceq\dot{u}$.

We now return to the proof of the claim of the lemma.
The if-direction follows directly from definitions.
For the only-if direction,
assume that
$\lsem\dot{C}\rsem\subseteq\lsem\dot{D}\rsem$.
To see that $\dot{C}\preceq\dot{D}$
let $\dot{t} \in \dot{C}$.
Then 
$\lsem\{\dot{t}\,\}\rsem
\subseteq\lsem\dot{C}\rsem
\subseteq\lsem\dot{D}\rsem$.
We have just shown above that this entails
that there is a $\dot{u}\in\dot{C}$
such that $\dot{t}\preceq\dot{u}$,
meaning that
$\dot{C}\preceq\dot{D}$.
\end{proof}

\subsubsection*{Positive Cylindric Star Algebra}
Next we redefine the positive cylindric set operators
so that 
$\lsem\dot{C} \;\dot{\circ}\; \dot{D}\rsem
=\;
\lsem\dot{C}\rsem \circ\; \lsem\dot{D}\rsem$
or 
$\circ(\lsem\dot{D}\rsem) 
\;=\;
\lsem\dot{\circ}(\dot{D})\rsem$,
for each positive cylindric operator $\circ$,
its redefinition~$\dot{\circ}$,
and star-cylinders $\dot{C}$ and $\dot{D}$.
\begin{definition}
The positive cylindric star-algebra consists of
the following operators.
\end{definition}
\begin{enumerate}
\item
{\em Star-diagonal:}
$\dot{d}_{ij} = \{(\,\overbrace{*,\ldots,*}^{n},(i=j))\}$ 

\item
{\em Star-union:}
$
\dot{C}\cupdot\dot{D} 
= 
\{\dot{t} \::\: \dot{t}\in\dot{C} \mbox{ or } \dot{t}\in\dot{D}\}
$

\item
{\em Star-intersection:}
$
\dot{C}\capdot\dot{D} 
= 
\{\dot{t}\curlywedge\dot{u} 
\::\: 
\dot{t}\in\dot{C} \mbox{ and } \dot{u}\in\dot{D}\}
$

\item
{\em Outer cylindrification:}
Let $i\in\{1,\ldots,n\}$,
let $\dot{C}$ be an $n$-dimensional star-cylinder,
and $\dot{t}\in\dot{C}$.
Then
\begin{eqnarray*}
	\mbox{$\dot{\mathsf{c}}_i(\dot{t})(j)$} &=& \left\{
	\begin{array}{ll}
		\dot{t}(j)  & \mbox{\em if } \;j \neq i \\
		*           & \mbox{\em if } \;j=i
	\end{array}  
	\right.
\end{eqnarray*}
for $j\in\{1,\ldots,n\}$, and 
\begin{equation*}
\mbox{$\dot{\mathsf{c}}_{i}(\dot{t})(n+1)$} 
=
\{(j=k)\in\dot{t}(n+1) \,:\, j,k\neq i\}.
\end{equation*}
We then let
$
\dot{\mathsf{c}}_{i}(\dot{C})
=
\{\dot{\mathsf{c}}_{i}(\dot{t}\,) \,:\, \dot{t}\in \dot{C}\}.
$

\item
{\em Inner cylindrification:}
Let $\dot{C}$ be an $n$-dimensional cylinder
and $i\in\{1,\ldots,n\}$. Then
$$
\dot{\creversed}_{i}(\dot{C}) 
=
\{\dot{t}\in \dot{C} \;:\;\dot{t}(i)=\ast, \text{ and}
\;\;(i=j)\notin\dot{t}(n+1) \text{ for any } j\}.
$$
\end{enumerate}

We illustrate the positive cylindric star-algebra
with the following small example.
\begin{example}
Let 
$\dot{C}_1 = \{(a,*,*,*,*,\{(3=4)\})\}$,
$\dot{C}_2 =\{(*,b,*,*,*,\{(4=5)\})\}$,
$\dot{C}_3 =\{ (a,b,*,*,*,\{(4=5)\})\}$, and
consider 
$\dot{\creversed}_3(
(\dot{\mathsf{c}}_{1,4}(\dot{C}_{1}\capdot\dot{C} _{2}))
\cupdot
\dot{C}_{3}
). 
$
Then we have the following.
\begin{small}
\begin{eqnarray*}
\dot{C}_1\capdot\dot{C}_2 
&=&
\{(a,b,*,*,*,\{(3=4),(4=5)\})\}  \\
\dot{\mathsf{c}}_{1,4}(\dot{C}_1\capdot\dot{C}_2) 
&=& 
\{(*,b,*,*,*,\{(3=5)\})\} \\
(\dot{\mathsf{c}}_{1,4}(\dot{C}_1\capdot\dot{C_2}))
\cupdot
C_{3}
&=&
\{(*,b,*,*,*,\{(3=5)\}) \\
&&
\;\;\,(a,b,*,*,*,\{(4=5)\})\} \\
\dot{\creversed}_3(
(\mathsf{c}_{1,4}(\dot{C}_1\capdot\dot{C_2}))
\cupdot
C_3)
&=&
\{
(a,b,*,*,*,\{(4=5)\})
\}
\end{eqnarray*}
\end{small}
\end{example}

Next we show that the cylindric star-algebra
has the promised property.
\begin{theorem}\label{ca2castar}
Let  $\dot{C}$ and $\dot{D}$
be $n$-dimensional star-cylinders
and $\dot{d}_{ij}$ an
$n$-dimensional star-diagonal. 
Then the following statements hold.
\begin{enumerate}
\item
$
\lsem
\dot{d}_{ij}
\rsem
=
d_{ij}.
$

\item
$\lsem\dot{C}\cupdot\dot{D}\rsem =
\lsem\dot{C}\rsem 
\,\bigcup\;\;
\lsem\dot{D}\rsem
$.

\item
$\lsem\dot{C}\capdot\dot{D}\rsem =
\lsem\dot{C}\rsem 
\,\bigcap\;\;
\lsem\dot{D}\rsem
$.

\item
$
\lsem
\dot{\mathsf{c}}_i(\dot{C})
\rsem
=\;
\mathsf{c}_i(\lsem\dot{C}\rsem),
$

\item
$
\lsem
\dot{\mathsf{\creversed}}_i(\dot{C})
\rsem
=\;
\mathsf{\creversed}_i(\lsem\dot{C}\rsem),
$
\end{enumerate}
\end{theorem}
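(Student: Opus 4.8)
The plan is to prove the five identities by reducing statements (2), (3) and (4) to facts about single star-tuples, exploiting that the meaning of a star-cylinder is the union $\lsem\dot{C}\rsem=\bigcup_{\dot{t}\in\dot{C}}\lsem\{\dot{t}\}\rsem$ of the meanings of its star-tuples. Any cylindric operator that commutes with set union can thus be checked one star-tuple at a time and then lifted to whole cylinders. This reduction covers (1)--(4); statement (5) will need a separate, and more delicate, argument.

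Statement (1) is immediate from Definition~\ref{domin}: an ordinary tuple $t$, identified with $(t(1),\ldots,t(n),\{\true\})$, is dominated by the single all-$*$ star-tuple carrying $(i=j)$ exactly when $t(i)=t(j)$, which is the defining condition of $d_{ij}$. Statement (2) is equally direct, since $t\in\lsem\dot{C}\cupdot\dot{D}\rsem$ iff $t$ is dominated by some star-tuple of $\dot{C}\cup\dot{D}$, iff $t\in\lsem\dot{C}\rsem$ or $t\in\lsem\dot{D}\rsem$. For (3) I would first establish the single-tuple meet lemma $\lsem\{\dot{t}\curlywedge\dot{u}\}\rsem=\lsem\{\dot{t}\}\rsem\cap\lsem\{\dot{u}\}\rsem$: when $\dot{t}$ and $\dot{u}$ carry conflicting constants in some column the meet collapses to $\false$ and both sides are empty, and otherwise $t\preceq\dot{t}\curlywedge\dot{u}$ holds iff $t\preceq\dot{t}$ and $t\preceq\dot{u}$, using that the meet keeps a constant wherever either operand has one, a $*$ only where both do, and the union $\dot{t}(n+1)\cup\dot{u}(n+1)$ of the equality sets. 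The full identity then follows by distributing $\cap$ over the two unions defining $\lsem\dot{C}\rsem$ and $\lsem\dot{D}\rsem$.

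Statement (4) is handled the same way, reducing to $\lsem\dot{\mathsf{c}}_i(\dot{t})\rsem=\mathsf{c}_i(\lsem\{\dot{t}\}\rsem)$ and then using that outer cylindrification, being a union over the $i$-th coordinate, commutes with set union. The point to verify carefully is that deleting from $\dot{t}(n+1)$ exactly the equalities that mention $i$ neither loses nor introduces constraints among the remaining columns; the normal form is essential here, since it guarantees that an equality between two non-$i$ columns that is forced through $i$ (for instance from $(i=j)$ and $(i=k)$) is already present explicitly and hence survives the deletion.

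The main obstacle is statement (5), since inner cylindrification does \emph{not} commute with union, so the single-tuple reduction fails. I would argue the two inclusions directly. For $\creversed_i(\lsem\dot{C}\rsem)\subseteq\lsem\dot{\creversed}_i(\dot{C})\rsem$, take $t$ with $t(i/a)\in\lsem\dot{C}\rsem$ for every $a\in\mathbb{D}$; since $\dot{C}$ is finite while $\mathbb{D}$ is infinite, a single $\dot{u}\in\dot{C}$ dominates $t(i/a)$ for infinitely many $a$, and for such a $\dot{u}$ column $i$ must be $*$ (a constant would match only one $a$) and $i$ cannot occur in any equality of $\dot{u}(n+1)$ (such an equality would force $a=t(j)$, impossible for more than one $a$). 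Hence $\dot{u}\in\dot{\creversed}_i(\dot{C})$ and $t\preceq\dot{u}$. The converse inclusion is routine: if $t\preceq\dot{u}$ with $\dot{u}\in\dot{\creversed}_i(\dot{C})$, then $\dot{u}(i)=*$ and $i$ is unconstrained, so $t(i/a)\preceq\dot{u}$ for every $a$, whence $t\in\creversed_i(\lsem\dot{C}\rsem)$. The finiteness-plus-pigeonhole step is the crux of the whole theorem, as it is what turns the infinitary universal quantifier into the purely syntactic selection performed by $\dot{\creversed}_i$.
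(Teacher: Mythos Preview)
Your proposal is correct. For items (1)--(3) you follow essentially the same line as the paper, phrasing the intersection case via the single-tuple meet lemma $\lsem\{\dot{t}\curlywedge\dot{u}\}\rsem=\lsem\{\dot{t}\}\rsem\cap\lsem\{\dot{u}\}\rsem$ where the paper argues the two inclusions directly; the content is the same. For item (4) your route is slightly more modular: you reduce to the single-tuple identity $\lsem\{\dot{\mathsf{c}}_i(\dot{t})\}\rsem=\mathsf{c}_i(\lsem\{\dot{t}\}\rsem)$ and then lift by commutation of $\mathsf{c}_i$ with union, whereas the paper carries out a direct two-case analysis on the full cylinder. Your explicit remark that the normal form is what guarantees that equalities transitively forced through column~$i$ survive the deletion is exactly the point the paper's case analysis is implicitly handling.

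For item (5) you actually go beyond the paper: the paper's proof of this theorem stops after item (4) and gives no argument for inner cylindrification. Your pigeonhole argument---a single $\dot{u}\in\dot{C}$ must dominate $t(i/a)$ for infinitely many $a$, forcing $\dot{u}(i)=\ast$ and ruling out any equality on column~$i$---is correct and is precisely the missing step. It is also worth noting that your observation that $\creversed_i$ does \emph{not} commute with union, so the single-tuple reduction genuinely fails here, makes clear why this case requires a different idea.
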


\medskip

\begin{proof}
$\;$

\begin{enumerate}
\item
$t\in \lsem\dot{d}_{ij}\rsem$
iff
$t\preceq (\ast,\ldots,\ast,(i=j))$
iff 
$t  \in  \{t\in\mathbb{D}^n \;:\; t(i)=t(j)\}$
iff
$t\in{d}_{ij}$.

\item
$t \in \lsem\dot{C}\cupdot\dot{D}\rsem$ 
iff 
$\exists\dot{u}\in\dot{C} :
t\preceq\dot{u}$ 
or
$\exists\dot{v}\in\dot{D} : 
t\preceq\dot{v}$
iff
$t\in \lsem\dot{C}\rsem$
or 
$t\in \lsem\dot{D}\rsem$
iff
$t \in \lsem\dot{C}\rsem 
\,\bigcup\;\;
\lsem\dot{D}\rsem$.

\item
Let $t \in \lsem\dot{C}\capdot\dot{D}\rsem$. 
Then there is a star-tuple
$\dot{t}\in\dot{C}\capdot\dot{D}$ such that
$t\preceq\dot{t}$,
which again means that
there are star-tuples
$\dot{u}\in\dot{C}$ and
$\dot{v}\in\dot{D}$, such that
$\dot{t} = \dot{u}\curlywedge\dot{v}$.
As a consequence 
$t\preceq\dot{u}$
and 
$t\preceq\dot{v}$,
which implies
$t\in \lsem\dot{C}\rsem$
and 
$t\in \lsem\dot{D}\rsem$,
that is,
$t \in \lsem\dot{C}\rsem 
\,\bigcap\;\;
\lsem\dot{D}\rsem$.
The proof for the other direction is similar.


\item
Let
$t\in\lsem\dot{\mathsf{c}}_i(\dot{C})\rsem$.
Then
there is a star-tuple
$\dot{t}\in\dot{\mathsf{c}}_i(\dot{C})$
such that
$t\preceq\dot{t}$.
This in turn means that there is a star-tuple
$\dot{u}\in\dot{C}$ such that
either 
$\dot{u}=\dot{t}(i/a)$ for some $a\in\mathbb{D}$,
or 
$\dot{u}(i)=*$ and 
$\dot{u}=\dot{t}$, except
possibly
$\dot{u}(n+1)\models\theta$ where
$\theta$ is a set of equalities
involving column~$i$,
and $\dot{t}(n+1)$ does not have any conditions on $i$.

\medskip
\noindent
{\em Case 1.}
$\dot{u}=\dot{t}(i/a)$ for some $a\in\mathbb{D}$.
Then
$\dot{t}(i/a)\in\dot{C}$
which means that there is a tuple
$u\in\lsem\dot{C}\rsem$ such that 
$u\preceq\dot{t}(i/a)$.
Since
$\lsem\dot{C}\rsem\subseteq\; 
\mathsf{c}_i(\lsem\dot{C}\rsem)$,
it follows that
$u\in\mathsf{c}_i(\lsem\dot{C}\rsem)$.
Suppose $u\neq t$. 
Then $u(j)\neq t(j)$ for some
$j\in\{1\ldots,n\}$.

If $j=i$,
then
$t=u(j/t(j))\in\mathsf{c}_j(\lsem\dot{C}\rsem)
=
\mathsf{c}_i(\lsem\dot{C}\rsem)$.

If $j\neq i$ and $\dot{t}(j)=*$
it means that $\dot{u}(j)=*$,
and thus $t=u(j/t(j))\in\lsem\dot{C}\rsem$,
which in turn entails that
$t\in\mathsf{c}_i(\lsem\dot{C}\rsem)$.
Otherwise, if
$\dot{t}(j)\neq *$, then $\dot{t}(j)\in\mathbb{D}$,
which means that $\dot{u}(j)\in\mathbb{D}$, and
$u(j) = t(j)$
after~all.

\medskip
\noindent
{\em Case 2.}
$\dot{u}(i)=*$ and (possibly) 
$\dot{u}(n+1)$ contains a set of equalities
say $\theta$,
involving column~$i$,
and $\dot{t}(n+1)$ does not have any conditions on $i$.

Suppose first that $t\models\theta$.
Then $t\preceq\dot{u}$, and consequently
$t\in\lsem\dot{C}\rsem
\subseteq\,
\mathsf{c}_i(\lsem\dot{C}\rsem)$.

Suppose then that $t\not\models\theta$.
If $t$ violates an equality $(i=j)\in\theta$
it must be that $\dot{t}(j)=\dot{u}(j)=*$,
and $\dot{t}$ and $\dot{u}$ have the same
conditions on column $j$.
Let $u$ be a tuple
such that $u\preceq\dot{u}$.
Then $t(i/u(i))\in\lsem\dot{C}\rsem$,
and hence
$t\in\mathsf{c}_i(\lsem\dot{C}\rsem)$.

\medskip
For the other direction,
let 
$t\in\mathsf{c}_i(\lsem\dot{C}\rsem)$.
Then there is a tuple $u\in\lsem\dot{C}\rsem$,
such that $t(i/u(i))=u$.
Hence there is a star-tuple 
$\dot{u}\in\dot{C}$, such that $u\preceq\dot{u}$
and $t(i/u(i))\preceq\dot{u}$.
If $t\not\preceq\dot{u}$
it is because $t(i)$ violates some condition in $\dot{u}(n+1)$.
Since all conditions involving column $i$ are deleted in
$\dot{\mathsf{c}}_i(\dot{C})$, it follows that
$\dot{\mathsf{c}}_i(\dot{C})$ must contain a star-tuple $\dot{v}$
obtained by outer cylindrification of $\dot{u}$.
Then clearly
$t\preceq \dot{v}$ and
$t\models\dot{v}(n+1)$.
Consequently
$t\in \lsem\dot{\mathsf{c}}_i(\dot{C})\rsem$.
\end{enumerate}
\end{proof}

\medskip

In order to show the equivalence of positive cylindric
star-algebra and positive cylindric set algebra
we need the concept of algebra expressions.

\begin{definition}
Let $\dot{\mathbf{C}} = 
(\dot{C}_1,\ldots,\dot{C}_m,\dot{d}_{ij})_{i,j}$
be a sequence of $n$-dimensional star-cylinders and
star-diagonals. We define the set of
{\em positive cylindric star algebra expressions}
$\mbox{SCA}^{+}_n$
and values of expressions
as in Definition~\ref{caexpr},
noting that 
$\dot{\mathsf{C}}_p(\dot{\mathbf{C}}) = \dot{C}_p$,
and $\dot{\mathsf{d}}_{ij}(\dot{\mathbf{C}}) = \dot{d}_{ij}$.
\end{definition}

We now have from Theorem \ref{ca2castar}

\begin{corollary}\label{corocacastar}
For every $\mbox{SCA}^+_n$-expression 
$\dot{E}$
and the corresponding 
$\mbox{CA}^+_n$ expression $E$,
it holds that
$$
\lsem\dot{E}(\dot{\mathbf{C}})\rsem
\;=\;
E(\lsem\dot{\mathbf{C}}\rsem)
$$
for every sequence of $n$-dimensional star-cylinders 
and star-diagonals
$\dot{\mathbf{C}}$.
\end{corollary}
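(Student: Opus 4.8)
The plan is to proceed by structural induction on the expression $\dot{E}$, with the corresponding $\mbox{CA}^+_n$ expression $E$ understood to be obtained from $\dot{E}$ by replacing each starred operator by its unstarred counterpart (and each atom $\dot{\mathsf{C}}_p$, $\dot{\mathsf{d}}_{ij}$ by $\mathsf{C}_p$, $\mathsf{d}_{ij}$). Throughout I would write $\mathbf{C} = \lsem\dot{\mathbf{C}}\rsem$ for the sequence $(\lsem\dot{C}_1\rsem,\ldots,\lsem\dot{C}_m\rsem,\lsem\dot{d}_{ij}\rsem)_{i,j}$, and note first that by Theorem \ref{ca2castar}(1) the diagonal components satisfy $\lsem\dot{d}_{ij}\rsem = d_{ij}$, so that $\mathbf{C}$ is a genuine sequence of cylinders and diagonals on which $E$ is defined in the sense of Definition \ref{caexpr}.

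For the base cases: if $\dot{E} = \dot{\mathsf{C}}_p$ then $\dot{E}(\dot{\mathbf{C}}) = \dot{C}_p$ and $E(\mathbf{C}) = \mathsf{C}_p(\mathbf{C}) = \lsem\dot{C}_p\rsem$, so the identity $\lsem\dot{E}(\dot{\mathbf{C}})\rsem = \lsem\dot{C}_p\rsem = E(\mathbf{C})$ holds directly by the definition of $\mathbf{C}$; and if $\dot{E} = \dot{\mathsf{d}}_{ij}$ the claim is precisely Theorem \ref{ca2castar}(1). For the inductive step there is then exactly one case per positive operator, and each would be a short calculation of the same shape: push $\lsem\,\cdot\,\rsem$ through the outermost operator using the matching clause of Theorem \ref{ca2castar}, then apply the induction hypothesis to the immediate subexpression(s). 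For example, in the case $\dot{E} = \dot{F}\cupdot\dot{G}$ one computes
\begin{eqnarray*}
\lsem\dot{E}(\dot{\mathbf{C}})\rsem
&=&
\lsem\dot{F}(\dot{\mathbf{C}})\cupdot\dot{G}(\dot{\mathbf{C}})\rsem \\
&=&
\lsem\dot{F}(\dot{\mathbf{C}})\rsem \,\bigcup\, \lsem\dot{G}(\dot{\mathbf{C}})\rsem \\
&=&
F(\mathbf{C}) \,\bigcup\, G(\mathbf{C})
\;=\;
E(\mathbf{C}),
\end{eqnarray*}
where the second equality is Theorem \ref{ca2castar}(2), the third is the induction hypothesis, and the last is the definition of expression evaluation. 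The cases $\dot{E} = \dot{F}\capdot\dot{G}$, $\dot{E} = \dot{\mathsf{c}}_i(\dot{F})$, and $\dot{E} = \dot{\creversed}_i(\dot{F})$ are identical in form, invoking parts (3), (4), and (5) of Theorem \ref{ca2castar} respectively. Since the positive algebra has no complement, these exhaust the operators and the induction closes.

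I expect this argument to be essentially bookkeeping: the genuine content was already discharged in Theorem \ref{ca2castar}, which verifies the commutation $\lsem\dot{\circ}(\cdots)\rsem = \circ(\lsem\cdots\rsem)$ for each \emph{single} operator. The only point deserving care — and the closest thing to an obstacle — is making the operator-by-operator correspondence between $\dot{E}$ and $E$ precise and applying it consistently, in particular ensuring that the atomic components line up (via the base cases and part (1)) so that the semantic sequence fed to $E$ is exactly $\mathbf{C} = \lsem\dot{\mathbf{C}}\rsem$. Once this correspondence is fixed, every internal node of the parse tree reduces to a direct appeal to the corresponding clause of Theorem \ref{ca2castar}, and no further estimates or constructions are needed.
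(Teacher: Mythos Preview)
Your proposal is correct and is exactly the approach the paper intends: the paper presents this corollary as an immediate consequence of Theorem~\ref{ca2castar} without writing out the induction, and your structural induction simply makes explicit the bookkeeping that the paper leaves to the reader. No additional ideas are needed beyond the operator-by-operator commutation already proved there.
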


\subsection{Adding negation}\label{adding-neg}

From here on we also allow conditions
of the form $(i\neq j)$, $(i\neq a)$,
for $a\in\mathbb{D}$ in star-cylinders,
which then will be called
{\em extended star-cylinders}.
Conditions of the form
$(i=j),(i\neq j)$ or $(i\neq a)$
will be called {\em literals},
usually denoted~$\ell$.
In other words, in an extended
$n$-dimensional star-cylinder
each (extended) star-tuple $\dot{t}$
has a (finite) set
of literals in position $n+1$.

Everything
else remains unchanged, except
for the inner cylindrification
will be redefined below,
along with the definition
of the complement operator.
It is an easy exercise to verify
that the proofs of parts 1 -- 4 of  Theorem \ref{ca2castar}
remain valid in the presence of literals.
Complement and inner cylindrification
will be defined below.

\begin{example}\label{ex-neg-inf2}
In Example~\ref{ex-neg-inf}
we were interested in the 
negative information
as well as positive information.
The instance from Example~\ref{ex-neg-inf}
can be formally represented
as the extended star-cylinder below.

\begin{center}
\begin{tabular}{lll}
$H^-$  &             &                              \\ \hline
Alice & Volleyball  & $\{\true\}$                       \\
Bob   & $\ast$      & $\{(2\neq \mbox{ Basketball})\}$  \\
Chris & $\ast$      & $\{\true\}$
\end{tabular}
\end{center}
\end{example}

For complement and inner cylindrification
we first introduce the notion of a {\em sieve}-cylinder.
\begin{definition}\label{sieve}
Let $\dot{\mathbf{C}}$ be a sequence of
$n$-dimensional extended star-cylinders and $A$ be the
set of constants appearing therein.
For $t\in(A\cup\{*\})^n$, define
$S_{t} = \{i : t(i)=*\}$ and
$SS_{t} = \{(i,j) : t(i)=t(j)=*\}$.
For each tuple $t\in(A\cup\{*\})^n$
and each subset $SS_t^+$ of $SS_t$,
form the star-tuple $\dot{t}$
with $\dot{t}(i)=t(i)$ for $i\in\{1,\ldots,n\}$,
and $\dot{t}(n+1) =$

$$
\bigcup_{i\in S_t} \{(i\neq a) : a\in A\}
\bigcup_{(i,j)\,\in SS_{t}^{+}} \{(i=j)\}
\bigcup_{(i,j)\,\in SS_t\setminus SS_{t}^{+}}\!\!\!\!\!\! \{(i\neq j)\}.
$$
$\dot{A}$ is the extended star-cylinder
of all such star-tuples $\dot{t}$,
and it is called the {\em sieve} of 
$\dot{\mathbf{C}}$.
\end{definition}
The sieve $\dot{A}$ has the following
useful properties.

\begin{lemma}
Let $\dot{C}$ be an $n$-dimensional star-cylinder and
$\dot{A} = \{\dot{t}_1,\ldots,\dot{t}_m\}$ its sieve.
Then 
\begin{enumerate}
\item
$\lsem\dot{A}\rsem = \mathbb{D}^n$ and
$\{\lsem\{\dot{t}_1\}\rsem,\ldots,\lsem\{\dot{t}_m\}\rsem\}$
is a partition of $\lsem\dot{A}\rsem$.
\item
If $\dot{t}\wedge\dot{u}
\in\dot{C}\capdot\dot{A}$
and $\dot{t}\wedge\dot{u}\neq\dot{t}_{\emptyset}$, then
$\dot{t}\wedge\dot{u}=\dot{u}$.
\end{enumerate}
\end{lemma}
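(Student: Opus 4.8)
The plan is to pin down, for each ordinary tuple $s\in\mathbb{D}^n$, the unique sieve star-tuple of $\dot{A}$ that represents it, and to read both parts off this correspondence. For Part~1 the inclusion $\lsem\dot{A}\rsem\subseteq\mathbb{D}^n$ is immediate, so I would concentrate on the reverse inclusion. Given $s\in\mathbb{D}^n$, I build a candidate tuple $\dot{t}$ following Definition~\ref{sieve}: set $t(i)=s(i)$ whenever $s(i)\in A$ and $t(i)=*$ otherwise, and take $SS_t^{+}=\{(i,j)\in SS_t : s(i)=s(j)\}$. A direct check shows $s$ satisfies every literal of $\dot{t}(n+1)$ --- the literals $(i\neq a)$ hold because a $*$-position of $t$ carries a value outside $A$, and the chosen equalities and inequalities among $*$-positions hold by construction of $SS_t^{+}$ --- so $s\preceq\dot{t}$ and $s\in\lsem\dot{A}\rsem$. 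Hence $\lsem\dot{A}\rsem=\mathbb{D}^n$.

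For the partition claim I would show this representative is unique. If $s\preceq\dot{u}$ for some $\dot{u}\in\dot{A}$, then the constant positions of $\dot{u}$ are forced to be exactly those $i$ with $s(i)\in A$: a position with $\dot{u}(i)\in A$ forces $s(i)=\dot{u}(i)$, while a position with $\dot{u}(i)=*$ carries $(i\neq a)$ for every $a\in A$ and hence forces $s(i)\notin A$. At the constant positions $\dot{u}(i)=s(i)$, and membership $(i,j)\in SS_u^{+}$ is forced by whether $s(i)=s(j)$, because $SS_u^{+}$ and its complement exhaust $SS_u$. Thus $\dot{u}$ is completely determined by $s$, so distinct cells $\lsem\{\dot{t}_k\}\rsem$ are disjoint; after discarding the empty cells of any unsatisfiable sieve tuples, the nonempty cells partition $\mathbb{D}^n$.

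For Part~2, write $\dot{w}=\dot{t}\curlywedge\dot{u}$ with $\dot{t}\in\dot{C}$ and $\dot{u}\in\dot{A}$, and assume $\dot{w}\neq\dot{t}_{\emptyset}$, i.e.\ $\dot{w}$ is satisfiable (recall that $A$ contains every constant of $\dot{\mathbf{C}}$, so each coordinate of $\dot{t}$ lies in $A\cup\{*\}$). A position-by-position reading of Definition~\ref{meet} gives $\dot{w}(i)=\dot{u}(i)$ for all $i$: where $\dot{u}(i)=a\in A$ the only non-bottom options are $\dot{t}(i)=a$ or $\dot{t}(i)=*$, both yielding $\dot{w}(i)=a$; where $\dot{u}(i)=*$ the literals $(i\neq a)$ exclude $\dot{t}(i)\in A$ (such a value would make $\dot{w}$ unsatisfiable), forcing $\dot{t}(i)=*$ and $\dot{w}(i)=*$. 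What remains --- and what I expect to be the crux --- is to show that the literals contributed by $\dot{t}(n+1)$ add nothing. The decisive fact is that $\lsem\{\dot{u}\}\rsem$ is an atom of the partition from Part~1: because it fixes the entire equality type among its $*$-positions and separates them from all of $A$, any literal of $\dot{t}$ that is consistent with $\dot{u}$ (as satisfiability of $\dot{w}$ guarantees) must either already appear in $\dot{u}(n+1)$ or hold identically on the whole cell. Hence $\lsem\{\dot{w}\}\rsem=\lsem\{\dot{u}\}\rsem$, and since $\dot{w}$ and $\dot{u}$ agree on all $n$ coordinates and are in normal form, $\dot{w}=\dot{u}$. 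Equivalently, one may finish semantically: $\lsem\{\dot{w}\}\rsem=\lsem\{\dot{t}\}\rsem\cap\lsem\{\dot{u}\}\rsem$ is a nonempty subset of the atom $\lsem\{\dot{u}\}\rsem$ on which $\dot{t}$'s constraints are constant, so the intersection must be all of $\lsem\{\dot{u}\}\rsem$.
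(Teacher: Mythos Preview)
Your proof is correct and follows essentially the same route as the paper. For Part~1 you build exactly the same sieve tuple from a given $s\in\mathbb{D}^n$ and argue uniqueness by showing $s$ determines both the constant positions and the choice of $SS^+$; the paper does the same, phrasing disjointness as a contradiction argument. For Part~2 the paper packages the argument as showing $\dot{u}\preceq\dot{t}$ (via a witness tuple $t\in\lsem\{\dot{t}\curlywedge\dot{u}\}\rsem$) and then invoking $\dot{u}\preceq\dot{t}\Rightarrow\dot{t}\curlywedge\dot{u}=\dot{u}$, whereas you argue the coordinates directly from satisfiability and then use the atom property from Part~1 to absorb the extra literals; these are the same computation in slightly different clothing.
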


\begin{proof}
To see that
$\lsem\dot{A}\rsem = \mathbb{D}^n$,
let $t$ be an arbitrary tuple in $\mathbb{D}^n$.
By construction, there are star-tuples
$\dot{t}\in\dot{A}$ such that
$\dot{t}(i)=t(i)$ if $t(i)\in A$, and
$\dot{t}(i)=*$ if $t(i)\in\mathbb{D}\setminus A$.
Since there is the subset
$SS^+_t = \{(i,j) : t(i)=t(j), \mbox{ and } t(i)\in\mathbb{D}\setminus A\}$
we see that for one of these $\dot{t}$-tuples it holds
that $t\preceq\dot{t}$.
The fact that 
$\lsem\{\dot{t}_i\}\rsem\cap\lsem\{\dot{t}_j\}\rsem=\emptyset$
whenever $i\neq j$ follows from the fact that
if there were a tuple $t$ in the intersection,
it would have to agree with $\dot{t}_i$ and $\dot{t}_j$
on all columns with values in $A$. But the 
$SS_t^+$ set used for $\dot{t}_i$ would be different
than the one used for $\dot{t}_j$, which means
that we cannot have both 
$t\preceq\dot{t}_i$ and $t\preceq\dot{t}_j$.

\medskip
For part 2, let
$\dot{t}\wedge\dot{u}
\in\dot{C}\capdot\dot{A}$
and $\dot{t}\wedge\dot{u}\neq\dot{t}_{\emptyset}$.
We claim that $\dot{u}\preceq\dot{t}$,
which would imply
$\dot{t}\curlywedge\dot{u}~=~\dot{u}$.

Since
$\dot{t}\wedge\dot{u}\neq\dot{t}_{\emptyset}$
there is a tuple 
$
t\in
\lsem\{\dot{t}\wedge\dot{u}\}\rsem
$.
For each $i\in\{1,\ldots,n\}$,
consider $\dot{u}(i)$.
If $\dot{u}(i)=a\in A$, then 
$t(i)=a$, which means that $\dot{t}(i)=a$
or $\dot{t}(i)=*$. Consequently $\dot{u}(i)\preceq\dot{t}(i)$.
If $\dot{u}(i)=*$ 
then $t(i)\in\mathbb{D}\setminus A$,
since $(i\neq a)\in\dot{u}(n+1)$
for all $a\in\mathbb{D}\setminus A$.
Since $t(i)\preceq\dot{t}(i)$,
and $\dot{t}(i)\in A\cup\{*\}$,
it follows that $\dot{t}(i)=*$.

Then let $(i=j)\in\dot{t}(n+1)$.
Since $\dot{t}\curlywedge\dot{u}\,(n+1)$
is satisfiable, and $\dot{u}(n+1)$
contains either $(i=j)$ or $(i\neq j)$,
it follows that $(i=j)\in\dot{u}(n+1)$.
We have now shown that
$\dot{u}\preceq\dot{t}$.
\end{proof}

\begin{lemma}\label{star-cyl-dom}
Let $\dot{C}$ and $\dot{D}$ be $n$-dimensional extended
star-cylinders and $\dot{A}$ their (common) sieve.
Then
$$
\lsem\dot{C}\rsem \,\subseteq\, \lsem\dot{D}\rsem
\;\mbox{ iff }\;\,
\dot{C} \capdot \dot{A} \,\preceq\, \dot{D} \capdot \dot{A}.
$$
\end{lemma}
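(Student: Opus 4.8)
The plan is to use the sieve $\dot{A}$ to reduce the semantic containment $\lsem\dot{C}\rsem\subseteq\lsem\dot{D}\rsem$ to a purely combinatorial comparison of two subsets of the sieve, on which the dominance order $\preceq$ coincides with inclusion. First I would observe that meeting with the sieve does not change the represented set: part~1 of the preceding lemma gives $\lsem\dot{A}\rsem=\mathbb{D}^n$, and since part~3 of Theorem~\ref{ca2castar} remains valid in the presence of literals, we obtain
$$\lsem\dot{C}\capdot\dot{A}\rsem=\lsem\dot{C}\rsem\cap\lsem\dot{A}\rsem=\lsem\dot{C}\rsem,$$
using $\lsem\dot{C}\rsem\subseteq\mathbb{D}^n$, and likewise $\lsem\dot{D}\capdot\dot{A}\rsem=\lsem\dot{D}\rsem$. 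Hence the left-hand side of the claimed equivalence is the same as $\lsem\dot{C}\capdot\dot{A}\rsem\subseteq\lsem\dot{D}\capdot\dot{A}\rsem$, and it suffices to show that for the sieved cylinders semantic containment coincides with $\preceq$.

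Second, I would pin down the shape of the sieved cylinders. By part~2 of the preceding lemma, every meet $\dot{t}\curlywedge\dot{u}$ occurring in $\dot{C}\capdot\dot{A}$ (with $\dot{t}\in\dot{C}$ and $\dot{u}\in\dot{A}$) is either the unsatisfiable tuple $\dot{t}_{\emptyset}$ or is equal to the sieve-tuple $\dot{u}$ itself. Since we maintain all star-cylinders in normal form, the unsatisfiable tuples are discarded, so $\dot{C}\capdot\dot{A}$ is literally a subset $U_C\subseteq\dot{A}$ of the sieve, and similarly $\dot{D}\capdot\dot{A}$ is a subset $U_D\subseteq\dot{A}$. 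Consequently $\lsem\dot{C}\capdot\dot{A}\rsem=\bigcup_{\dot{u}\in U_C}\lsem\{\dot{u}\}\rsem$ and $\lsem\dot{D}\capdot\dot{A}\rsem=\bigcup_{\dot{u}\in U_D}\lsem\{\dot{u}\}\rsem$.

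Third, I would invoke the partition property from part~1: the cells $\lsem\{\dot{t}_i\}\rsem$ are pairwise disjoint and, being satisfiable, nonempty. For disjoint nonempty cells a union over $U_C$ is contained in a union over $U_D$ \emph{iff} $U_C\subseteq U_D$. Finally $U_C\subseteq U_D$ is equivalent to $\dot{C}\capdot\dot{A}\preceq\dot{D}\capdot\dot{A}$: the forward direction is immediate (each $\dot{u}\in U_C$ lies in $U_D$ and dominates itself), and for the reverse direction, if $\dot{u}\in U_C$ is dominated by some $\dot{v}\in U_D$ then $\lsem\{\dot{u}\}\rsem\subseteq\lsem\{\dot{v}\}\rsem$, and disjointness of distinct sieve cells forces $\dot{u}=\dot{v}$, so $\dot{u}\in U_D$. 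Chaining these equivalences with the reduction of the first paragraph yields the lemma.

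I expect the main obstacle to be the third step, namely the verification that within the sieve, dominance of one cell tuple by another forces equality; this is precisely the point at which the negative literals introduced in this subsection matter, since for arbitrary extended star-cylinders $\preceq$ does \emph{not} capture $\subseteq$ (unlike the positive case of the earlier lemma). The sieve is what repairs this: it refines both cylinders onto a common family of mutually exclusive, nonempty cells, and on such a family set-inclusion, dominance, and semantic containment all collapse to the same relation. The only bookkeeping I anticipate is confirming that normal form genuinely eliminates the $\dot{t}_{\emptyset}$ tuples, so that $\dot{C}\capdot\dot{A}$ and $\dot{D}\capdot\dot{A}$ are honest subsets of $\dot{A}$ and the disjointness argument applies without degenerate matches.
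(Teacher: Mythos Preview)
Your proposal is correct and follows essentially the same approach as the paper: both arguments use part~1 of the preceding lemma (the sieve cells partition $\mathbb{D}^n$) together with part~2 (every nontrivial meet with a sieve tuple equals that sieve tuple) to reduce the semantic containment to a comparison of subsets of $\dot{A}$. Your organization is slightly more explicit---you name the subsets $U_C,U_D\subseteq\dot{A}$ and reduce to $U_C\subseteq U_D$---whereas the paper tracks a single concrete tuple $t$ through both sieved cylinders and invokes the partition property to force the two sieve tuples to coincide, but the substance is the same.
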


\begin{proof}
For the {\em if}-direction,
let $t\in\lsem\dot{C}\rsem
=
\lsem\dot{C}\capdot\dot{A}\rsem$.
Then there is a star-tuple
$\dot{t}\in\dot{C}\capdot\dot{A}$,
such that
$t\preceq\dot{t}$.
Since
$\dot{C} \capdot \dot{A} \,\preceq\, \dot{D} \capdot \dot{A}$
there is a star tuple
$\dot{u}\in\dot{D}\capdot\dot{A}$
such that 
$\dot{t}\preceq\dot{u}$.
Thus 
$t\in\lsem\dot{D}\capdot\dot{A}\rsem
=
\lsem\dot{D}\rsem$.

For the only-if direction,
let $\dot{t}_1\curlywedge\dot{u}_1\in\dot{C}\capdot\dot{A}$,
and $t\preceq\dot{t}_1$ and $t\preceq\dot{u}_1$.
Then 
$t\in\lsem\dot{C}\rsem
\subseteq\lsem\dot{D}\rsem
=
\lsem\dot{D}\capdot\dot{A}\rsem$,
so there are star-tuples $\dot{t}_2\in\dot{D}$
and $\dot{u}_2\in\dot{A}$
such that $t\preceq\dot{t}_2$
and $t\preceq\dot{u}_2$.
From part 1 of this lemma
it follows that
$\dot{u}_1=\dot{u}_2$,
and thus
$\dot{t}_1\curlywedge\dot{u}_1
=
\dot{u}_1=\dot{u}_2=\dot{t}_2\curlywedge\dot{u}_2$.
Consequently
$\dot{t}_1\curlywedge\dot{u}_1
\preceq
\dot{t}_2\curlywedge\dot{u}_2$.
\end{proof}

We can now define the desired operations.
\begin{definition}\label{dotneg}
Let $\dot{A}$ be the sieve of
$\dot{\mathbf{C}}$ and 
$\dot{C}$ be an extended star-cylinder 
in~$\dot{\mathbf{C}}$.
Then
\begin{enumerate}
\item
$
\boldsymbol{\dot{\neg}}\,\dot{C}
=
\{
\dot{t}\in\dot{A} \,:\,
\{\dot{t}\}\capdot\dot{C}=\{\dot{t}_{\emptyset}\}
\}
$.
and

\item
$
\hat{\creversed}_i(\dot{C})
=
\{\dot{t}\in\dot{C}\capdot\dot{A} \,:\,
(\dot{\mathsf{c}}_i(\{\dot{t}\,\}) \capdot \dot{A})
\preceq\,
(\dot{C} \capdot \dot{A})\}.
$
\end{enumerate}
\end{definition}

\begin{example}\label{exmp-neg=cyl}
Let $\dot{C} = \{(a,*,\{\true\})\}$.
Then $\dot{A}$ is shown in the
extended star-cylinder below,
and 
$\boldsymbol{\dot{\neg}}\,\dot{C}$
consists of the first, second,
and fourth tuples of $\dot{A}$.

\begin{center}
\begin{tabular}{ccl}
$\dot{A}$ &&\\
\hline
$\ast$ & $\ast$ & $\{(1\neq a),(2\neq a),(1=2)\}$ \\
$\ast$ & $\ast$ & $\{(1\neq a),(2\neq a),(1\neq 2)\}$ \\
$a$    & $\ast$ & $\{(2\neq a)\}$ \\
$\ast$ & $a$    & $\{(1\neq a)\}$ \\
$a$    & $a$    & $\{\true\}$
\end{tabular}
\end{center}
Now, let $\dot{C} =
\{(a,*,\{(2\neq a)\}),
(a,a,\{\true\})\}$.
Then $\dot{A}$ is as above,
and
$\hat{\creversed}_2(\dot{C}) = \dot{C}$
as the reader easily can verify.
\end{example}
	
We can now verify that the new operators work as expected.

\begin{theorem}\label{prop-neg-def}
Let $\dot{C}$ be an extended star-cylinder.
Then
\begin{enumerate}

\item
$
\lsem\boldsymbol{\dot{\neg}}\,\dot{C}\rsem
\;=\;
\overline{\lsem\dot{C}\rsem}
$

\item
$
\lsem
\hat{\creversed}_i(\dot{C})
\rsem
\;=\;
\creversed_i(\lsem\dot{C}\rsem).
$
\end{enumerate}
\end{theorem}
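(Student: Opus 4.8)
The plan is to exploit the two structural facts about the sieve $\dot{A}$ proved above: that the blocks $\lsem\{\dot{t}\}\rsem$ for $\dot{t}\in\dot{A}$ partition $\mathbb{D}^n=\lsem\dot{A}\rsem$, and that for $\dot{t}\in\dot{A}$ and $\dot{u}\in\dot{C}$ the meet $\dot{t}\curlywedge\dot{u}$ is either $\dot{t}_{\emptyset}$ or equal to $\dot{t}$. From these I first record the observation that each sieve block lies either entirely inside or entirely outside $\lsem\dot{C}\rsem$, and that $\lsem\{\dot{t}\}\rsem\subseteq\lsem\dot{C}\rsem$ holds exactly when $\dot{t}\preceq\dot{u}$ for some $\dot{u}\in\dot{C}$: the forward direction is immediate, and for the converse a tuple $t\in\lsem\{\dot{t}\}\rsem\cap\lsem\dot{C}\rsem$ witnesses $t\preceq\dot{t}\curlywedge\dot{u}$ for some $\dot{u}\in\dot{C}$, so $\dot{t}\curlywedge\dot{u}\neq\dot{t}_{\emptyset}$ and the meet property forces $\dot{t}\curlywedge\dot{u}=\dot{t}$, i.e.\ $\dot{t}\preceq\dot{u}$. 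In particular $\lsem\dot{C}\rsem$ is a union of sieve blocks.

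For part 1, by commutativity of $\curlywedge$ and the meet property, $\{\dot{t}\}\capdot\dot{C}=\{\dot{t}_{\emptyset}\}$ holds iff no $\dot{u}\in\dot{C}$ satisfies $\dot{t}\preceq\dot{u}$, which by the observation is equivalent to $\lsem\{\dot{t}\}\rsem\cap\lsem\dot{C}\rsem=\emptyset$. Hence $\boldsymbol{\dot{\neg}}\,\dot{C}$ collects precisely the sieve blocks disjoint from $\lsem\dot{C}\rsem$, and since the blocks partition $\mathbb{D}^n$, taking the union of their meanings yields $\lsem\boldsymbol{\dot{\neg}}\,\dot{C}\rsem=\mathbb{D}^n\setminus\lsem\dot{C}\rsem=\overline{\lsem\dot{C}\rsem}$.

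For part 2, I would first rewrite the defining condition of $\hat{\creversed}_i$. By Lemma \ref{star-cyl-dom} applied to $\dot{\mathsf{c}}_i(\{\dot{t}\})$ and $\dot{C}$, the condition $(\dot{\mathsf{c}}_i(\{\dot{t}\})\capdot\dot{A})\preceq(\dot{C}\capdot\dot{A})$ is equivalent to $\lsem\dot{\mathsf{c}}_i(\{\dot{t}\})\rsem\subseteq\lsem\dot{C}\rsem$, and by Theorem \ref{ca2castar}(4) this reads $\mathsf{c}_i(\lsem\{\dot{t}\}\rsem)\subseteq\lsem\dot{C}\rsem$. The nonempty star-tuples of $\dot{C}\capdot\dot{A}$ are, by the meet property and the observation, exactly the sieve blocks contained in $C:=\lsem\dot{C}\rsem$; thus $\lsem\hat{\creversed}_i(\dot{C})\rsem$ is the union of those sieve blocks $\dot{t}$ with $\lsem\{\dot{t}\}\rsem\subseteq C$ and $\mathsf{c}_i(\lsem\{\dot{t}\}\rsem)\subseteq C$. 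On a single block the operators are dual in the expected pointwise way: $\mathsf{c}_i(\lsem\{\dot{t}\}\rsem)\subseteq C$ holds iff every block-tuple $t$ satisfies $t(i/a)\in C$ for all $a$, i.e.\ iff $\lsem\{\dot{t}\}\rsem\subseteq\creversed_i(C)$ (both directions use that $(t(i/a))(i/t(i))=t$ lies in the block).

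The crux --- and the main obstacle --- is that $\creversed_i$ inspects whole $i$-lines, whereas the definition tests one block-tuple at a time; so I must show $\creversed_i(C)$ is itself a union of sieve blocks, so that membership of a single tuple propagates to its whole block. I would establish this by a genericity argument: any bijection $\pi$ of $\mathbb{D}$ fixing the constant set $A$ pointwise maps each sieve block onto itself (it preserves the $A$-entries, maps $\mathbb{D}\setminus A$ to itself, and respects the equality pattern), so the block-union $C$ satisfies $\pi(C)=C$; since $a\mapsto\pi(a)$ is a bijection of the quantified domain, $\pi$ commutes with $\creversed_i$, so $\creversed_i(C)$ is $\pi$-invariant as well. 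As any two tuples in one block are related by such a $\pi$, the set $\creversed_i(C)$ is closed under whole blocks. Given this, if $t\in\creversed_i(C)$ then its block lies in $\creversed_i(C)\subseteq C$, so by the duality $\mathsf{c}_i$ of that block lies in $C$ and the block is kept; conversely every kept block lies in $\creversed_i(C)$. As $\creversed_i(C)$ is the union of the blocks it contains, we conclude $\lsem\hat{\creversed}_i(\dot{C})\rsem=\creversed_i(C)$.
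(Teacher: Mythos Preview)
Your proof is correct and follows the same overall architecture as the paper (use the sieve partition, then invoke Lemma~\ref{star-cyl-dom} and Theorem~\ref{ca2castar}(4) to unwind the defining condition of $\hat{\creversed}_i$), but you organize and justify the steps differently in two places worth noting. For part~1 the paper proves $\overline{\lsem\dot{C}\rsem}\subseteq\lsem\boldsymbol{\dot{\neg}}\,\dot{C}\rsem$ by taking $t\in\overline{\lsem\dot{C}\rsem}$, explicitly constructing its sieve tuple $\dot{t}$, and arguing by contradiction that $\dot{t}\in\boldsymbol{\dot{\neg}}\,\dot{C}$; you instead observe once that $\lsem\dot{C}\rsem$ is a union of sieve blocks and that $\boldsymbol{\dot{\neg}}\,\dot{C}$ collects exactly the complementary blocks, which is more economical. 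For part~2 the forward directions coincide, but in the backward direction the paper simply asserts that for $t\in\creversed_i(\lsem\dot{C}\rsem)$ there is a $\dot{t}\in\dot{C}\capdot\dot{A}$ with $\mathsf{c}_i(\lsem\{\dot{t}\}\rsem)\subseteq\lsem\dot{C}\rsem$; this amounts to the claim that $\creversed_i(\lsem\dot{C}\rsem)$ is itself a union of sieve blocks, which the paper leaves unproved. Your genericity argument (bijections of $\mathbb{D}$ fixing $A$ pointwise stabilize each sieve block, preserve $\lsem\dot{C}\rsem$, and commute with $\creversed_i$, hence $\creversed_i(\lsem\dot{C}\rsem)$ is block-saturated) supplies exactly this missing justification and is a genuinely useful addition.
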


\begin{proof}
For part 1, 
it is easy to see
that $\lsem\boldsymbol{\dot{\neg}}\,\dot{C}\rsem
\cap \lsem\dot{C}\rsem = \emptyset$
which implies  $\lsem\boldsymbol{\dot{\neg}}\,\dot{C}\rsem
\subseteq
\overline{\lsem\dot{C}\rsem} $.
For a proof of the other direction of part 1,
for each tuple
$t\in\overline{\lsem\dot{C}\rsem}$,
we construct the star-tuple $\dot{t}$,
where
$\dot{t}(i)=t(i)$ if $t(i) \in A$,
and $\dot{t}(i)=\ast$ if $t(i) \not\in A$.
We then choose a subset
$SS_t^+$ of $SS_t$
where $(i,j)\in SS^+$
if and only if
$t(i)=t(j)$.
We insert in
$\dot{t}(n+1)$
the condition
$(i=j)$ for each $(i,j)\in SS_t^+$,
and $(i\neq j)$ for each $(i,j)\in SS_t\setminus SS_t^+$,
Then clearly
$t\in\lsem\{\dot{t}\}\rsem$ and
$\dot{t}\in\dot{A}$.
It remains to show that
$\dot{t}\in\boldsymbol{\dot{\neg}}\, \dot{C}$.
Towards a contradiction,
suppose that there is a star-tuple
$\dot{u} \in \dot{C}$ such that
$\dot{t} \curlywedge \dot{u} \neq \dot{t}_{\emptyset}$.
In other words,
$\dot{t}(n+1)\cup\dot{u}(n+1)$
is satisfiable.
Thus, whenever $\dot{t}(i)\in\mathbb{D}$,
we must have $\dot{u}(i)=\dot{t}(i)=t(i)\in A$.
Furthermore,
for each $(i,j)\in SS_t$
there is a literal involving
$i$ and $j$ in $\dot{t}(n+1)$.
Therefore $\dot{u}(n+1)$ can
consist of only a subset of these literals.
It follows that 
$t\preceq\dot{t}\preceq\dot{u}\in\dot{C}$,
meaning that $t\in\lsem\dot{C}\rsem$,
contradicting our initial assumption.

\bigskip
For a proof of part 2 of the theorem,
let $t \in \lsem\hat{\creversed}_i(\dot{C})\rsem$.
Then $t \in \lsem\{\dot{t}\in\dot{A} \,:\,
(\dot{\mathsf{c}}_i(\{\dot{t}\,\}) \capdot \dot{A})
\preceq\,
(\dot{C} \capdot \dot{A})\}\rsem$.
Therefore there is a star tuple
$\dot{t} \in \dot{A}$ such that,
$t \preceq \dot{t}$ and
$(\dot{\mathsf{c}}_i(\{\dot{t}\,\}) \capdot \dot{A})
\preceq\,
(\dot{C} \capdot \dot{A})$.
{\color{red}Lemma~\ref{star-cyl-dom}} then gives us
$\lsem \dot{\mathsf{c}}_i(\{\dot{t}\,\}) \rsem 
\subseteq\lsem 
\dot{C}
\rsem$,
and Theorem~\ref{ca2castar}, part 4
(which still holds for extended star-cylinders)
tell us that
$\lsem \dot{\mathsf{c}}_i(\{\dot{t}\,\}) \rsem
\;=\;
{\mathsf{c}}_i( \lsem \{\dot{t}\,\} \rsem)$
which implies
$\lsem \dot{\mathsf{c}}_i(\{\dot{t}\,\}) \rsem 
\subseteq\lsem 
\dot{C}
\rsem$.
By the definition 
of inner cylindrification in CA,
the last containment implies that
$\lsem \{\dot{t}\,\} \rsem 
\;\subseteq\;
\creversed_i(\lsem\dot{C}\rsem)$.
Consequently
 $t \in \creversed_i(\lsem\dot{C}\rsem)$.
	
For the other direction, let
$t \in \creversed_i(\lsem\dot{C}\rsem)$,
which implies
$\mathsf{c}_i(\{t\})
\subseteq\lsem\dot{C}\rsem$.
Then there is a star-tuple
 $\dot{t} \in \dot{C}\capdot\dot{A}$,
 such that
$\mathsf{c}_i(\{t\})
\subseteq
\mathsf{c}_i(\lsem\{\dot{t}\}\rsem)
\subseteq
\lsem\dot{C}\rsem$.
Consequently,
$\lsem \dot{\mathsf{c}}_i(\dot{t})\rsem
\subseteq
\lsem\dot{C}\rsem$,
which by Lemma~\ref{star-cyl-dom}
proves that
$(\dot{\mathsf{c}}_i(\{\dot{t}\,\}) \capdot \dot{A})
\preceq\,
(\dot{C} \capdot \dot{A})$.
Moreover, the first part of
Lemma~\ref{star-cyl-dom} implies that
$t \in \lsem\{\dot{t}\in\dot{A} \,:\,
(\dot{\mathsf{c}}_i(\{\dot{t}\,\}) \capdot \dot{A})
\preceq\,
(\dot{C} \capdot \dot{A})\}\rsem$.
\end{proof}
	
We can thus conclude
\begin{corollary}\label{corocacastar2}
For every $\mbox{SCA}_n$-expression
$\dot{E}$
and the corresponding
$\mbox{CA}_n$-expression $E$,
it holds that
$$
\lsem\dot{E}(\dot{\mathbf{C}})\rsem
\;=\;
E(\lsem\dot{\mathbf{C}}\rsem)
$$
for every sequence of $n$-dimensional
extended star-cylinders
and star-diagonals~$\dot{\mathbf{C}}$.
\end{corollary}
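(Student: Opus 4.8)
The plan is to prove the statement by structural induction on the $\mbox{SCA}_n$-expression $\dot{E}$, mirroring exactly the inductive definition of expressions and reducing each case to a per-operator commutation result that is already in hand. Concretely, the inductive claim is that $\lsem\dot{F}(\dot{\mathbf{C}})\rsem = F(\lsem\dot{\mathbf{C}}\rsem)$ holds for every subexpression $\dot{F}$ of $\dot{E}$, where $F$ denotes the corresponding $\mbox{CA}_n$-expression obtained by replacing each starred operator $\dot{\circ}$ by its counterpart $\circ$ (so $\boldsymbol{\dot{\neg}}$ corresponds to complement, $\hat{\creversed}_i$ to $\creversed_i$, and so on). This is the negation-enabled analogue of Corollary~\ref{corocacastar}, now over extended star-cylinders.

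For the base cases, if $\dot{E} = \dot{\mathsf{C}}_p$ then $\dot{E}(\dot{\mathbf{C}}) = \dot{C}_p$ and the claim is immediate from the definition of $\lsem\dot{\mathbf{C}}\rsem$; if $\dot{E} = \dot{\mathsf{d}}_{ij}$ the claim is precisely part~1 of Theorem~\ref{ca2castar}. For the inductive step I split on the outermost operator. When $\dot{E}$ is built by star-union, star-intersection, or outer cylindrification, I apply the inductive hypothesis to the immediate subexpression(s) and then invoke parts~2,~3, and~4 of Theorem~\ref{ca2castar}; as the text records, the proofs of these parts carry over verbatim to extended star-cylinders, so for instance in the intersection case $\lsem(\dot{F}\capdot\dot{G})(\dot{\mathbf{C}})\rsem = \lsem\dot{F}(\dot{\mathbf{C}})\rsem \cap \lsem\dot{G}(\dot{\mathbf{C}})\rsem = F(\lsem\dot{\mathbf{C}}\rsem)\cap G(\lsem\dot{\mathbf{C}}\rsem)$. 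The two genuinely new cases are complement and inner cylindrification: for $\dot{E} = \boldsymbol{\dot{\neg}}\,\dot{F}$ and $\dot{E} = \hat{\creversed}_i(\dot{F})$ I apply the inductive hypothesis to $\dot{F}$ and then invoke parts~1 and~2 of Theorem~\ref{prop-neg-def}, which applies because the evaluated subexpression $\dot{F}(\dot{\mathbf{C}})$ is itself an extended star-cylinder.

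The one point requiring care, and the main obstacle, is the status of the sieve $\dot{A}$ on which the complement and inner-cylindrification operators depend. By Definition~\ref{dotneg}, these operators are computed against the sieve of $\dot{\mathbf{C}}$, which is fixed by the finite set $A$ of constants occurring in the input sequence. I would observe that none of the SCA$_n$ operators introduces new constants, so every subexpression value $\dot{F}(\dot{\mathbf{C}})$ is an extended star-cylinder whose constants all lie in $A$. Consequently the single sieve $\dot{A}$ of $\dot{\mathbf{C}}$ is simultaneously the correct sieve for every subexpression, and Theorem~\ref{prop-neg-def}, though stated for a lone extended star-cylinder together with its sieve, may be applied uniformly at each inductive step without the sieve drifting between levels. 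With this observation secured, each case collapses to an already-proved operator identity, completing the induction.
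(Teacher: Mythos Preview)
Your proposal is correct and is precisely the argument the paper has in mind: the paper gives no explicit proof beyond ``We can thus conclude,'' which is shorthand for the structural induction you spell out, reducing each operator to the per-operator commutation identities of Theorem~\ref{ca2castar} (parts 1--4, which carry over to extended star-cylinders) and Theorem~\ref{prop-neg-def} (parts 1--2). Your treatment of the sieve is a genuine clarification that the paper glosses over; your observation that no SCA$_n$ operator introduces new constants, so the single sieve $\dot{A}$ of $\dot{\mathbf{C}}$ remains valid for every intermediate value $\dot{F}(\dot{\mathbf{C}})$, is exactly what is needed to make Definition~\ref{dotneg} and Theorem~\ref{prop-neg-def} applicable at each inductive step.
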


%

\section{Stored databases with\\ universal nulls (u-databases)}\label{stored}

We now show how to use the cylindric star-algebra
to evaluate FO-queries on stored databases
containing universal nulls.
Let $k$ be a positive integer.
Then a $k$-ary star-relation 
$\dot{R}$ is a finite set
of star-tuples of arity $k$.
In other words, a $k$-ary star-relation
is a star-cylinder of dimension $k$.
A sequence $\dot{\mathbf{R}}$
of star-relations 
(over schema $\mathbf{R}$)
is called a {\em stored database}.
Examples \ref{ex1} 
and \ref{ex-neg-inf2}
show stored databases.
Everything that is defined for
star-cylinders applies to
$k$-ary star-relations.
The exception is that no
operators from the cylindric star-algebra
are applied to star-relations.
To do that, we first need to expand the
stored database $\dot{\mathbf{R}}$.
\begin{definition}\label{expansion}
Let $\dot{t}$ be a $k$-ary star-tuple,
and  $n\geq k$.
Then
$
\dot{\mathsf{h}}^n(\dot{t})
$,
the \textit{$n$-expansion} of $\dot{t}$,
is the $n$-ary star-tuple~$\dot{u}$,
where
\vspace{-.2cm}
$$
\dot{u}(i) = \left\{
             \begin{array}{ll}
             \dot{t}(i)   & \mbox{\em if } i\in\{1,\ldots,k\}   \\
             *            & \mbox{\em if } i\in\{k+1,\ldots,n\} \\
             \dot{t}(k+1) & \mbox{\em if } i=n+1,
             \end{array}
             \right.   
$$

For a stored relation $\dot{R}$
and stored database $\dot{\mathbf{R}}$
we have 
\vspace{-.2cm}
\begin{eqnarray*}
\dot{\mathsf{h}}^n(\dot{R})
& = &
\{\dot{\mathsf{h}}^n(\dot{t})
\;:\;
\dot{t}\in\dot{R}\} \\
\dot{\mathsf{h}}^n(\dot{\mathbf{R}})
& = &
(\dot{\mathsf{h}}^n(\dot{R}_1),
\ldots,
\dot{\mathsf{h}}^n(\dot{R}_m),
\dot{\mathsf{h}}^n(\{(a,a) : a\in\mathbb{D}\})).
\end{eqnarray*}
\end{definition}
In other words,
$\dot{\mathsf{h}}^n(\dot{\mathbf{R}})$
is the sequence of star-cylinders
obtained by 
moving the conditions in column $k+1$
to column $n+1$, and
filling columns $k+1,\ldots,n$
with "*"'s in each $k$-ary relation.
Examples \ref{ex1} and \ref{ex2}
illustrate the expansion of star-relations.

On the other hand, 
a $k$-ary star-relation $\dot{R}$ can also be viewed
as a finite representative of the infinite
relation 
$\lsem\dot{R}\rsem 
= 
\{t\in\mathbb{D}^k :\, t\preceq\dot{t} \mbox{ for some } \dot{t}\in\dot{R}\}$,
and the stored database $\dot{\mathbf{R}}$
a finite representative of the infinite instance
$I(\dot{\mathbf{R}})$,
as in the following definition.
\begin{definition}\label{instance}
Let 
$
\dot{\mathbf{R}} 
=
(\dot{R}_1,\ldots,\dot{R}_m)
$ 
be a stored database.
Then the
instance defined by
$\dot{\mathbf{R}}$
is
\vspace{-.3cm}
$$
I(\dot{\mathbf{R}})
=
(\lsem\dot{R}_1\rsem,
\ldots,
\lsem\dot{R}_m\rsem,
\{(a,a) : a\in\mathbb{D}\}).
$$
\end{definition}
The instance and expansion of $\dot{\mathbf{R}}$
are related as follows.
\begin{lemma}\label{lemma2}
$\lsem\dot{\mathsf{h}}^n(\dot{\mathbf{R}})\rsem
\;=\;
\mathsf{h}^n(I(\dot{\mathbf{R}}))$.
\end{lemma}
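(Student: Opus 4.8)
The plan is to prove the identity component by component and, within each relational component, star-tuple by star-tuple. Both $\dot{\mathsf{h}}^n(\dot{\mathbf{R}})$ and $\mathsf{h}^n(I(\dot{\mathbf{R}}))$ are sequences indexed by the relation symbols (together with the equality/diagonal entry), so it suffices to establish, for each $k$-ary stored relation $\dot{R}_p$ with $k\le n$, the single identity $\lsem \dot{\mathsf{h}}^n(\dot{R}_p)\rsem = \mathsf{h}^n(\lsem \dot{R}_p\rsem)$, since by Definition~\ref{instance} the $p$-th relation of $I(\dot{\mathbf{R}})$ is $\lsem\dot{R}_p\rsem$. Because $\lsem\cdot\rsem$ is by definition the union of the meanings of the individual star-tuples, and because $\mathsf{h}^n(R) = R \times \mathbb{D}^{n-k}$ distributes over unions, I would further reduce to a single star-tuple $\dot{t} \in \dot{R}_p$ and prove $\lsem\{\dot{\mathsf{h}}^n(\dot{t})\}\rsem = \mathsf{h}^n(\lsem\{\dot{t}\}\rsem)$.

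First I would unwind the two sides for a single $k$-ary star-tuple $\dot{t}$. Writing $\dot{u} = \dot{\mathsf{h}}^n(\dot{t})$, Definition~\ref{expansion} gives $\dot{u}(i) = \dot{t}(i)$ for $i \le k$, $\dot{u}(i) = *$ for $k < i \le n$, and $\dot{u}(n+1) = \dot{t}(k+1)$. Unfolding the dominance relation of Definition~\ref{domin} for an ordinary tuple $s \in \mathbb{D}^n$, membership $s \in \lsem\{\dot{u}\}\rsem$ amounts to: $s(i) = \dot{t}(i)$ whenever $\dot{t}(i) \in \mathbb{D}$ (with $s(i)$ unconstrained when $\dot{u}(i) = *$), together with satisfaction of every literal in $\dot{u}(n+1) = \dot{t}(k+1)$. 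The key observation is that the padded columns $k < i \le n$ carry $*$ and no literal of $\dot{t}(k+1)$ mentions an index larger than $k$; hence those columns are genuinely free, while the first $k$ coordinates are constrained exactly as in $\dot{t}$. Therefore membership factors as $(s(1),\ldots,s(k)) \in \lsem\{\dot{t}\}\rsem$ and $(s(k+1),\ldots,s(n)) \in \mathbb{D}^{n-k}$ arbitrary, which is precisely $\lsem\{\dot{t}\}\rsem \times \mathbb{D}^{n-k} = \mathsf{h}^n(\lsem\{\dot{t}\}\rsem)$.

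With the per-star-tuple identity in hand, I would reassemble: taking the union over $\dot{t} \in \dot{R}_p$ yields $\lsem\dot{\mathsf{h}}^n(\dot{R}_p)\rsem = \mathsf{h}^n(\lsem\dot{R}_p\rsem)$ for each relational component. The equality/diagonal entry is handled the same way: the equality relation $\{(a,a):a\in\mathbb{D}\}$ is carried by the star-tuple $(*,*,\{(1=2)\})$, whose expansion $(*,\ldots,*,\{(1=2)\})$ has meaning $\{s \in \mathbb{D}^n : s(1)=s(2)\} = d_{12}$, matching $\mathsf{h}^n(\{(a,a):a\in\mathbb{D}\})$; the remaining diagonals $d_{ij}$ of Definition~\ref{hor1} arise from the swaps already built into that definition. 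This establishes the claimed equality of sequences.

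The hard part, such as it is, is not computational but lies entirely in checking that expansion and unfolding commute, i.e.\ that relocating $\dot{t}(k+1)$ to position $n+1$ never introduces a constraint on the new columns. Concretely, one must verify from Definition~\ref{domin} that for each literal of $\dot{u}(n+1)$ --- which involves only indices in $\{1,\ldots,k\}$ --- the induced condition on an ordinary $s$ depends only on $s(1),\ldots,s(k)$, so that the factorization $\lsem\{\dot{u}\}\rsem = \lsem\{\dot{t}\}\rsem \times \mathbb{D}^{n-k}$ is exact. Since the literals never reference a padded index, this reasoning is insensitive to whether they are equalities or the inequalities $(i \neq j), (i \neq a)$ permitted in the extended (negation) setting, so the same proof covers extended star-cylinders without modification.
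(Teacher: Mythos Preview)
Your proposal is correct and takes essentially the same approach as the paper, which simply states that the lemma ``directly follows from the definition of $\mathsf{h}^n$, $\dot{\mathsf{h}}^n$ and $\lsem\;\rsem$.'' You have spelled out in detail the definition-unfolding that the paper leaves implicit; the key observation---that the conditions $\dot{t}(k+1)$ only mention indices in $\{1,\ldots,k\}$, so the padded $*$-columns are genuinely unconstrained and the factorization $\lsem\{\dot{\mathsf{h}}^n(\dot{t})\}\rsem = \lsem\{\dot{t}\}\rsem \times \mathbb{D}^{n-k}$ goes through---is exactly what makes the one-line proof work.
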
\vspace{-.15cm}
\begin{proof}
	It directly follows from the definition of 
	$\mathsf{h}^n$, $\dot{\mathsf{h}}^n$ and $\lsem \rsem$.
\end{proof}

\newpage
We are now ready for our main result.
\begin{theorem}\label{main0}
For every FO$_n$-formula $\varphi$
there is an (extended) SCA$_n$ expression 
$\dot{E}_{\varphi}$,
such that for every stored 
database~$\dot{\mathbf{R}}$
$$
\mathsf{h}^n(\varphi^{I(\dot{\mathbf{R}})})
\;=\;
\lsem\dot{E}_{\varphi}(\dot{\mathsf{h}}^n(\dot{\mathbf{R}}))\rsem.
$$
\end{theorem}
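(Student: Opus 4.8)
The plan is to obtain $\dot{E}_{\varphi}$ by composing the three translation results already established, rather than redoing any induction on the structure of $\varphi$. First I would invoke Theorem~\ref{thrm1} to produce the CA$_n$-expression $E_{\varphi}$ satisfying $E_{\varphi}(\mathsf{h}^n(I)) = \mathsf{h}^n(\varphi^I)$ for every instance $I$. I then take $\dot{E}_{\varphi}$ to be the extended SCA$_n$-expression \emph{corresponding} to $E_{\varphi}$, that is, the expression obtained by replacing each Cylindric Set Algebra operator occurring in $E_{\varphi}$ by its dotted star-algebra counterpart: star-union for union, star-intersection for intersection, $\boldsymbol{\dot{\neg}}$ for complement, $\dot{\mathsf{c}}_i$ for $\mathsf{c}_i$, $\hat{\creversed}_i$ for $\creversed_i$, and $\dot{d}_{ij}$ for $d_{ij}$. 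Since the derived swapping operator $\mathsf{z}$ used inside $E_{\varphi}$ is itself built from these primitives, this substitution is unambiguous and yields a well-formed expression.

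The core of the argument is then a short chain of equalities. Applying Corollary~\ref{corocacastar2} with $\dot{\mathbf{C}} = \dot{\mathsf{h}}^n(\dot{\mathbf{R}})$ gives
$$
\lsem\dot{E}_{\varphi}(\dot{\mathsf{h}}^n(\dot{\mathbf{R}}))\rsem
\;=\;
E_{\varphi}(\lsem\dot{\mathsf{h}}^n(\dot{\mathbf{R}})\rsem).
$$
Next, Lemma~\ref{lemma2} rewrites the meaning of the expanded stored database as $\lsem\dot{\mathsf{h}}^n(\dot{\mathbf{R}})\rsem = \mathsf{h}^n(I(\dot{\mathbf{R}}))$, so the right-hand side becomes $E_{\varphi}(\mathsf{h}^n(I(\dot{\mathbf{R}})))$. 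Finally, instantiating Theorem~\ref{thrm1} at the particular instance $I = I(\dot{\mathbf{R}})$ turns this into $\mathsf{h}^n(\varphi^{I(\dot{\mathbf{R}})})$, which is exactly the claimed identity (read in reverse).

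The work is therefore entirely delegated to the earlier results, and the only point that needs care is the well-definedness of ``the corresponding SCA$_n$-expression'' in the presence of negation. I expect this to be the main, though modest, obstacle: I must check that Corollary~\ref{corocacastar2} genuinely covers every operator appearing in $E_{\varphi}$ --- including complement and both cylindrifications --- over \emph{extended} star-cylinders, and that the sieve used implicitly by $\boldsymbol{\dot{\neg}}$ and $\hat{\creversed}_i$ is the common sieve of the entire sequence $\dot{\mathsf{h}}^n(\dot{\mathbf{R}})$, so that the hypotheses of Theorem~\ref{prop-neg-def} and Lemma~\ref{star-cyl-dom} are met uniformly throughout the evaluation. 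Once this bookkeeping is settled, no further computation is required.
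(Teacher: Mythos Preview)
Your proposal is correct and follows essentially the same approach as the paper: the paper's proof is precisely the three-step chain
$\mathsf{h}^n(\varphi^{I(\dot{\mathbf{R}})}) = E_{\varphi}(\mathsf{h}^n(I(\dot{\mathbf{R}}))) = E_{\varphi}(\lsem\dot{\mathsf{h}}^n(\dot{\mathbf{R}})\rsem) = \lsem\dot{E}_{\varphi}(\dot{\mathsf{h}}^n(\dot{\mathbf{R}}))\rsem$,
invoking Theorem~\ref{thrm1}, Lemma~\ref{lemma2}, and Corollaries~\ref{corocacastar}/\ref{corocacastar2} in that order --- exactly your chain read in the opposite direction. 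Your additional remarks about the sieve and the well-definedness of the dotted translation are reasonable bookkeeping but go beyond what the paper spells out.
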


\noindent
{\bf Proof:}
$
\mathsf{h}^n(\varphi^{I(\dot{\mathbf{R}})})
\;=\; 
E_{\varphi}(\mathsf{h}^n(I(\dot{\mathbf{R}}))
\;=\; 
E_{\varphi}(\lsem\dot{\mathsf{h}}^n(\dot{\mathbf{R}})\rsem)
\;=\; 
\lsem\dot{E}_{\varphi}(\dot{\mathsf{h}}^n(\dot{\mathbf{R}}))\rsem.
$
The first equality follows from Theorem~\ref{thrm1},
the second from Lemma \ref{lemma2},
and the third from Corollaries 
\ref{corocacastar} and \ref{corocacastar2}.


\section{Adding existential nulls}\label{naive}

Let $\mathbb{N} = \{\bot_1,\bot_2,\ldots\}$
be a countable infinite set of 
{\em existential nulls}.
An instance $I$ where the relations
are over $\mathbb{D}\cup\mathbb{N}$,
is in the literature variably called a
naive table 
\cite{DBLP:journals/jacm/ImielinskiL84,
libkin-naive}
or a
generalized instance \cite{DBLP:conf/pods/FaginKPT09}.
In either case, such an instance
is taken to represent an
{\em incomplete instance},
i.e.\ a (possibly) infinite set of instances.
In this paper we follow the
model-theoretic approach of~\cite{DBLP:conf/pods/FaginKPT09}.
The elements in $\mathbb{D}$
represent known objects,
whereas elements in
$\mathbb{N}$ represent generic objects.
Each generic object could turn out to
be equal to one of the known objects,
to another generic object, or represent
an object different from all other objects. 
We extend our notation to include
$\univ(I)$, the {\em universe} of
instance $I$. So far we have assumed 
that $\univ(I)=\mathbb{D}$,
but in this section we allow instances
whose universe is any set between
$\mathbb{D}$ and
$\mathbb{D}\cup\mathbb{N}$.
We are lead to the following definitions.

\begin{definition}
Let $h$ be a mapping on 
$\mathbb{D}\cup\mathbb{N}$
that is identity on $\mathbb{D}$,
and let $I$ and $J$ be instances
(over $\mathscr{R}$),
such that $h(\univ(I)) = \univ(J)$.
We say that $h$ is a 
{\em possible world homomorphism}
from $I$ to $J$, if
$h(R^I_p)\subseteq R^J_p$ for all $p$, and
$h(\approx^I)=\; \approx^J$.
This is denoted $I\rightarrow_h J$. 
\end{definition}

\begin{definition}\label{rep}
Let $I$ be an instance with
$\mathbb{D}\subseteq\univ(I)$ $\subseteq\mathbb{D}\cup\mathbb{N}$.
Then the set of instances represented by
$I$ is
$$
\rep(I)
\;=\;
\{J \;:\; \boldsymbol{\exists}\, h\;\, s.t.\ I\rightarrow_h J\}.
$$
\end{definition}
We can now formulate the (standard) notion of a
{\em certain answer} to a query.\footnote{Here 
we of course assume that
valuations have range $\univ(J)$,
and that other details are adjusted
accordingly.}
By FO$^+$ below we mean the set of
all FO-formulas not using negation.
\begin{definition}
Let $I$ be an incomplete instance and 
$\varphi$ an FO$^+$-formula.
The certain answer to $\varphi$ on $I$ is
$$
\cert(\varphi,I)
\;\;=
\bigcap_{J\in\rep(I)} \varphi^J.
$$
\end{definition}

\noindent
{\bf Back to cylinders.}
We now extend positive 
$n$-dimen\-sional cylinders
to be subsets of $\mathbb{D}\cup\mathbb{N}$,
and use the notation
$\univ(\mathbf{C})$ and $\univ(C)$ with the
obvious meanings.
This also applies to the notation
$\mathbf{C}\rightarrow_h\mathbf{D}$,
and $\rep(\mathbf{C})$.
The operators of the positive cylindric set algebra
CA$^+$ remain the same,
except $\mathbb{D}$ is substituted with
$\univ(\mathbf{C})$ or $\univ(C)$,
i.e.\ we use naive evaluation.
For instance, the outer cylindrification now
becomes
\begin{small}
$$
\mathsf{c}_{i}(C) = \{t\in\univ(C)^n \;:\; t(i/x)\in C, 
\mbox{ for some } x\in\univ(C)\}.
$$
\end{small}
The crucial property of 
the positive cylindric star-algebra is the following.

\begin{theorem}\label{monotone2}
Let $E$ be an expression in 
CA$^+_n$,
and $\mathbf{C}$ and $\mathbf{D}$
sequences of $n$-dimensional 
naive cylinders and diagonals.
If 
$\mathbf{C}\rightarrow_h\mathbf{D}$
for some possible world homomorphism $h$, 
then
$E(\mathbf{C})\rightarrow_hE(\mathbf{D})$.
\end{theorem}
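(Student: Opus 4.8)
The plan is to prove the statement by structural induction on the $\mathrm{CA}^+_n$-expression $E$, establishing in each case the inclusion $h(E(\mathbf{C})) \subseteq E(\mathbf{D})$. The universe requirement built into a possible world homomorphism, namely $h(\univ(E(\mathbf{C}))) = \univ(E(\mathbf{D}))$, then comes essentially for free: every operator of $\mathrm{CA}^+$ keeps its value inside $\univ(\mathbf{C})^n$ (respectively $\univ(\mathbf{D})^n$), so the universes of the results are just $\univ(\mathbf{C})$ and $\univ(\mathbf{D})$, and $h(\univ(\mathbf{C})) = \univ(\mathbf{D})$ holds by hypothesis. Two elementary facts about $h$ will be used repeatedly: being a function, $h$ commutes with substitution, $h(t(i/a)) = h(t)(i/h(a))$; and $h$ preserves equalities in the forward direction, so $t(i)=t(j)$ implies $h(t)(i)=h(t)(j)$.

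I would dispatch the base cases first. If $E = \mathsf{C}_p$, then $E(\mathbf{C}) = C_p$ and $E(\mathbf{D}) = D_p$, and $h(C_p)\subseteq D_p$ is exactly the defining clause of $\mathbf{C}\rightarrow_h\mathbf{D}$. If $E = \mathsf{d}_{ij}$, then any tuple $t$ with $t(i)=t(j)$ is sent by $h$ to a tuple with $h(t)(i)=h(t)(j)$, so $h$ maps the diagonal of $\mathbf{C}$ into that of $\mathbf{D}$. The connective cases follow directly from the induction hypothesis together with the set-theoretic facts $h(X\cup Y) = h(X)\cup h(Y)$ and $h(X\cap Y)\subseteq h(X)\cap h(Y)$: for $E = F\bigcup G$ we get $h(F(\mathbf{C})\bigcup G(\mathbf{C})) = h(F(\mathbf{C}))\bigcup h(G(\mathbf{C})) \subseteq F(\mathbf{D})\bigcup G(\mathbf{D})$, and for $E = F\bigcap G$ the analogous chain with $\subseteq$ at the first step. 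For outer cylindrification $E = \mathsf{c}_i(F)$, given $t\in\mathsf{c}_i(F(\mathbf{C}))$ there is some $a\in\univ(\mathbf{C})$ with $t(i/a)\in F(\mathbf{C})$; the induction hypothesis then yields $h(t)(i/h(a)) = h(t(i/a))\in F(\mathbf{D})$, and since $h(a)\in\univ(\mathbf{D})$ this witnesses $h(t)\in\mathsf{c}_i(F(\mathbf{D}))$.

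The hard part, and the only case where monotonicity under a homomorphism could plausibly break down, is inner cylindrification $E = \creversed_i(F)$, since universal quantification is in general \emph{not} preserved by homomorphisms. The resolution is precisely the surjectivity clause in the definition of a possible world homomorphism, $h(\univ(\mathbf{C})) = \univ(\mathbf{D})$. Suppose $t\in\creversed_i(F(\mathbf{C}))$, i.e.\ $t(i/a)\in F(\mathbf{C})$ for \emph{every} $a\in\univ(\mathbf{C})$, and let $b\in\univ(\mathbf{D})$ be arbitrary. By surjectivity $b = h(a)$ for some $a\in\univ(\mathbf{C})$, so $h(t)(i/b) = h(t)(i/h(a)) = h(t(i/a))\in F(\mathbf{D})$ by the induction hypothesis. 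As $b$ ranged over all of $\univ(\mathbf{D})$, we conclude $h(t)\in\creversed_i(F(\mathbf{D}))$. This closes the induction, and I would emphasize that it is exactly here that the equality $h(\univ(\mathbf{C})) = \univ(\mathbf{D})$, rather than a mere inclusion, is indispensable: without surjectivity some $b\in\univ(\mathbf{D})$ might fail to be hit, and the universal condition over the larger target universe could fail even though it held over the source.
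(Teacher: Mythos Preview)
Your proposal is correct and follows essentially the same approach as the paper: a structural induction on $E$ establishing $h(E(\mathbf{C}))\subseteq E(\mathbf{D})$, with the same case analysis and the same use of surjectivity of $h$ onto $\univ(\mathbf{D})$ in the inner-cylindrification case. Your added remarks on the universe condition and on why surjectivity is indispensable for $\creversed_i$ are accurate and a bit more explicit than the paper's version, but the underlying argument is identical.
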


\begin{proof}
Suppose $\mathbf{C}\rightarrow_h\mathbf{D}$.
We show by induction on the structure of $E$
that 
$E(\mathbf{C})\rightarrow_hE(\mathbf{D})$.
\begin{itemize}
\item
For $E = \mathsf{C}_i$
and $E = \mathsf{d}_{ij}$
the claim follows directly from
the definition of a possible world
homomorphism.

\item
Let $t\in h(F\bigcup G\,(\mathbf{C}))
\;=\;
h(F(\mathbf{C}) \,\cup\, G(\mathbf{C}))
\;=\;
h(F(\mathbf{C})) \,\cup\; h(G(\mathbf{C}))$.
Then there is a tuple $s$ in
$F(\mathbf{C})$ or in $G(\mathbf{C})$
such that $t=h(s)$.
If $s$ is in, say, $F(\mathbf{C})$,
then,
since 
$F(\mathbf{C})\rightarrow_hF(\mathbf{D})$
and $F(\mathbf{D})\subseteq 
F\bigcup G\,(\mathbf{D})$,
it follows that
$t=h(s)\in F\bigcup G\,(\mathbf{D})$.

\item
Let $t\in h(F\bigcap G\;(\mathbf{C}))
\;=\;
h(F(\mathbf{C}) \,\cap\, G(\mathbf{C}))$.
Then there is a tuple $s$ in
$F(\mathbf{C})$ and 
a tuple $s'$ in $G(\mathbf{C})$
such that $t=h(s)=h(s')$.
Thus $h(s)\in h(F(\mathbf{C}))
\subseteq F(\mathbf{D})$,
and
$h(s')\in h(G(\mathbf{C}))
\subseteq G(\mathbf{D})$.
Consequently
$t=h(s)=h(s')\in 
F(\mathbf{D}) \,\cap\; G(\mathbf{D})
\;=\;
F\bigcap G\,(\mathbf{D})$.
	
\item
Let $t\in h(\mathsf{c}_i(F(\mathbf{C})))$.
Then there is an 
$s\in\mathsf{c}_i(F(\mathbf{C}))$,
such that $t=h(s)$.
Furthermore,
$s(i/x)\in F(\mathbf{C})$ for some 
$x\in\univ(\mathbf{C})$.
Then $h(s(i/x)) = h(s)(i/h(x))\in h(F(\mathbf{C}))$, 
for $h(x)\in h(\univ(\mathbf{C}))
=
\univ(\mathbf{D})$,
This means that 
$t=h(s)\in\mathsf{c}_i(F(\mathbf{D}))$.
	
\item
Let $t\in h(\creversed_i(F(\mathbf{C})))$.
Then there is an $s\in\creversed_i(F(\mathbf{C}))$,
such that $t=h(s)$.
Furthermore,
$s(i/x)\in F(\mathbf{C})$ for all $x\in\univ(\mathbf{C})$.
Then
$h(s(i/x)) = 
h(s)(i/h(x)) =
t(i/h(x))\in h(F(\mathbf{C}))$ 
for all 
$x\in\univ(\mathbf{C})$.
In other words,
$t(i/y)\in h(F(\mathbf{C}))
\;\subseteq\;
F(\mathbf{D})$ 
for all 
$y\in h(\univ(\mathbf{C})) = \univ(\mathbf{D})$.
Thus
$t\in\creversed_i(F(\mathbf{D}))$
\end{itemize}
\end{proof}

Also, for an $n$-dimensional naive cylinder $C$,
we denote the subset $C\cap\mathbb{D}^n$
by~$C_{\downarrow}$.
We now have our main "naive evaluation" theorem.
\begin{theorem}\label{naivemain}
Let $\mathbf{C}$ be a sequence of $n$-dimensional 
naive cylinders and diagonals, 
and let $E$ be an expression in~$\mbox{CA}^+_n$. 
Then
$$
E(\mathbf{C})_{\downarrow}
\;\;\;=
\bigcap_{\mathbf{D}\;\in\rep(\mathbf{C})}
E(\mathbf{D}).
$$
\end{theorem}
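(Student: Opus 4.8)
The plan is to prove the two inclusions separately, using Theorem~\ref{monotone2} for the easy \emph{soundness} inclusion $E(\mathbf{C})_{\downarrow} \subseteq \bigcap_{\mathbf{D}\in\rep(\mathbf{C})} E(\mathbf{D})$, and a ``canonical world plus universe isomorphism'' argument for the harder \emph{completeness} inclusion. For soundness I would take a null-free tuple $t\in E(\mathbf{C})_{\downarrow}$, i.e.\ $t\in E(\mathbf{C})\cap\mathbb{D}^n$, and fix an arbitrary $\mathbf{D}\in\rep(\mathbf{C})$, witnessed by a possible world homomorphism $h$ with $\mathbf{C}\rightarrow_h\mathbf{D}$. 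Theorem~\ref{monotone2} then gives $E(\mathbf{C})\rightarrow_h E(\mathbf{D})$, so $h(t)\in E(\mathbf{D})$; since $h$ is the identity on $\mathbb{D}$ and $t\in\mathbb{D}^n$, we have $h(t)=t$, whence $t\in E(\mathbf{D})$. As $\mathbf{D}$ was arbitrary, $t$ lies in the intersection.

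For completeness, fix $t\in\bigcap_{\mathbf{D}}E(\mathbf{D})$. First I would pin down null-freeness. Let $A$ be the finite set of constants occurring in $\mathbf{C}$ and $N\subseteq\mathbb{N}$ the nulls occurring in $\mathbf{C}$. Choose a possible world homomorphism $h_0$ mapping each null in $N$ to a distinct fresh constant $c_i\in\mathbb{D}\setminus A$, also chosen to avoid the finitely many entries of $t$, and put $\mathbf{D}^{*}=h_0(\mathbf{C})$. Then $\mathbf{D}^{*}\in\rep(\mathbf{C})$ with $\univ(\mathbf{D}^{*})=\mathbb{D}$, so $t\in E(\mathbf{D}^{*})\subseteq\mathbb{D}^n$ and $t$ is null-free. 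It remains to transport membership back to $E(\mathbf{C})$. The crucial device is a bijection $\beta:\univ(\mathbf{D}^{*})\to\univ(\mathbf{C})$ that sends each fresh $c_i$ back to its null, fixes $A$ pointwise, fixes every entry of $t$, and matches the remaining infinitely many generic elements of the two universes; such a $\beta$ exists precisely because both universes are infinite and differ only on a finite active part. By construction $\beta$ is an isomorphism of the structures, $\beta(\mathbf{D}^{*})=\mathbf{C}$, and $\beta(t)=t$.

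Since every $\mbox{CA}^+_n$ operator is defined uniformly over the universe, a routine structural induction shows that universe isomorphisms commute with expression evaluation, i.e.\ $\beta\big(E(\mathbf{D}^{*})\big)=E(\mathbf{C})$. Applying this to $t\in E(\mathbf{D}^{*})$ yields $t=\beta(t)\in E(\mathbf{C})$, and together with null-freeness this gives $t\in E(\mathbf{C})_{\downarrow}$, completing the inclusion.

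I expect the main obstacle to be this completeness direction in the presence of inner cylindrification (universal quantification), where a one-directional backward homomorphism fails: universally quantified cylindrifications are not preserved under arbitrary homomorphisms, and the natural backward map from a ground world would not be surjective onto $\univ(\mathbf{C})=\mathbb{D}\cup\mathbb{N}$. Using a genuine universe \emph{bijection} $\beta$ rather than a homomorphism is exactly what overcomes this, since a bijection preserves both outer and inner cylindrification (the ``for all'' range over $\univ(\mathbf{D}^{*})$ maps onto the ``for all'' range over $\univ(\mathbf{C})$). The one point requiring care is choosing the fresh constants to avoid the entries of $t$, so that $\beta$ can be taken to fix $t$; this is where finiteness of $t$ and of the active domain, combined with the infiniteness of $\mathbb{D}$, are used.
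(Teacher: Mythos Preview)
Your proof is correct, but you have worked considerably harder than necessary for the completeness direction. The paper's argument there is essentially a one-liner: the identity map is a possible world homomorphism $\mathbf{C}\rightarrow_{\mathrm{id}}\mathbf{C}$, so $\mathbf{C}\in\rep(\mathbf{C})$. Hence any $t$ in the intersection $\bigcap_{\mathbf{D}\in\rep(\mathbf{C})}E(\mathbf{D})$ lies in particular in $E(\mathbf{C})$, and since $t\in\mathbb{D}^n$ (from any ground world in $\rep(\mathbf{C})$, as you also observe), we get $t\in E(\mathbf{C})_{\downarrow}$ directly. No transport back from a canonical world is needed, because $\mathbf{C}$ is already one of the worlds being intersected over.

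Your concern about inner cylindrification failing under a backward homomorphism is legitimate in general, and your universe-bijection device does handle it; but the paper sidesteps the issue entirely by never needing a backward map. Your argument would become essential if $\rep(\mathbf{C})$ were defined more restrictively to contain only ground instances (universe $\mathbb{D}$), a convention common elsewhere in the incomplete-information literature. Under the paper's definition, where $\rep$ ranges over instances with arbitrary universes between $\mathbb{D}$ and $\mathbb{D}\cup\mathbb{N}$, the identity trick is available and the elaborate construction is overkill. So your route is more robust to changes in the definition of $\rep$, while the paper's is shorter given the definition actually in force.
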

\begin{proof}
Let $t\in E(\mathbf{C})_{\downarrow}\subseteq E(\mathbf{C})$,
and 
$\mathbf{D}\in\rep(\mathbf{C})$.
Since $\mathbf{C}\rightarrow_h \mathbf{D}$
for some possible world homomorphism~$h$,
by Theorem \ref{monotone2}, 
$h(t)\in E(\mathbf{D})$.
Since $t$ is all constants,
$h(t)=t$ for all $h$.
In other words,
$t\in E(\mathbf{D})$,
for all $\mathbf{D}\in\rep(\mathbf{C})$.

For the $\supseteq$-direction,
let $t\in\bigcap_{\mathbf{D}\;\in\rep(\mathbf{C})}
E(\mathbf{D})$.
Then $t\in\mathbb{D}^n$,
and for all possible world homomorphisms
$h$ it holds that
$t\in E(h(\mathbf{C}))$.
Since identity is a valid $h$,
is follows that
$t\in E(\mathbf{C})$,
and since $t$ is all constants
we have 
$t\in E(\mathbf{C})_{\downarrow}$.
\end{proof}

\subsubsection*{Mixing existential and universal nulls}

We want to achieve a representation mechanism
able to handle both universal nulls and naive
existential nulls. To this end we need the
following definition.

\begin{definition}
A naive $n$-dimensional (positive) star-cylinder is
a finite subset $\ddot{C}$ of
$(\mathbb{D}\cup\mathbb{N}\cup\{\ast\})^n \times\, \powerset(\Theta_n)$.
A naive diagonal is defined as
$\ddot{d}_{ij} = \{(x,x) : x \in \univ(\ddot{\mathbf{C}})\}$.
A sequence of $n$-dimensional star-cylinders and diagonals
is denoted~$\ddot{\mathbf{C}}$.
\end{definition}

After this we extend Definitions 
\ref{normalform},
\ref{domin}, 
and \ref{meet} 
in Section \ref{caandcastar} from
star-cylinders to naive star-cylinders,
by replacing $\mathbb{D}$ with
$\univ(\ddot{\mathbf{C}})$ or
$\univ(\ddot{C})$ everywhere.
Theorem \ref{normalform} will still hold,
but Corollary \ref{corocacastar} only
holds in the weakened form
as Corollary \ref{naivecoinit} below.
First we need two lemmas and a definition.


\begin{lemma}\label{homegacommute}
Suppose all possible world homomorphisms $h$
are extended by letting $h(\ast) = \ast$.
Let $\ddot{C}$ be an $n$-dimensional naive star-cylinder. 
Then
$$
h(\lsem\ddot{C}\rsem)
\,=\,
\lsem h(\ddot{C}) \rsem,
$$
for all possible world homomorphisms $h$.
\end{lemma}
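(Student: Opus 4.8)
The plan is to prove the stated set equality by establishing both inclusions, reducing everything to how the extended homomorphism $h$ interacts with the dominance relation $\preceq$ and with the equality conditions stored in position $n+1$. The first observation I would record is that $h$ leaves the condition component untouched, i.e.\ $h(\dot{u})(n+1) = \dot{u}(n+1)$, since the conditions $(i=j)$ are purely positional; moreover $h$ is the identity on $\mathbb{D}$ and sends $\ast$ to $\ast$, so $h$ maps value-positions to value-positions and star-positions to star-positions. Finally, by definition $\univ(h(\ddot{C})) = h(\univ(\ddot{C}))$, so $h$ is surjective onto the universe of the image cylinder; this surjectivity is what makes the harder inclusion go through.

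For the inclusion $h(\lsem\ddot{C}\rsem) \subseteq \lsem h(\ddot{C})\rsem$, I would take $t \in \lsem\ddot{C}\rsem$, fix a witnessing star-tuple $\dot{u}\in\ddot{C}$ with $t\preceq\dot{u}$, and show $h(t)\preceq h(\dot{u})$. Position by position: where $\dot{u}(i)=\ast$ domination is automatic, since $h(\dot{u})(i)=\ast$; where $\dot{u}(i)$ is a value (a constant or a null), $t\preceq\dot{u}$ forces $t(i)=\dot{u}(i)$, and applying $h$ yields $h(t)(i)=h(\dot{u})(i)$. For the conditions, each $(i=j)\in h(\dot{u})(n+1)=\dot{u}(n+1)$ gives $t(i)=t(j)$ (as $t$ is an ordinary tuple), hence $h(t)(i)=h(t)(j)$, so $h(t)$ satisfies $h(\dot{u})(n+1)$. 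Thus $h(t)\preceq h(\dot{u})\in h(\ddot{C})$, giving $h(t)\in\lsem h(\ddot{C})\rsem$.

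The main work is the reverse inclusion, where I must pull an ordinary tuple back along $h$. Given $s\in\lsem h(\ddot{C})\rsem$, I pick $\dot{u}\in\ddot{C}$ with $s\preceq h(\dot{u})$ and build a preimage $t$ with $h(t)=s$ and $t\preceq\dot{u}$. At value-positions, where $\dot{u}(i)$ equals some $a$, domination forces $s(i)=h(a)$, so I set $t(i)=a$. At star-positions, where $\dot{u}(i)=\ast$, I must choose $t(i)\in\univ(\ddot{C})$ with $h(t(i))=s(i)$; such a preimage exists precisely because $h$ is surjective onto $\univ(h(\ddot{C}))$.

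The one point requiring care, which I expect to be the main (though mild) obstacle, is consistency of these preimage choices with the equalities in $\dot{u}(n+1)$. Here I would invoke the normal-form assumption: a condition $(i=j)\in\dot{u}(n+1)$ forces $\dot{u}(i)=\dot{u}(j)$, so it never links a value-position to a star-position, and the positions tied together by equalities are either all fixed to a common value or all stars. Since $s\preceq h(\dot{u})$, the entries of $s$ within each block of star-positions are equal, so for each block I can select a single common preimage. This produces a $t$ with $t\preceq\dot{u}$ (both the value constraints and all equalities in $\dot{u}(n+1)$ hold) and $h(t)=s$, whence $t\in\lsem\ddot{C}\rsem$ and $s=h(t)\in h(\lsem\ddot{C}\rsem)$. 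Combining the two inclusions completes the proof.
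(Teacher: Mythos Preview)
Your proposal is correct and follows essentially the same route as the paper: both inclusions are proved by a position-wise case analysis on whether $\dot{u}(i)$ is a constant, a null, or $\ast$, using surjectivity of $h$ onto $\univ(h(\ddot{C}))$ to pick preimages at star-positions in the reverse inclusion. Your treatment is in fact more careful than the paper's, which omits any discussion of the condition component $\dot{u}(n+1)$; your use of the normal-form assumption to ensure consistent preimage choices across equality-linked star-positions fills exactly that gap.
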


\begin{proof}
Let $t\in h(\lsem\ddot{C}\rsem)$.
Then there exists a tuple 
$u\in\lsem\ddot{C}\rsem$, 
such that $t=h(u)$.
Also there exists a naive star-tuple
$\ddot{u}\in\lsem\ddot{C}\rsem$,
such that $u \preceq \ddot{u}$.
Now it is sufficient to show that
$t \preceq h(\ddot{u})$,
for all $i\in \{1,2, \ldots, n\}$.

If $\ddot{u}(i) \in \mathbb{D}$,
then $u(i) =\ddot{u}(i)$.
Also, homomorphisms are identity on constants
and therefore $h(u(i))=u(i)$, 
which implies $t(i)=u(i)$.

If $\ddot{u}(i)=\ast$,
then $u(i) \in \univ(\ddot{C})$.
As a result $h(u(i))\in \univ(h(\ddot{C}))$, which implies $t(i)\preceq
\ast=h(\ddot{u}(i))$,
since homomorphisms map stars to themselves.

If $\ddot{u}(i) \in \mathbb{N}$,
then $u(i) =\ddot{u}(i)$,
which implies $t(i)=h(u(i))= h(\ddot{u}(i))$.

\medskip

For the other direction,
let $t\in\lsem h(\ddot{C})\rsem$.
Then there exists a tuple 
$\ddot{t} \in h(\ddot{C})$
and a tuple
$\ddot{u} \in \ddot{C}$,
such that
$t \preceq \ddot{t}$
and
$\ddot{t} = h(\ddot{u})$.
Consequently, 
$t \preceq h(\ddot{u})$.
We show that we can find a tuple 
$u \in \lsem\ddot{u}\rsem$ such that
$h(u)=t$.

If $\ddot{u}(i) \in \mathbb{D}$,
then $u(i) =\ddot{u}(i)$.
Since $h$ is the  identity on constants
$h(\ddot{u}(i))=\ddot{u}(i)$, 
which implies $t(i)=u(i)$.

If $\ddot{u}(i)=\ast$,
then $h(\ddot{u}(i))=\ast$.
As $h$ is onto $\univ(h(\ddot{C}))$,
it follows that there is a value $\ddot{u}(i)\in \univ(\ddot{C})$,
such that $h(u(i))=t(i)$.

If $\ddot{u}(i) \in \mathbb{N}$,
then $u(i) =\ddot{u}(i)$,
which implies $t(i)= h(\ddot{u}(i))=h(u(i))$.
\end{proof}

\begin{definition}\label{coinit}
Let $\I$ and $\J$ be sets of instances.
We say that $\I$ and $\J$ are
\textit{co-initial},
denoted $\I\sim\J$,
if for each instance
$J\in\J$ there is an instance $I\in\I$,
and a possible world homomorphism $h$,
such that $I\rightarrow_hJ$,
and vice-versa.
\end{definition}

In the context of naive star-cylinders
Corollary \ref{corocacastar} 
will be weakened as follows.
\begin{corollary}\label{naivecoinit}
For every SCA$^+_n$-expression 
$\dot{E}$
and the corresponding
CA$_n$-expression $E$,
it holds that
$$
\rep(\lsem\dot{E}(\ddot{\mathbf{C}})\rsem)
\;\sim\;
E(\rep(\lsem\ddot{\mathbf{C}}\rsem)),
$$
for every sequence of $n$-dimensional 
naive star-cylinders 
and star-diagonals
$\ddot{\mathbf{C}}$.
\end{corollary}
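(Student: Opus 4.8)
The plan is to abbreviate $N=\lsem\ddot{\mathbf{C}}\rsem$ and to reduce the statement to two ingredients: a \emph{naive soundness identity} $\lsem\dot{E}(\ddot{\mathbf{C}})\rsem = E(N)$ (with $E$ evaluated naively, i.e.\ over $\univ$), and the behaviour of $E$ under possible world homomorphisms. For the identity I would re-run the structural induction behind Theorem~\ref{ca2castar}, replacing $\mathbb{D}$ by $\univ(\ddot{\mathbf{C}})$ throughout; the union, intersection and outer-cylindrification cases go through verbatim. Lemma~\ref{homegacommute}, combined with Corollary~\ref{corocacastar} applied to the ground homomorphic images $h(\ddot{\mathbf{C}})$, records that $\lsem\cdot\rsem$ commutes with the renaming of nulls to constants, which is what keeps the identity stable when nulls are present; the one delicate clause is inner cylindrification, discussed below.

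Granting the identity, I would write $\I=\rep(\lsem\dot{E}(\ddot{\mathbf{C}})\rsem)=\rep(E(N))$ and $\J=E(\rep(N))$, so the goal becomes $\rep(E(N))\sim E(\rep(N))$. The first half of co-initiality (Definition~\ref{coinit}) is in fact an inclusion $\J\subseteq\I$: if $\mathbf{D}\in\rep(N)$ with $N\rightarrow_h\mathbf{D}$, then Theorem~\ref{monotone2} gives $E(N)\rightarrow_h E(\mathbf{D})$, so $E(\mathbf{D})\in\rep(E(N))=\I$; hence every $J\in\J$ already lies in $\I$ and is reached from itself by the identity homomorphism.

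For the second half I would use the key observation that $N\in\rep(N)$, witnessed by the identity homomorphism, which is a legitimate possible world homomorphism even when $\univ(N)$ contains nulls. Consequently $E(N)\in E(\rep(N))=\J$, and by the soundness identity $E(N)=\lsem\dot{E}(\ddot{\mathbf{C}})\rsem$ is exactly the canonical generator of $\I$: every $I\in\I$ satisfies $\lsem\dot{E}(\ddot{\mathbf{C}})\rsem\rightarrow_h I$ for some $h$, so taking $J=E(N)\in\J$ yields $J\rightarrow_h I$. Together with the first half this establishes $\I\sim\J$.

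The main obstacle is the inner-cylindrification clause of the soundness identity. Here $\dot{\creversed}_i$ merely retains the star-tuples that carry $*$ in column $i$ with no equality on $i$, whereas $\creversed_i(\lsem\ddot{C}\rsem)$ quantifies over all of $\univ$; the two agree only because $\univ\supseteq\mathbb{D}$ is infinite, so any star-tuple that covers $t(i/x)$ for infinitely many $x$ must be $i$-free, which forces $\creversed_i$ to distribute over the finite union $\lsem\ddot{C}\rsem=\bigcup_{\dot{u}\in\ddot{C}}\lsem\{\dot{u}\}\rsem$. A secondary subtlety, and the reason the clean equality of Corollary~\ref{corocacastar} must be relaxed to co-initiality, is that $\rep(E(N))$ is in general strictly larger than $E(\rep(N))$, containing worlds not of the form $E(\mathbf{D})$; the membership $E(N)\in\J$ is precisely what bridges this gap.
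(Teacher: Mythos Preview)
Your proof is correct and follows essentially the same route as the paper: first use the naive analogue of Corollary~\ref{corocacastar} to replace $\lsem\dot{E}(\ddot{\mathbf{C}})\rsem$ by $E(\lsem\ddot{\mathbf{C}}\rsem)$, then establish $\rep(E(\mathbf{C}))\sim E(\rep(\mathbf{C}))$ by invoking Theorem~\ref{monotone2} for one direction and the membership $E(\mathbf{C})\in E(\rep(\mathbf{C}))$ (via $\mathbf{C}\in\rep(\mathbf{C})$) for the other. Your treatment is in fact more careful than the paper's on one point: the paper simply cites Corollary~\ref{corocacastar} for the soundness identity without commenting on why the argument survives the passage from $\mathbb{D}$ to $\univ(\ddot{\mathbf{C}})$, whereas you correctly isolate inner cylindrification as the only nontrivial clause and give the pigeonhole reason (infinitely many $x$ forcing an $i$-free star-tuple) that makes it go through.
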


\begin{proof}
We have
$
\rep(\lsem\dot{E}(\ddot{\mathbf{C}})\rsem)
\;\sim\;
\rep(E(\lsem\ddot{\mathbf{C}}\rsem))
$
from Corollary~\ref{corocacastar}.
It remains to show that
$
\rep(E(\lsem\ddot{\mathbf{C}}\rsem)
\;\sim\;
E(\rep(\lsem\ddot{\mathbf{C}}\rsem)).
$
Let's denote
$\lsem\ddot{\mathbf{C}}\rsem$ by $\mathbf{C}$.
We'll show that
$
\rep(E(\mathbf{C}))
\;\sim\;
E(\rep(\mathbf{C})).$


Let $D\in E(\rep(\mathbf{C}))$,
meaning that $D=E(\mathbf{C}')$ for some
$\mathbf{C}'\in\rep(\mathbf{C})$.
Then there is a possible world homomorphism 
$h$ such that $\mathbf{C}\rightarrow_h \mathbf{C}'$.
Theorem \ref{monotone2} then yields
$E(\mathbf{C}) \rightarrow_h E(\mathbf{C}') $,
and since 
$E(\mathbf{C})\in\rep(E(\mathbf{C}))$
one direction of Definition \ref{coinit}
is satisfied. 

Then let $D\in\rep(E(\mathbf{C}))$.
Then there is a possible world homeomorphism $h$,
such that $E(\mathbf{C}) \rightarrow_h D$.
Since $E(\mathbf{C})\in E(\rep(\mathbf{C}))$,
it means that the vice-versa
direction is also satisfied.
\end{proof}

\subsubsection*{Naive evaluation of existential nulls}

We extend Definition \ref{rep} from
infinite instances to
sequences of naive star-cylinders
as follows.
\begin{definition}
Let $\ddot{\mathbf{C}}$ be a sequence
of $n$-dimensional naive star-cylinders
and diagonals with 
$\univ(\ddot{\mathbf{C}})
=
\mathbb{D}\,\cup\mathbb{N}$.
Then the (infinite) set of 
(infinite) $n$-dimensional cylinders
represented by
$\ddot{\mathbf{C}}$ is
$$
\rep(\lsem\ddot{\mathbf{C}}\rsem)
\;=\;
\{
\mathbf{D} \,:\, \lsem\ddot{\mathbf{C}}\rsem\!\rightarrow_h \mathbf{D}
\}.
$$
\end{definition}
For a naive star-cylinder $\ddot{\mathbf{C}}$
we let 
$\ddot{\mathbf{C}}_{\downarrow}
=
\ddot{\mathbf{C}} \cap (\mathbb{D}\cup\{\ast\})^n$.
We note that obviously 
$\lsem\ddot{\mathbf{C}}_{\downarrow}\rsem
\;=\;
(\lsem\ddot{\mathbf{C}}\rsem)_{\downarrow}$,
and that if
$\rep(\ddot{\mathbf{C}}) \sim \rep(\ddot{\mathbf{D}})$,
then
$\ddot{\mathbf{C}}_{\downarrow} 
\;=\;
\ddot{\mathbf{D}}_{\downarrow}$.
We now have the main result of this
section.
\begin{theorem}\label{main2}
For every SCA$^+$-expression 
$\dot{E}$
and the corresponding
CA$^+$-expression $E$,
it holds that
$$
\lsem\dot{E}(\ddot{\mathbf{C}})_{\downarrow}\rsem
\;\;=\!
\bigcap_{\mathbf{D}\in\rep(\lsem\ddot{\mathbf{C}}\rsem)}
\!E(\mathbf{D}).
$$
for every sequence $\ddot{\mathbf{C}}$ of naive star-cylinders and diagonals.
\end{theorem}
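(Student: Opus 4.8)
The plan is to rewrite \emph{both} sides of the claimed identity as an intersection over a family of infinite cylinders, observe that these two families are the ones already shown co-initial in Corollary~\ref{naivecoinit}, and finish with a one-line lemma saying that co-initial families share the same all-constant part. First I would abbreviate $\mathbf{C} = \lsem\ddot{\mathbf{C}}\rsem$, the sequence of infinite naive cylinders represented by $\ddot{\mathbf{C}}$. The right-hand side is, by definition, the intersection taken over the family $E(\rep(\mathbf{C})) = \{E(\mathbf{D}) : \mathbf{D}\in\rep(\mathbf{C})\}$, so no rewriting is needed there beyond naming it.

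For the left-hand side I would use the remark preceding the theorem, namely $\lsem\ddot{\mathbf{K}}_{\downarrow}\rsem = (\lsem\ddot{\mathbf{K}}\rsem)_{\downarrow}$, applied to $\ddot{\mathbf{K}} = \dot{E}(\ddot{\mathbf{C}})$, to obtain $\lsem\dot{E}(\ddot{\mathbf{C}})_{\downarrow}\rsem = (\lsem\dot{E}(\ddot{\mathbf{C}})\rsem)_{\downarrow}$. Writing $K = \lsem\dot{E}(\ddot{\mathbf{C}})\rsem$ for the naive cylinder produced by the star-algebra, I would invoke Theorem~\ref{naivemain} with $E$ taken to be the identity expression $\mathsf{C}_1$; this yields $K_{\downarrow} = \bigcap\rep(K)$. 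Hence the left-hand side equals the intersection over the family $\rep(\lsem\dot{E}(\ddot{\mathbf{C}})\rsem)$.

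At this stage the two sides are the intersections over $\rep(\lsem\dot{E}(\ddot{\mathbf{C}})\rsem)$ and over $E(\rep(\lsem\ddot{\mathbf{C}}\rsem))$, which are precisely the families that Corollary~\ref{naivecoinit} declares co-initial, $\rep(\lsem\dot{E}(\ddot{\mathbf{C}})\rsem) \sim E(\rep(\lsem\ddot{\mathbf{C}}\rsem))$. The crux is then the following elementary lemma: if $\I \sim \J$ and each of $\bigcap\I$, $\bigcap\J$ contains only constant tuples, then $\bigcap\I = \bigcap\J$. Its proof is immediate from Definition~\ref{coinit}: given a constant tuple $t\in\bigcap\I$ and any $J\in\J$, co-initiality supplies some $I\in\I$ and a possible-world homomorphism $h$ with $I\rightarrow_h J$; since $t\in I$ we get $h(t)\in J$, and $h(t)=t$ because $t$ is all constants, whence $t\in J$. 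This gives $\bigcap\I\subseteq\bigcap\J$, and the reverse inclusion follows by symmetry.

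It then remains only to verify the hypothesis that both intersections are all-constant. Each family contains a purely-$\mathbb{D}$ member: mapping every existential null to a fixed constant produces a possible world over $\mathbb{D}$ in $\rep(\lsem\dot{E}(\ddot{\mathbf{C}})\rsem)$, and for $E(\rep(\lsem\ddot{\mathbf{C}}\rsem))$ one takes $E(\mathbf{D}_0)$ for such a purely-$\mathbb{D}$ world $\mathbf{D}_0$, using that the CA$^+$ operators never leave the universe, so $E(\mathbf{D}_0)\subseteq\mathbb{D}^n$. Thus both intersections lie in $\mathbb{D}^n$, the lemma applies, and the two sides coincide. I expect the main obstacle to be conceptual rather than computational: recognizing that passing to the constant part $(\cdot)_{\downarrow}$ is exactly the certain-answer operator $\bigcap\rep(\cdot)$ (via Theorem~\ref{naivemain}), and that co-initiality of families is all that certain answers can see; once this is in place the remaining steps are bookkeeping over already-established results.
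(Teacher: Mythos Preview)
Your proof is correct and uses the same three ingredients as the paper's argument: the remark $\lsem\ddot{K}_{\downarrow}\rsem=(\lsem\ddot{K}\rsem)_{\downarrow}$, Theorem~\ref{naivemain}, and Corollary~\ref{naivecoinit}. The only difference is in the order of assembly. The paper writes the chain
\[
\lsem\dot{E}(\ddot{\mathbf{C}})_{\downarrow}\rsem
=\lsem\dot{E}(\ddot{\mathbf{C}})\rsem_{\downarrow}
=(E(\lsem\ddot{\mathbf{C}}\rsem))_{\downarrow}
=\bigcap_{\mathbf{D}\in\rep(\lsem\ddot{\mathbf{C}}\rsem)} E(\mathbf{D}),
\]
invoking Theorem~\ref{naivemain} once, directly on $E(\lsem\ddot{\mathbf{C}}\rsem)$, after first identifying $\lsem\dot{E}(\ddot{\mathbf{C}})\rsem$ with $E(\lsem\ddot{\mathbf{C}}\rsem)$. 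You instead apply Theorem~\ref{naivemain} with the identity expression to rewrite the left side as $\bigcap\rep(\lsem\dot{E}(\ddot{\mathbf{C}})\rsem)$, and then match the two intersections via the co-initiality of Corollary~\ref{naivecoinit} together with your small lemma that co-initial families share the same all-constant intersection. Your route is a touch longer but has the virtue of relying only on the co-initiality statement of Corollary~\ref{naivecoinit} as written; the paper's middle equality really rests on the naive analogue of Corollary~\ref{corocacastar} (the exact equality $\lsem\dot{E}(\ddot{\mathbf{C}})\rsem=E(\lsem\ddot{\mathbf{C}}\rsem)$), even though it cites Corollary~\ref{naivecoinit}.
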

\begin{proof}
$
\lsem\dot{E}(\ddot{\mathbf{C}})_{\downarrow}\rsem
\;=\;
\lsem\dot{E}(\ddot{\mathbf{C}})\rsem_{\downarrow}
\;=\;
(E(\lsem\ddot{\mathbf{C}}\rsem))_{\downarrow}
\;=\;
\bigcap_{\;\mathbf{C}\in\rep(\lsem\ddot{\mathbf{C}}\rsem)}
E(\mathbf{C}).
$
The second equality follows from
Corollary \ref{naivecoinit},
the third from 
Theorem \ref{naivemain}.
\end{proof}

\subsubsection*{Stored databases with universal 
and existential nulls (ue-databases)}

We extend the Definitions 
\ref{expansion} 
and 
\ref{instance} 
of Section \ref{stored} from
stored databases to naive stored databases
(ue-databases)
by substituting $\mathbb{D}$ with
$\mathbb{D}\cup\mathbb{N}$ everywhere.
Lemma \ref{lemma2}
then becomes
\begin{lemma}\label{lemma3}
Let $\ddot{\mathbf{C}}$ be a stored
ue-database with universe
$\mathbb{D}\cup\mathbb{N}$.
Then
$\lsem\dot{\mathsf{h}}^n(\ddot{\mathbf{R}})\rsem
\;=\;
\mathsf{h}^n(I(\ddot{\mathbf{R}}))$.
\end{lemma}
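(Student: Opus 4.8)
The plan is to mirror the proof of Lemma~\ref{lemma2} verbatim, observing that the passage from stored databases to ue-databases consists solely of replacing $\mathbb{D}$ by $\mathbb{D}\cup\mathbb{N}$ uniformly in Definitions~\ref{hor1}, \ref{expansion}, and~\ref{instance}, as well as in the domination order $\preceq$ of Definition~\ref{domin} and in the semantic bracket $\lsem\cdot\rsem$. Since none of these definitions uses any property of the underlying set of values beyond its being a fixed set over which the $\ast$'s range, the commuting identity established for $\mathbb{D}$ carries over once $\mathbb{D}$ is read as $\mathbb{D}\cup\mathbb{N}$ throughout.

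Concretely, I would argue relation by relation. For each stored relation $\dot{R}_p$ of arity $k$, the left-hand side $\lsem\dot{\mathsf{h}}^n(\dot{R}_p)\rsem$ first expands every $k$-ary star-tuple $\dot{t}\in\dot{R}_p$ to the $n$-ary star-tuple $\dot{\mathsf{h}}^n(\dot{t})$ --- padding columns $k+1,\ldots,n$ with $\ast$ and relocating the literals of $\dot{t}(k+1)$ to position $n+1$ --- and then collects all $n$-tuples over $\mathbb{D}\cup\mathbb{N}$ dominated by the result. The right-hand side $\mathsf{h}^n(\lsem\dot{R}_p\rsem)$ first interprets $\dot{R}_p$ as the set of $k$-tuples over $\mathbb{D}\cup\mathbb{N}$ dominated by some $\dot{t}\in\dot{R}_p$, and then appends every member of $(\mathbb{D}\cup\mathbb{N})^{\,n-k}$ in the last $n-k$ columns. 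The key step is the pointwise observation that a tuple $(a_1,\ldots,a_n)\in(\mathbb{D}\cup\mathbb{N})^n$ satisfies $(a_1,\ldots,a_n)\preceq\dot{\mathsf{h}}^n(\dot{t})$ if and only if $(a_1,\ldots,a_k)\preceq\dot{t}$ (equivalently $(a_1,\ldots,a_k)\in\lsem\{\dot{t}\}\rsem$) while $(a_{k+1},\ldots,a_n)$ is arbitrary: the padding columns carry only $\ast$'s and no literal of $\dot{\mathsf{h}}^n(\dot{t})(n+1)$ mentions a column in $\{k+1,\ldots,n\}$, so they impose no constraint. This equivalence says exactly that the two orders of operations --- expand-then-interpret and interpret-then-expand --- yield the same set, so the two sides agree on $\dot{R}_p$. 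For the equality symbol, both $\dot{\mathsf{h}}^n$ and $\mathsf{h}^n$ send $\{(x,x):x\in\mathbb{D}\cup\mathbb{N}\}$ to the same diagonal cylinder, matching the remaining coordinate.

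I do not expect a genuine obstacle; the content is the routine commutation of two constructions, which is why Lemma~\ref{lemma2} was dispatched in one line. The only points requiring care are the bookkeeping of the equality literals in position $n+1$ through the expansion, and confirming that the existential nulls behave correctly: under $\preceq$ an element $\bot\in\mathbb{N}$ appearing explicitly in a star-tuple is matched only by itself, exactly as a constant is, whereas $\ast$ ranges over all of $\mathbb{D}\cup\mathbb{N}$. Thus nulls in $\mathbb{N}$ play the role of ordinary universe elements in the domination test, and every step of the Lemma~\ref{lemma2} argument goes through unchanged with $\mathbb{D}$ replaced by $\mathbb{D}\cup\mathbb{N}$.
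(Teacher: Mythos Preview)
Your proposal is correct and aligns with the paper's treatment: the paper does not even supply a separate proof for this lemma, instead presenting it as Lemma~\ref{lemma2} rewritten after the blanket substitution of $\mathbb{D}$ by $\mathbb{D}\cup\mathbb{N}$ in the relevant definitions. Your more detailed unpacking of the relation-by-relation commutation is a faithful elaboration of that one-line justification.
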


We first note that Theorem \ref{main0}
in the ue-setting becomes
\begin{theorem}\label{nine}
For every FO$^+_n$-formula $\varphi$
there is an SCA$^{+}_n$ expression
$\dot{E}_{\varphi}$,
such that 
$$
\lsem\dot{E}_{\varphi}(\dot{\mathsf{h}}^n(\ddot{\mathbf{R}}))\rsem
\;=\;
\mathsf{h}^n(\varphi^{I(\ddot{\mathbf{R}})})
$$
for every stored
ue-database~$\ddot{\mathbf{R}}$
\end{theorem}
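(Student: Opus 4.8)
The plan is to reproduce, over the enlarged universe $\mathbb{D}\cup\mathbb{N}$, the same three-step equality chain that proves Theorem~\ref{main0}. Concretely I would establish
$$
\lsem\dot{E}_{\varphi}(\dot{\mathsf{h}}^n(\ddot{\mathbf{R}}))\rsem
\;=\;
E_{\varphi}(\lsem\dot{\mathsf{h}}^n(\ddot{\mathbf{R}})\rsem)
\;=\;
E_{\varphi}(\mathsf{h}^n(I(\ddot{\mathbf{R}})))
\;=\;
\mathsf{h}^n(\varphi^{I(\ddot{\mathbf{R}})}),
$$
where $E_{\varphi}$ is the CA$^+_n$-expression associated with the FO$^+_n$-formula $\varphi$. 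The first equality is the equality form of Corollary~\ref{corocacastar} read over naive star-cylinders, the second is Lemma~\ref{lemma3}, and the third is Theorem~\ref{thrm1} applied with the domain taken to be $\mathbb{D}\cup\mathbb{N}$.

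First I would observe that Theorem~\ref{thrm1} transfers verbatim to instances whose universe is $\mathbb{D}\cup\mathbb{N}$. Its proof never exploits any particular feature of $\mathbb{D}$ beyond its being an abstract set over which the valuations range, and the swapping/cylindrification identities of Propositions~\ref{propz} and~\ref{propzz} that drive the induction hold for any such set. Under the convention of Section~\ref{naive} that $\mathbb{D}$ is everywhere replaced by $\univ$, we therefore have $E_{\varphi}(\mathsf{h}^n(I)) = \mathsf{h}^n(\varphi^{I})$ for every instance $I$ with $\univ(I)=\mathbb{D}\cup\mathbb{N}$, in particular for $I=I(\ddot{\mathbf{R}})$. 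The middle equality is then immediate from Lemma~\ref{lemma3}.

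The one step that needs genuine care is the first equality: that Corollary~\ref{corocacastar} survives as a set-equality---not merely up to co-initiality---once existential nulls are present, provided the universe is held fixed at $\mathbb{D}\cup\mathbb{N}$. The weakening recorded in Corollary~\ref{naivecoinit} concerns only the interaction of the algebra with the operator $\rep$; for a single fixed universe the nulls $\bot_1,\bot_2,\ldots$ are treated by every relevant definition exactly as fresh, pairwise-distinct constants. Indeed the relation $\preceq$, the meet $\curlywedge$, and each SCA$^+$ operator act uniformly on all members of $\univ(\ddot{\mathbf{C}})$, so the per-operator arguments in parts~1--5 of Theorem~\ref{ca2castar} go through unchanged with $\mathbb{D}$ replaced by $\mathbb{D}\cup\mathbb{N}$; an induction on the structure of $\dot{E}_{\varphi}$ then delivers $\lsem\dot{E}_{\varphi}(\ddot{\mathbf{C}})\rsem = E_{\varphi}(\lsem\ddot{\mathbf{C}}\rsem)$. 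Here it is essential that $\varphi$ is negation-free, so that $\dot{E}_{\varphi}$ uses no complement and only the positive inner cylindrification $\dot{\creversed}_i$, whose correspondence is monotone and untroubled by existential nulls.

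The hard part, then, is not any new calculation but drawing the right distinction: one must read Corollary~\ref{corocacastar} as the fixed-universe equality (which holds) rather than as the $\rep$-level statement of Corollary~\ref{naivecoinit} (which only holds up to $\sim$). Once that is in place the theorem follows by merely composing the three equalities, with no separate induction on $\varphi$ beyond what is already packaged inside Theorem~\ref{thrm1} and Corollary~\ref{corocacastar}.
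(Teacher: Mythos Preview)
Your proposal is correct and follows exactly the approach the paper intends: Theorem~\ref{nine} is stated without proof as ``Theorem~\ref{main0} in the ue-setting,'' and you have reconstructed the three-step chain of Theorem~\ref{main0} with Lemma~\ref{lemma2} replaced by Lemma~\ref{lemma3} and the supporting results read over the universe $\mathbb{D}\cup\mathbb{N}$. Your care in distinguishing the fixed-universe equality form of Corollary~\ref{corocacastar} (which still holds, and which the paper itself invokes in the proof of Corollary~\ref{naivecoinit}) from the $\rep$-level weakening of Corollary~\ref{naivecoinit} is exactly the right reading.
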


We also have
\begin{theorem}\label{sim}
For every FO$^+_n$-formula $\varphi$
there is a CA$^{+}_n$ expression
$\dot{E}_{\varphi}$,
such that 
$$
\rep(\lsem\dot{E}_{\varphi}(\dot{\mathsf{h}}^n(\ddot{\mathbf{R}}))\rsem)
\;\sim\;
\{\mathsf{h}^n(\varphi^{J}) \,:\, J\in\rep(\lsem\ddot{\mathbf{R}}\rsem)\}
$$
for every stored
ue-database~$\ddot{\mathbf{R}}$
\end{theorem}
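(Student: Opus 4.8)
The plan is to reduce the statement to a single co-initiality between applying the cylindric expression to the represented cylinders and taking query answers over the represented instances, and then to prove that co-initiality directly from Theorem~\ref{thrm1} and the monotonicity Theorem~\ref{monotone2}. Write $I_0 = I(\ddot{\mathbf{R}})$, so that $\rep(\lsem\ddot{\mathbf{R}}\rsem) = \rep(I_0)$, and set $\ddot{\mathbf{C}} = \dot{\mathsf{h}}^n(\ddot{\mathbf{R}})$. Let $\dot{E}_\varphi$ be the SCA$^+_n$-expression of Theorem~\ref{nine}, with corresponding CA$^+_n$-expression $E_\varphi$; this is available because $\varphi$ is negation-free, so $E_\varphi$ uses no complement. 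Corollary~\ref{naivecoinit} gives $\rep(\lsem\dot{E}_\varphi(\ddot{\mathbf{C}})\rsem) \sim E_\varphi(\rep(\lsem\ddot{\mathbf{C}}\rsem))$, and Lemma~\ref{lemma3} rewrites $\lsem\ddot{\mathbf{C}}\rsem = \mathsf{h}^n(I_0)$. Since $\sim$ is transitive (the composition of possible-world homomorphisms is again one, preserving identity on $\mathbb{D}$ and surjectivity), it then suffices to prove $E_\varphi(\rep(\mathsf{h}^n(I_0))) \sim \{\mathsf{h}^n(\varphi^J) : J \in \rep(I_0)\}$.

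For the direction sending each query answer to an algebraic one, I would take $K = \mathsf{h}^n(\varphi^J)$ with $I_0 \rightarrow_g J$ and first check that horizontal expansion commutes with possible-world homomorphisms, i.e.\ $\mathsf{h}^n(I_0) \rightarrow_g \mathsf{h}^n(J)$. The only points needing care are that the filler columns range over the whole universe (so one uses that $g$ is onto $\univ(J)$) and that the expanded diagonals $d_{ij}$ transport exactly (again from $g$ being onto and equality-preserving). Hence $\mathsf{h}^n(J) \in \rep(\mathsf{h}^n(I_0))$, so $E_\varphi(\mathsf{h}^n(J))$ lies in the left-hand set, and Theorem~\ref{thrm1} (applied over the universe $\univ(J)$) identifies it with $\mathsf{h}^n(\varphi^J) = K$; the identity homomorphism then witnesses this half of $\sim$.

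The opposite direction carries the real content, and the key idea is that $I_0$ is itself the canonical, most general representative: $I_0 \in \rep(I_0)$ via the identity, so $K := \mathsf{h}^n(\varphi^{I_0})$ belongs to the right-hand set. Given any $L = E_\varphi(\mathbf{D})$ with $\mathsf{h}^n(I_0) \rightarrow_h \mathbf{D}$, Theorem~\ref{monotone2} pushes $h$ through the positive expression to yield $E_\varphi(\mathsf{h}^n(I_0)) \rightarrow_h E_\varphi(\mathbf{D}) = L$, and Theorem~\ref{thrm1} rewrites $E_\varphi(\mathsf{h}^n(I_0)) = \mathsf{h}^n(\varphi^{I_0}) = K$. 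Thus $K \rightarrow_h L$, completing the co-initiality and, by the reduction above, the theorem.

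The main obstacle I anticipate is this second direction, and specifically the reliance on Theorem~\ref{monotone2}: it is exactly here that the positivity hypothesis on $\varphi$ is indispensable, since homomorphism-monotonicity fails once complement enters, and without it an arbitrary $\mathbf{D} \in \rep(\mathsf{h}^n(I_0))$---which may contain far more tuples than the image of $\mathsf{h}^n(I_0)$---could not in general be reached homomorphically from the image of a single $\mathsf{h}^n(\varphi^J)$. A secondary, more routine point is the claim that $\mathsf{h}^n$ commutes with possible-world homomorphisms, which I would isolate as a small lemma; the surjectivity of $h$ onto $\univ(\mathbf{D})$ is precisely what makes both the filler columns and the diagonals transport correctly.
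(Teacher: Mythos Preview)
Your proposal is correct and, since the paper states this theorem without proof, your argument is exactly the kind of detail-filling the paper leaves to the reader. The decomposition you chose---reduce via Corollary~\ref{naivecoinit} and Lemma~\ref{lemma3} to a co-initiality between $E_\varphi(\rep(\mathsf{h}^n(I_0)))$ and $\{\mathsf{h}^n(\varphi^J):J\in\rep(I_0)\}$, then verify both directions using Theorem~\ref{thrm1} and Theorem~\ref{monotone2}---is the natural one given the paper's scaffolding, and in fact mirrors almost verbatim the two-direction argument the paper gives inside the proof of Corollary~\ref{naivecoinit}. Your identification of the canonical element $I_0\in\rep(I_0)$ as the witness in the harder direction is the same move the paper makes there with $E(\mathbf{C})\in\rep(E(\mathbf{C}))$ and $E(\mathbf{C})\in E(\rep(\mathbf{C}))$. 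The two subsidiary points you flag---that $\mathsf{h}^n$ commutes with possible-world homomorphisms (needing surjectivity for the filler columns and the diagonals), and that Theorem~\ref{thrm1} must be read over the extended universe---are genuine but routine, and the paper silently assumes both throughout Section~\ref{naive}.
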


We have now arrived our main theorem for ue-databases.

\begin{theorem}
For every FO$^+_n$-formula $\varphi$
there is an SCA$^{+}_n$ expression
$\dot{E}_{\varphi}$,
such that
$$
\lsem\dot{E}_{\varphi}(\dot{\mathsf{h}}^n(\ddot{\mathbf{R}}))_{\downarrow}\rsem
\;=\;
\bigcap_{J\in\rep(\lsem\ddot{\mathbf{R}}\rsem)}
\mathsf{h}^n(\varphi^{J})
$$
for every stored
ue-database~$\ddot{\mathbf{R}}$
\end{theorem}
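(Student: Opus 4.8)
The plan is to assemble the statement purely from the machinery already in place, chaining Theorem~\ref{main2}, Theorem~\ref{thrm1} and Lemma~\ref{lemma3}, and then bridging a genuine mismatch between two families of possible worlds. I would abbreviate $\ddot{\mathbf{C}} = \dot{\mathsf{h}}^n(\ddot{\mathbf{R}})$, so the left-hand side is $\lsem\dot{E}_\varphi(\ddot{\mathbf{C}})_{\downarrow}\rsem$. Applying Theorem~\ref{main2} to the SCA$^+$-expression $\dot{E}_\varphi$ and its corresponding CA$^+$-expression $E_\varphi$ gives at once
$$
\lsem\dot{E}_\varphi(\ddot{\mathbf{C}})_{\downarrow}\rsem
\;=\;
\bigcap_{\mathbf{D}\in\rep(\lsem\ddot{\mathbf{C}}\rsem)}E_\varphi(\mathbf{D}).
$$
It then remains to identify this with the right-hand side $\bigcap_{J\in\rep(\lsem\ddot{\mathbf{R}}\rsem)}\mathsf{h}^n(\varphi^J)$. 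I would rewrite each term of the latter via the FO/CA correspondence of Theorem~\ref{thrm1}, namely $\mathsf{h}^n(\varphi^J)=E_\varphi(\mathsf{h}^n(J))$, and use Lemma~\ref{lemma3} to record that $\lsem\ddot{\mathbf{C}}\rsem=\mathsf{h}^n(I(\ddot{\mathbf{R}}))$. The task thus reduces to comparing two intersections of the monotone operator $E_\varphi$: one ranging over the represented cylinders $\rep(\mathsf{h}^n(I(\ddot{\mathbf{R}})))$, the other over the expanded represented instances $\{\mathsf{h}^n(J):J\in\rep(\lsem\ddot{\mathbf{R}}\rsem)\}$.

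For the inclusion $\subseteq$ I would note that $\dot{\mathsf{h}}^n$ commutes with onto possible-world homomorphisms, so $I(\ddot{\mathbf{R}})\rightarrow_h J$ entails $\mathsf{h}^n(I(\ddot{\mathbf{R}}))\rightarrow_h\mathsf{h}^n(J)$; hence every $\mathsf{h}^n(J)$ already lies in $\rep(\lsem\ddot{\mathbf{C}}\rsem)$, and intersecting over this larger family can only shrink the result. For the reverse inclusion $\supseteq$, given an arbitrary $\mathbf{D}\in\rep(\lsem\ddot{\mathbf{C}}\rsem)$ witnessed by $\mathsf{h}^n(I(\ddot{\mathbf{R}}))\rightarrow_h\mathbf{D}$, I would take $J=h(I(\ddot{\mathbf{R}}))\in\rep(\lsem\ddot{\mathbf{R}}\rsem)$ and observe that $\mathsf{h}^n(J)=h(\mathsf{h}^n(I(\ddot{\mathbf{R}})))\subseteq\mathbf{D}$ over the common universe $h(\mathbb{D}\cup\mathbb{N})$, with diagonals preserved. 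Applying Theorem~\ref{monotone2} with the \emph{identity} possible-world homomorphism (legitimate precisely because the two cylinders share this universe) yields $E_\varphi(\mathsf{h}^n(J))\subseteq E_\varphi(\mathbf{D})$, so any tuple lying in every $E_\varphi(\mathsf{h}^n(J))$ lies in every $E_\varphi(\mathbf{D})$. This matching step can alternatively be packaged as the observation that co-initial families (Definition~\ref{coinit}) yield identical certain answers, which is exactly the content combined in Theorem~\ref{sim} and Theorem~\ref{naivemain}.

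I expect the main obstacle to be precisely this last matching step: the cylinders in $\rep(\mathsf{h}^n(I(\ddot{\mathbf{R}})))$ are \emph{not} all expansions $\mathsf{h}^n(J)$, so the two index sets are genuinely distinct and only co-initial, never equal. The crux is verifying that $\mathsf{h}^n$ commutes with homomorphisms and that each represented cylinder contains the expansion of a represented instance over the same universe, so that monotonicity of positive queries closes the gap; restricting to the constant part via $(\cdot)_{\downarrow}$ is what forces the two certain answers to coincide despite the differing families. The residual bookkeeping --- that $h(I(\ddot{\mathbf{R}}))$ is a bona fide represented instance with $\mathbb{D}\subseteq\univ\subseteq\mathbb{D}\cup\mathbb{N}$, and that universes and the equality relation are respected throughout --- I expect to be routine.
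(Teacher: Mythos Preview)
Your proposal is correct and amounts to the same argument as the paper's, just entered from the opposite end. The paper invokes Theorem~\ref{sim} to obtain co-initiality of $\rep(\lsem\dot{E}_\varphi(\dot{\mathsf{h}}^n(\ddot{\mathbf{R}}))\rsem)$ with $\{\mathsf{h}^n(\varphi^J):J\in\rep(\lsem\ddot{\mathbf{R}}\rsem)\}$, then passes to the common intersection and to $(\cdot)_\downarrow$; you instead start from Theorem~\ref{main2} to land directly on $\bigcap_{\mathbf{D}\in\rep(\lsem\ddot{\mathbf{C}}\rsem)}E_\varphi(\mathbf{D})$ and then explicitly match the two index families via monotonicity. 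Your route is arguably more self-contained: the paper's Theorem~\ref{sim} is stated without proof, and the step from co-initiality to equal intersections is left implicit there, whereas your $\subseteq$ and $\supseteq$ arguments spell out exactly what is needed to fill those gaps. Either way the substance is the same co-initiality argument you correctly identify in your closing sentence.
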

\begin{proof}
We have
$
\{\mathsf{h}^n(\varphi^{J}) \,:\, J\in\rep(\lsem\ddot{\mathbf{R}}\rsem)\}
\;\sim\;$
$\rep(\lsem\dot{E}_{\varphi}(\dot{\mathsf{h}}^n(\ddot{\mathbf{R}}))\rsem)
$
by Theorem \ref{sim}.
Hence

\noindent
$
\bigcap_{J\in\rep(\lsem\ddot{\mathbf{R}}\rsem)}
\mathsf{h}^n(\varphi^{J})
\;=\;
\bigcap\;\rep(\lsem\dot{E}_{\varphi}(\dot{\mathsf{h}}^n(\ddot{\mathbf{R}}))\rsem)
\;=\;
$\\
$
(\lsem\dot{E}_{\varphi}(\ddot{\mathsf{h}}^n(\ddot{\mathbf{R}}))\rsem)_{\downarrow}
\;=\;
\lsem\dot{E}_{\varphi}(\dot{\mathsf{h}}^n(\ddot{\mathbf{R}}))_{\downarrow}\rsem
$.
\end{proof}

\section{Complexity}\label{complexity}
In this section we
provide complexity results
for Cylindric Star Algebra and
Star Cylinders.
We start by defining the size of 
extended
star-cylinders.
Let $\dot{\mathbf{C}}$ be a sequence of $n$-dimensional 
extended star-cylinders and diagonals.
By $|\dot{\mathbf{C}}|$ we
denote the larger of 
the number of star-tuples in 
$\dot{\mathbf{C}}$ and
the number of literals in the
star-tuple with the largest condition
column $n+1$. The same notation also
applies to sequences of naive 
star-cylinders~$\ddot{\mathbf{C}}$.

First, we investigate the complexity of
evaluating SCA$_{•}$-expressions
over naive star-cylinders
and then we characterize various membership
and containment problems.
It turns out $\dot{E}(\dot{\mathbf{C}})$
can be computed efficiently for 
$\mbox{SCA}_n$-expressions~$\dot{E}$,
even though universal quantification
and negation are allowed.

\begin{theorem}\label{CA-complexity}
Let $\dot{E}$ be a fixed $\mbox{SCA}_n$-expression,
and $\dot{\mathbf{C}}$
a sequence of $n$-dimen\-sional 
extended star-cylinders and diagonals.
Then there is a polynomial $\pi$,
such that
$|\dot{E}(\dot{\mathbf{C}})|
= 
\mathcal{O}(\pi(|\dot{\mathbf{C}}|))$.
Moreover, 
$\dot{E}(\dot{\mathbf{C}})$
can be computed in time 
$\mathcal{O}(\pi(|\dot{\mathbf{C}}|))$,
and if negation is not used in $\dot{E}$
this applies to naive star-cylinders 
$\ddot{\mathbf{C}}$
as well.
\end{theorem}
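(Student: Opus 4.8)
The plan is to proceed by structural induction on the fixed expression $\dot{E}$, proving for each operator that its output is bounded by a polynomial in the sizes of its (at most two) inputs and can be produced in polynomial time. Since $\dot{E}$ is fixed it contains only a constant number of operator nodes, and a composition of a constant number of polynomial size- and time-bounds is again a polynomial bound, which gives the desired $\pi$. The two parameters that govern everything are the dimension $n$, which is \emph{fixed}, and the set $A$ of constants occurring in $\dot{\mathbf{C}}$, whose size is at most $n\,|\dot{\mathbf{C}}|$. The first observation I would record is that no operator ever introduces a constant outside $A$ (the meet $\curlywedge$ of Definition~\ref{meet} reuses input constants, diagonals and cylindrifications introduce none, and the sieve uses only constants from $A$), so $|A|\le n\,|\dot{\mathbf{C}}|$ throughout the evaluation. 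Consequently the number of distinct literals that can appear in a condition column $n+1$ is polynomial: at most $O(n^{2})$ equalities $(i=j)$ and $O(n^{2})$ inequalities $(i\neq j)$, plus at most $n\,|A|$ inequalities $(i\neq a)$; and maintaining normal form (transitively closing equalities and testing satisfiability to detect $\dot{t}_{\emptyset}$) is polynomial, as already noted after Definition~\ref{normalform}.

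Next I would treat the positive operators one by one. Star-diagonals $\dot{d}_{ij}$ have constant size. Star-union $\cupdot$ is additive in the number of star-tuples and leaves condition columns untouched. Star-intersection $\capdot$ is the only positive operator that grows the tuple count, yielding $|\dot{C}|\cdot|\dot{D}|$ meets; each meet is computable in polynomial time and its condition column is the (re-normalised) union of the two input conditions, hence still of polynomial literal size. Outer cylindrification $\dot{\mathsf{c}}_i$ is applied tuple-wise and only deletes literals, so it never increases size, and the positive inner cylindrification $\dot{\creversed}_i$ is a selection that can only shrink its input. Thus every positive operator keeps the size fixed, adds, or multiplies two inputs, and all are computable in time polynomial in the input sizes.

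The main obstacle is the negation machinery, namely complement $\boldsymbol{\dot{\neg}}$ and the extended inner cylindrification $\hat{\creversed}_i$, both defined through the sieve $\dot{A}$ of Definition~\ref{sieve}. The crucial point is that although $\dot{A}$ ranges over all $t\in(A\cup\{\ast\})^{n}$ and, for each, over all subsets $SS_t^{+}$ of $SS_t$, the number of such star-tuples is at most $(|A|+1)^{n}\cdot 2^{O(n^{2})}$, which \emph{for fixed $n$} is $O(|A|^{n})$, a polynomial in $|\dot{\mathbf{C}}|$. Hence $\dot{A}$, and with it every intersection $\dot{C}\capdot\dot{A}$, can be materialised in polynomial time. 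Complement is then computed by testing, for each $\dot{t}\in\dot{A}$, whether $\{\dot{t}\}\capdot\dot{C}=\{\dot{t}_{\emptyset}\}$, a polynomial check per sieve tuple, and its output is a subset of $\dot{A}$, hence of polynomial size. For $\hat{\creversed}_i$ the only extra ingredient is the containment test $(\dot{\mathsf{c}}_i(\{\dot{t}\})\capdot\dot{A})\preceq(\dot{C}\capdot\dot{A})$; this is exactly domination between two polynomially-sized star-cylinders, and by Definition~\ref{domin} the relation $\preceq$ is a direct syntactic comparison, so each test runs in polynomial time, while its soundness is guaranteed by Lemma~\ref{star-cyl-dom}. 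Since $\dot{A}$ is polynomial and each of the $O(|A|^{n})$ tuples of $\dot{C}\capdot\dot{A}$ triggers one such polynomial test, the whole operator is polynomial.

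Finally I would assemble the induction. Each of the constantly many nodes of $\dot{E}$ produces an output whose size is polynomial in its inputs' sizes, and whose computation adds only polynomial extra work; composing a constant number of such bounds yields a single polynomial $\pi$ with $|\dot{E}(\dot{\mathbf{C}})|=O(\pi(|\dot{\mathbf{C}}|))$ and total running time $O(\pi(|\dot{\mathbf{C}}|))$. For the concluding claim, when $\dot{E}$ is negation-free only the operators $\cupdot,\capdot,\dot{\mathsf{c}}_i,\dot{\creversed}_i$ occur; none of these consults the sieve or the ambient universe, being purely syntactic manipulations of the finitely many star-tuples. Replacing $\mathbb{D}$ by $\univ(\ddot{\mathbf{C}})=\mathbb{D}\cup\mathbb{N}$ therefore changes nothing in the bookkeeping, so the identical polynomial bounds hold for naive star-cylinders $\ddot{\mathbf{C}}$ as well.
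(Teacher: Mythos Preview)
Your proposal is correct and follows essentially the same approach as the paper: a structural induction over the finitely many operators of the fixed expression, using that $n$ is fixed to bound the sieve $\dot{A}$ polynomially and hence the complement and extended inner cylindrification. If anything, your treatment is slightly more careful than the paper's, in that you explicitly argue no new constants are introduced and give a cleaner bound $(|A|+1)^{n}\cdot 2^{O(n^{2})}$ on $|\dot{A}|$.
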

\begin{proof}
Since $\dot{E}$ is fixed
it is sufficient to prove the first
claim for each operator separately. 
Note that since $\dot{E}$ is fixed,
it follows that $n$ is also fixed. 
\begin{enumerate}
\item
If $\dot{E}(\dot{\mathbf{C}}) = \dot{\mathsf{C}}_p(\dot{\mathbf{C}})$, 
then 
$|\dot{E}(\dot{\mathbf{C}})|
\,=\,
|\dot{C}_p|
\,\leq\,
|\dot{\mathbf{C}}|
\,=\, 
\mathcal{O}(\pi(|\dot{\mathbf{C}}|))$.
	
\item
If $\dot{E}(\dot{\mathbf{C}}) 
= 
\dot{\mathsf{d}}_{ij}(\dot{\mathbf{C}})$,
then 
$|\dot{E}(\dot{\mathbf{C}})|
\,=\,
\mathcal{O}(|\dot{\mathbf{C}}|) 
\times
\mathcal{O}(1) 
\,=\, 
\mathcal{O}(\pi(|\dot{\mathbf{C}}|))$.

\item
If 
$
\dot{E}(\dot{\mathbf{C}})  
= 
\dot{\mathsf{C}}_p(\dot{\mathbf{C}})\cup
\dot{\mathsf{C}}_q(\dot{\mathbf{C}}),
$ 
then
$|\dot{E}(\dot{\mathbf{C}})|
\,\leq\, 
|\dot{\mathbf{C}}| 
\,=\, 
\mathcal{O}(\pi(|\dot{\mathbf{C}}|))$.
	
\item
If $\dot{E}(\dot{\mathbf{C}})  
= 
\dot{\mathsf{C}}_p(\dot{\mathbf{C}})\capdot
\dot{\mathsf{C}}_q(\dot{\mathbf{C}})
$,
then the number of tuples 
in $\dot{E}(\dot{\mathbf{C}})$   
is at most
$|\dot{\mathbf{C}}|^2$,
and each tuple in the output can have a condition of 
length at most 
$2\cdot|\dot{\mathbf{C}}|$.
As a result,
$|\dot{E}(\dot{\mathbf{C}})|
\,\leq\, 
2\cdot|\dot{\mathbf{C}}|^3  
\,=\,
\mathcal{O}(\pi(|\dot{\mathbf{C}}|))$.

\item
If 
$
\dot{E}(\dot{\mathbf{C}})  
= 
\dot{\mathsf{c}}_i(\mathsf{C}_p(\dot{\mathbf{C}}))$, then 
$|\dot{E}(\dot{\mathbf{C}})|
\,\leq\, 
|\dot{C}_p|
\,\leq\, 
|\dot{\mathsf{\mathbf{C}}}|
\,=\, 
\mathcal{O}(\pi(|\dot{\mathbf{C}}|))$.
		
\item
For the case
$\dot{E}(\dot{\mathbf{C}})  
= 
\dot{\cdown}_i(\mathsf{C}_p(\dot{\mathbf{C}}))$
we note that
$\dot{\cdown}_i(\mathsf{C}_p(\dot{\mathbf{C}}))
\subseteq 
(\mathsf{C}_p(\dot{\mathbf{C}}) \capdot\dot{A})
\subseteq 
\dot{A}$. 
We can construct the star-tuples in $\dot{A}$
by iterating over the star-tuples in $\dot{C}_p$
and using the constants in $A$. This means that
$|\dot{A}|=
(n \times (|A|))+ (2^n + |A|) \leq
\mathcal{O}(1) \times \mathcal{O}(|\dot{\mathbf{C}}|)
+ \mathcal{O}(1) \times \mathcal{O}(|\dot{\mathbf{C}}|)
=
\mathcal{O}(\pi(|\dot{\mathbf{C}}|))$.
Note that $n$ is the dimensionality of
$\dot{\mathbf{C}}$ and is a constant.
		
\item 
If 
$\dot{E}(\dot{\mathbf{C}})  
= 
\boldsymbol{\dot{\neg}}(\mathsf{C}_p(\dot{\mathbf{C}}))$,
then similar to the inner cylindrification
we have
$\boldmath{\dot{\neg}}\,\dot{C}_p \subseteq \dot{A}$
which implies 
$|\dot{E}(\dot{\mathbf{C}})| =  
\mathcal{O}(\pi(|\dot{\mathbf{C}}|))$.
\end{enumerate}
	
For the time complexity we note 
that we need to
keep all conditions
in $\dot{E}(\mathbf{C})$ as a logically closed
set of literals.
To do this,
we start with $\dot{C}$ in normal form.
With each tuple we associate a graph with
vertices $\{1,\ldots,n\}$ and a 
blue edge $\{i,j\}$
if $(i=j)$ is in the condition of the tuple,,
and a red edge $\{i,j\}$
if $(i\neq j)$ is in the condition.
Next, we compute the transitive closure
of the graph wrt the blue edges.
Then each connected component is an equivalence class,
unless there is some pair $\{i,j\}$ that  has
both a blue and a red edge, in which case
the tuple is inconsistent.
We don't need to consider conditions of the forms
$(i\neq a)$ in star-tuples $\dot{t}$.
This is because then $\dot{t}(i)=*$,
and $\dot{t}(i)$ can still contain a 
an infinite (more precisely cofinite)
set of values
Finally, when we compute the dot-intersection,
for each pair of tuples, we take
the union of their condition graphs, and
recompute the blue transitive closure.
Since $\dot{E}$ is fixed,
this needs to be done only a bounded
number of times.
\end{proof}

\bigskip
\noindent
\textbf{Membership.}
In the membership problems,
we ask if an ordinary tuple $t$ 
belongs to the set specified by
a (naive) star-cylinder,
of by a fixed expression $\dot{E}$
and a (naive) star-cylinder.
In other words, all results
refer to data complexity.

\begin{theorem}\label{membership}
Let $t\in\mathbb{D}^n$
and $\ddot{\mathbf{C}}$ 
a sequence of 
$n$-dimensional naive star-cylinders and diagonals.
The membership problems and their 
respective data complexities are as follows.
\begin{enumerate}
	
\item
$t \stackrel{?}{\in} \bigcap E(\rep(\lsem\ddot{\mathbf{C}}\rsem))$ 
is in polytime for positive $E$.

\item
$t \stackrel{?}{\in}  \bigcap E(\rep(\lsem\ddot{\mathbf{C}}\rsem))$ 
is coNP-complete for $E$ where
negation is allowed in equality atoms only.
\end{enumerate}
\end{theorem}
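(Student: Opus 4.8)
The plan is to reduce every membership question to a domination test against the finite star-cylinder produced by our algebra. For Part~1, suppose $E$ is positive. Then by Theorem~\ref{main2} the set $\bigcap E(\rep(\lsem\ddot{\mathbf{C}}\rsem)) = \bigcap_{\mathbf{D}\in\rep(\lsem\ddot{\mathbf{C}}\rsem)}E(\mathbf{D})$ equals $\lsem\dot{E}(\ddot{\mathbf{C}})_{\downarrow}\rsem$, so $t\in\bigcap E(\rep(\lsem\ddot{\mathbf{C}}\rsem))$ iff $t\preceq\dot{u}$ for some star-tuple $\dot{u}\in\dot{E}(\ddot{\mathbf{C}})_{\downarrow}$. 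Because $E$ is negation-free, Theorem~\ref{CA-complexity} lets us materialise $\dot{E}(\ddot{\mathbf{C}})$, and hence its constant-and-star part $\dot{E}(\ddot{\mathbf{C}})_{\downarrow}$, in time polynomial in $|\ddot{\mathbf{C}}|$; the final domination test inspects only the finitely many columns and conditions of each resulting star-tuple and is plainly polynomial. Chaining these steps gives the claimed polynomial-time procedure.

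For Part~2, negation on equality atoms invalidates the naive-evaluation identity of Theorem~\ref{main2}, so I would instead place the \emph{complement} in NP. Non-membership means that some world $\mathbf{D}\in\rep(\lsem\ddot{\mathbf{C}}\rsem)$ satisfies $t\notin E(\mathbf{D})$, and every such $\mathbf{D}$ is $h(\lsem\ddot{\mathbf{C}}\rsem)$ for a possible world homomorphism $h$. Combining Lemma~\ref{homegacommute} with Corollary~\ref{corocacastar2} rewrites the world as a finite object, namely $E(\mathbf{D}) = E(\lsem h(\ddot{\mathbf{C}})\rsem) = \lsem\dot{E}(h(\ddot{\mathbf{C}}))\rsem$, where $h(\ddot{\mathbf{C}})$ is an existential-null-free extended star-cylinder. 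A small-model argument then bounds the search: since $E$ and $n$ are fixed, it suffices to let $h$ map the polynomially many existential nulls of $\ddot{\mathbf{C}}$ into the active domain of $\ddot{\mathbf{C}}$ and $t$ together with a constant number of fresh values, so $h$ admits a polynomial-size description. The NP procedure for the complement therefore guesses such an $h$, computes $\dot{E}(h(\ddot{\mathbf{C}}))$ in polynomial time by Theorem~\ref{CA-complexity} (which allows equality-negation on extended star-cylinders), and verifies $t\not\preceq\dot{u}$ for every $\dot{u}$ in the result; hence the problem lies in coNP.

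For coNP-hardness it is enough to expose a hard special case, and the natural choice restricts $\ddot{\mathbf{C}}$ to existential nulls only (no universal nulls) with $E$ a conjunctive query carrying inequalities $\neg(x_i\approx x_j)$; certain-answer evaluation of such queries over naive tables is a classical coNP-complete problem, and I would transcribe the standard reduction, encoding the universal quantification over null-valuations as the intersection over $\rep$, into our formalism, checking only that the query is expressible with equality-negation in SCA$_n$. The step I expect to be the real obstacle is the small-model claim in the upper bound: one must argue that collapsing the existential nulls into a polynomially bounded domain neither creates nor destroys membership of the \emph{constant} tuple $t$, even though each world $\mathbf{D}$ is infinite owing to the universal nulls. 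This is precisely where Lemma~\ref{homegacommute} and the star-algebra equivalence carry the weight, since they replace the infinite world $\mathbf{D}$ by the finite, polynomially computable star-cylinder $\dot{E}(h(\ddot{\mathbf{C}}))$ on which the constant-tuple check can be decided directly.
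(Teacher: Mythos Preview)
Your argument tracks the paper's proof closely. Part~1 is identical: invoke Theorem~\ref{main2} to rewrite the certain answer as $\lsem\dot{E}(\ddot{\mathbf{C}})_{\downarrow}\rsem$, materialise that star-cylinder in polynomial time via Theorem~\ref{CA-complexity}, and test domination. For Part~2 the paper's proof is terser---it simply says ``guess $h$ and check $t\notin E(h(\lsem\ddot{\mathbf{C}}\rsem))$ in polytime'' and cites \cite{DBLP:journals/tcs/AbiteboulKG91} for hardness---whereas you spell out the reduction to a finite object through Lemma~\ref{homegacommute} and Corollary~\ref{corocacastar2} and flag the small-model step that the paper leaves implicit. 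That extra care is appropriate; the paper does not justify why the guessed $h$ has polynomial size.

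One imprecision worth fixing: you write that ``every such $\mathbf{D}$ is $h(\lsem\ddot{\mathbf{C}}\rsem)$ for a possible world homomorphism $h$'', but by Definition~\ref{rep} a world $\mathbf{D}\in\rep(\lsem\ddot{\mathbf{C}}\rsem)$ need only \emph{contain} some $h(\lsem\ddot{\mathbf{C}}\rsem)$. This does not break the argument, because negation restricted to equality atoms leaves $E$ monotone in the base relations (the diagonals $d_{ij}$ are database-independent constants), so $t\notin E(\mathbf{D})$ with $h(\lsem\ddot{\mathbf{C}}\rsem)\subseteq\mathbf{D}$ forces $t\notin E(h(\lsem\ddot{\mathbf{C}}\rsem))$ as well. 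State that monotonicity explicitly and the upper bound is clean.
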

\begin{proof}

$\;$

\begin{enumerate}		
\item
By Theorem~\ref{main2}, we have
$\bigcap E(\rep(\lsem\ddot{\mathbf{C}}\rsem))$
$= 
\lsem\dot{E}(\ddot{\mathbf{C}})_{\downarrow}\rsem$,
so to test if
$t \in \bigcap E(\rep(\lsem\ddot{\mathbf{C}}\rsem))$,
we compute 		
$\dot{E}(\ddot{\mathbf{C}})_{\downarrow}$,
and see if there is a star-tuple
$\dot{t} \in \dot{E}(\ddot{\mathbf{C}})_{\downarrow}$,
such that $t\preceq\dot{t}$.
By Theorem \ref{CA-complexity},
$\dot{E}(\ddot{\mathbf{C}})_{\downarrow}$
can be computed in polytime. 		

\item
To check if
$t \not \in \bigcap E(\rep(\lsem\ddot{\mathbf{C}}\rsem))$,
it is sufficient to find a 
homomorphism $h$ such that
$t \not \in h(\lsem\ddot{\mathbf{C}}\rsem)$.
We guess the homomorphism $h$,
and check in polytime if
$t \not \in h(\lsem\ddot{\mathbf{C}}\rsem)$.
Thus $t \not \in \bigcap E(\rep(\lsem\ddot{\mathbf{C}}\rsem))$ 
is in NP, and 
$t \in \bigcap E(\rep(\lsem\ddot{\mathbf{C}}\rsem))$ is in coNP.
The lower bound follows from Theorem 5.2.2 
in \cite{DBLP:journals/tcs/AbiteboulKG91}.
\end{enumerate}
\end{proof}

\bigskip
\noindent
{\bf Containment.}
The containment problems ask 
for containment
of star-cylinders
(naive star-cylinders),
or views over star-cylinders
(naive star-cylinders).
We have the following.
\begin{theorem}\label{containment}
Let
$\dot{\mathbf{C}}$ and $\dot{\mathbf{D}}$
(resp.\ $\ddot{\mathbf{C}}$ and $\ddot{\mathbf{D}}$)
be sequences of $n$-dimensional 
(naive) star-cylinders and diagonals. 
Then
\begin{enumerate}
\item
$
E_1(\lsem\dot{\mathbf{C}}\rsem) 
\;\stackrel{?}{\subseteq}\; 
E_2(\lsem\dot{\mathbf{D}}\rsem)
$ 
is in polytime for $\mbox{CA}_n$ expression $E_1$ and $E_2$.

\item
$
\rep(\lsem\ddot{\mathbf{C}}\rsem)
\;\stackrel{?}{\subseteq}\; 
\rep(\lsem\ddot{\mathbf{D}}\rsem)$
is NP-complete.

\item
$
E_1(\rep(\lsem\ddot{\mathbf{C}}\rsem)) 
\;\stackrel{?}{\subseteq}\; 
E_2(\rep(\lsem\ddot{\mathbf{D}}\rsem)) 
$
is $\Pi^p_2$-complete
for positive $E_1$ and $E_2$.
\end{enumerate}
\end{theorem}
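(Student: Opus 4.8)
The plan is to dispatch the three parts separately, in each case reducing the set-theoretic containment on (possibly infinite) cylinders to a syntactic check on the finite star-cylinders representing them. For part 1, I would first translate $E_1,E_2$ into the equivalent SCA$_n$-expressions and invoke Corollary~\ref{corocacastar2} to rewrite the question as $\lsem\dot{E}_1(\dot{\mathbf{C}})\rsem \subseteq \lsem\dot{E}_2(\dot{\mathbf{D}})\rsem$. By Theorem~\ref{CA-complexity} the two star-cylinders $\dot{C}'=\dot{E}_1(\dot{\mathbf{C}})$ and $\dot{D}'=\dot{E}_2(\dot{\mathbf{D}})$ can be computed in polynomial time and are of polynomial size. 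It then remains to decide $\lsem\dot{C}'\rsem\subseteq\lsem\dot{D}'\rsem$, which by Lemma~\ref{star-cyl-dom} is equivalent to the dominance $\dot{C}'\capdot\dot{A}\preceq\dot{D}'\capdot\dot{A}$ against the common sieve $\dot{A}$. Since $n$ is fixed, the sieve has only polynomially many star-tuples, the two star-intersections stay polynomial, and checking $\preceq$ amounts to finding, for each star-tuple on the left, a dominating star-tuple on the right, which is polynomial.

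For part 2 I would use the classical characterization for naive tables, namely that $\rep(\lsem\ddot{\mathbf{C}}\rsem)\subseteq\rep(\lsem\ddot{\mathbf{D}}\rsem)$ holds iff there is a possible-world homomorphism from $\lsem\ddot{\mathbf{D}}\rsem$ into $\lsem\ddot{\mathbf{C}}\rsem$; the easy direction follows by composing homomorphisms, the converse by applying the hypothesis to $\lsem\ddot{\mathbf{C}}\rsem$ itself via the identity. Although both instances are infinite, by Lemma~\ref{homegacommute} such a homomorphism is the identity on $\mathbb{D}$ and fixes $\ast$, hence is completely determined by its action on the finitely many existential nulls occurring in $\ddot{\mathbf{D}}$. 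For the NP upper bound I would guess this action and verify in polynomial time that every image of a star-tuple of $\ddot{\mathbf{D}}$ is dominated by some star-tuple of $\ddot{\mathbf{C}}$; matching hardness is inherited from the purely existential special case (with $\ast$ absent), which is already NP-complete~\cite{DBLP:journals/tcs/AbiteboulKG91}.

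For part 3, view containment, I would show the complement lies in $\Sigma^p_2$. The containment fails exactly when there is a world $\mathbf{D}_1\in\rep(\lsem\ddot{\mathbf{C}}\rsem)$ whose $E_1$-image is realized by no world of $\ddot{\mathbf{D}}$ under $E_2$, i.e.\ $\exists\mathbf{D}_1\,\forall\mathbf{D}_2\;(E_1(\mathbf{D}_1)\neq E_2(\mathbf{D}_2))$. The outer world $\mathbf{D}_1$ is again pinned down by a polynomially describable homomorphism on the existential nulls of $\ddot{\mathbf{C}}$ (the small-model arguments of \cite{DBLP:journals/tcs/AbiteboulKG91} bound the relevant fresh values), so it can be guessed; the remaining statement $E_1(\mathbf{D}_1)\notin E_2(\rep(\lsem\ddot{\mathbf{D}}\rsem))$ is the negation of an NP test (guess $\mathbf{D}_2$, evaluate the positive $E_2$ symbolically via Theorem~\ref{main2}, and compare the resulting polynomial-size star-cylinders through the domination machinery of Lemma~\ref{star-cyl-dom}), hence is in coNP. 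This gives $\Sigma^p_2$ for the complement and $\Pi^p_2$ for the problem, and the matching lower bound is inherited from the existential-only view-containment problem, already $\Pi^p_2$-complete in~\cite{DBLP:journals/tcs/AbiteboulKG91}.

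The main obstacle I anticipate is precisely the interaction of the two kinds of nulls in parts 2 and 3: because a universal null expands into an infinite set of tuples, the objects $\lsem\ddot{\mathbf{C}}\rsem$, $E_1(\mathbf{D}_1)$, and so on are genuinely infinite, so one must argue that the witnessing homomorphisms remain finitely and polynomially describable through their action on the existential nulls (using Lemma~\ref{homegacommute} to fix $\ast$), and that every equality or membership test between such infinite cylinders can be decided on their polynomial star-cylinder representatives via Lemma~\ref{star-cyl-dom} together with the size bounds of Theorem~\ref{CA-complexity}. Establishing that the universal nulls contribute only polynomial overhead---so that the complexity is driven entirely by the existential nulls and thus coincides with the known existential-only bounds---is the crux of the argument.
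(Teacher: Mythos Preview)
Your plan matches the paper's proof essentially step for step: part~1 via Corollary~\ref{corocacastar2}, Theorem~\ref{CA-complexity}, and the sieve-based dominance test of Lemma~\ref{star-cyl-dom}; part~2 via the homomorphism characterization $\ddot{\mathbf{D}}\rightarrow_h\ddot{\mathbf{C}}$ with $h$ extended to fix~$\ast$; and part~3 via the $\forall h\,\exists g$ reformulation with the inner equality reduced to a polynomial dominance check, hardness from~\cite{DBLP:journals/tcs/AbiteboulKG91} in both parts~2 and~3.

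One small correction in part~3: once you have guessed $g$, the object $g(\ddot{\mathbf{D}})$ is an ordinary (null-free) star-cylinder, so the right tool for ``evaluate $E_2$ symbolically'' is Corollary~\ref{corocacastar} together with Theorem~\ref{CA-complexity}, not Theorem~\ref{main2} (which computes the certain answer $\bigcap$ over all worlds, whereas here you need the value on a single fixed world). This does not affect the argument, only the citation.
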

\begin{proof}
	$\;$
	
\begin{enumerate}
\item 	
By Lemma \ref{star-cyl-dom}, we have
$
\lsem\dot{E}_1(\dot{\mathbf{C}})\rsem 
\subseteq 
\lsem\dot{E}_2(\dot{D})\rsem
$ 
if and only if
$\dot{E}_1(\dot{\mathbf{C}})\capdot\dot{A} 
\,\preceq\, 
\dot{E}_2(\dot{D})\capdot\dot{A}$. 
The latter dominance is true
if and only if
for each star-tuple 
$\dot{t}\in\dot{E}_1(\dot{\mathbf{C}})\capdot\dot{A}$
there is a star-tuple
$\dot{u}\in\dot{E}_2(\dot{\mathbf{D}})\capdot\dot{A}$,
such that $\dot{t}\preceq\dot{u}$.
From Theorem~\ref{CA-complexity}
we know that $\dot{E}_1(\dot{\mathbf{C}})\capdot\dot{A}$
and $\dot{E}_2(\dot{\mathbf{D}})\capdot\dot{A}$
can be computed in polytime.

\item
We first 
extend the domain of possible world
homomorphisms by stipulating that
they are the identity on $\ast$.
Then it is easy to see that
$\rep(\lsem\ddot{\mathbf{C}}\rsem)
\subseteq \rep(\lsem\ddot{\mathbf{D}}\rsem)$ 
if and only if 
there exists 
a possible world homomorphism 
$h$ such that
$\ddot{\mathbf{D}} \rightarrow_h \ddot{\mathbf{C}}$.
This makes the problem NP-complete.
		
\item
The lower bound follows from
Theorem 4.2.2 in
\cite{DBLP:journals/tcs/AbiteboulKG91},
For the upper bound we observe that
$E_1(\rep(\lsem\ddot{\mathbf{C}}\rsem))$  
$\subseteq 
E_2(\rep(\lsem\ddot{\mathbf{D}}\rsem)$ 
iff for every
$\mathbf{C} \in \rep(\lsem\ddot{\mathbf{C}}\rsem)$ 
there exists a 
$\mathbf{D} \in \rep(\lsem\ddot{\mathbf{D}}\rsem)$ such that 
$E_1(\mathbf{C}) = E_2(\mathbf{D})$
iff for every possible world homomorphism $h$ on
$\ddot{\mathbf{C}}$ there exists
a possible world homomorphism $g$ on
$\ddot{\mathbf{D}}$ 
such that 
$E_1(h(\lsem\ddot{\mathbf{C}}\rsem)) 
= 
E_2(g(\lsem\ddot{\mathbf{D}}\rsem))$. 
By Corollary \ref{corocacastar},
this equality holds iff
$\lsem\dot{E}_1(h(\ddot{\mathbf{C}}))\rsem
= 
\lsem\dot{E}_1(g(\ddot{\mathbf{D}}))\rsem$.
By Lemma \ref{star-cyl-dom}, 
the last equality holds iff		
$E_1(h(\lsem\ddot{\mathbf{C}}\rsem))\capdot\dot{A}
\,\preceq\, 
E_2(g(\lsem\ddot{\mathbf{D}}\rsem))\capdot\dot{A}$,
and vice-versa.
By 
Theorem \ref{CA-complexity},
the star-cylinders in the two
dominances $\preceq$
can be computed in polynomial time.
\end{enumerate}
\end{proof}

%
\section{Related and future work}\label{related}

Cylindric Set Algebra gave rise to
a whole subfield of Algebra, called
Cylindric Algebra. For a fairly recent overview,
the reader is referred to \cite{tarskipic}.
Within database theory, a simplified version
of the star-cylinders and a corresponding
Codd-style positive relational algebra with
evaluation rules ``$\ast=\ast$'' and ``$\ast~=~a$''
was proposed by Imielinski and Lipski
in \cite{ilcyl}. Such cylinders correspond to
the structures in {\em diagonal-free} 
Cylindric Set Algebras
\cite{hmt1,hmt2}.
The exact FO-expressive power of
these diagonal-free star-cylinders
is an open question. Nevertheless,
using the techniques of this paper,
it can be shown that naive existential
nulls can be seamlessly incorporated 
in diagonal-free star-cylinders.

In addition  to the above and the work
described in Section \ref{intro},
Imielinski and Lipski 
also showed in 
\cite{ilcyl} 
that the fact that Codd's Relational Algebra
does not have a finite axiomatization,
and the fact that equivalence of expressions in it
is undecidable, 
follow from known results
in Cylindric Algebra. This is of course
true for a host of general results
in Mathematical Logic. 

Yannakakis and Papadimitriou \cite{DBLP:journals/jcss/YannakakisP82}
formulated an algebraic version of
dependency theory
using Codd's Relational Algebra.
Around the same time Cosmadakis~\cite{cos87}
proposed an interpretation of 
dependency theory in terms of
equations over certain types
of expressions in Cylindric Set Algebra,
and described a complete finite 
axiomatization of his system.
It was however later shown by
D\"{u}ntsch, Hodges, and Mikulas 
\cite{hm,dm},
again using known results from Cylindric Algebra,
that Cosmadakis's axiomatization
was incomplete, and that no
finite complete axiomatization exists.  

Interestingly, it turns out that
one of the models for constraint databases
in \cite{kkr} by Kanellakis, Kuper, and Revesz
--- the one where the constraints are equalities
over an infinite domain ---
is equivalent with our star-tables.
Even though \cite{kkr} develops a
bottom-up (recursive) evaluation mechanism
for FO-queries,
the mechanism is goal-oriented and 
contrary to our star-cylinders,
there is no algebra operating
on the constraint databases.
We note however that the construction
of the sieve $\dot{A}$ in Section \ref{caandcastar} 
is inspired by the constraint solving techniques
of \cite{kkr}. It therefore seems
that our star-cylinders and algebra can be made to
handle inequality constraints on 
dense linear orders as well as 
polynomial constraints over real-numbers,
as is done in \cite{kkr}.
We also note that our work is related to the orbit
finite sets, treated in a general computational
framework in~\cite{orbit}.

As noted in Section \ref{intro},
the existential nulls have long been
well understood.
According to \cite{DBLP:conf/pods/DeutschNR08}
the fact that positive queries
(no negation, but allowing universal quantification)
are preserved under onto-homomorphisms
are folklore in the database community.
Using this monotonicity property,
Libkin \cite{libkin-naive} has recently
shown that positive queries can be
evaluated naively on finite existential
databases $I$ under a so called
{\em weak closed world assumption},
where $\rep(I)$ consists of all
complete instances $J$,
such that $h(I)\subseteq J$
and $J$ only involves constants that occur in $I$,
and furthermore $h$ is onto from the finite universe of $I$ to
the finite universe of $J$.
Our Theorem \ref{naivemain} generalizes Libkin's
result to infinite databases.
In this context it is worth noting that
Lyndon's Positivity Theorem
\cite{lyndon}
tells us that a
first order formula is preserved
under onto-homomorphisms on all structures
if and only if it is equivalent to a positive
formula.
It has subsequently been shown
that the only-if direction fails
for finite structures \cite{gur,stolb}.  
Since our star-cylinders represent
neither finite nor unrestricted infinite
structures, it would be interesting to
know whether the only-if direction holds
for infinite structures represented by star-cylinders.
If it does, it would mean that our Theorem \ref{naivemain}
would be optimal, meaning that if $\varphi$
is not equivalent to a positive formula,
then naive evaluation does not work for
$\varphi$ on databases represented by
naive star-cylinders.

Finally we note that Sundarmurthy et al.\ 
\cite{DBLP:conf/icdt/SundarmurthyKLN17}
have generalized the conditional tables
of \cite{DBLP:journals/jacm/ImielinskiL84,grahne91} by replacing the 
labelled nulls with a single null $\mathbf{m}$
that initially represents all possible domain values.
They then add constraints on the occurrences of these
$\mathbf{m}$-values, allowing them to represent
a finite or infinite subset of the domain,
and to equate distinct occurrences of $\mathbf{m}$.
Sundarmurthy et al.\ 
then show that their $\mathbf{m}$-tables are
closed under positive (but not allowing universal
quantification) queries by developing a difference-free
Codd-style relational algebra that
$\mathbf{m}$-tables are closed under.
Merging our approach with theirs could
open up interesting possibilities.

\bibliography{Bibliography}
\end{document}